%% file: main.tex
\documentclass[nonblindrev]{Informs3customR}
\usepackage{appendix}
\usepackage{lmodern}  
\usepackage{bold-extra} 
\usepackage{bm} 
\let\theoremstyle\relax

\usepackage{amsthm}  
\usepackage{amsmath}
\usepackage{color}
\usepackage{booktabs}
\usepackage{graphicx}
\usepackage[table]{xcolor} 
\usepackage{amsfonts}
\usepackage[utf8]{inputenc}
\usepackage[colorlinks=true, linkcolor=blue, citecolor=blue, urlcolor=blue, hypertexnames=false]{hyperref}
\usepackage{mathrsfs}  
\usepackage{natbib}
\bibpunct[, ]{(}{)}{,}{a}{}{,}%
 \def\bibfont{\small}%
 \def\bibsep{\smallskipamount}%
\usepackage[english,activeacute]{babel}
\usepackage{graphicx}
\usepackage{amssymb}
\usepackage{setspace}
\usepackage[capitalise,noabbrev]{cleveref}
\usepackage[shortlabels]{enumitem}
\usepackage{hhline}
\usepackage[table]{xcolor}
\usepackage{booktabs}
\usepackage{dsfont} 

\usepackage{bbm}    

\usepackage[T1]{fontenc}

\usepackage{tikz}
\usetikzlibrary{arrows.meta, positioning}
\usepackage{setspace}
\usepackage{float}
\usepackage{pgfplots}

\pgfplotsset{compat=newest} 

\pgfplotsset{compat=1.18}
\usepgfplotslibrary{groupplots}

\usepackage{mathtools, algorithm, algpseudocode}
\usepackage{enumitem}
\usepackage{bbm}
\usepackage{amsfonts}
\usepackage{comment}
\usepackage{float}

\TheoremsNumberedThrough     
\ECRepeatTheorems
\EquationsNumberedThrough

\makeatletter
\def\footnoterule{\kern-3\p@
  \hrule \@width 6.5in \kern 2.6\p@} 
\makeatother
\renewcommand{\baselinestretch}{1.2}
\definecolor{orange}{rgb}{1,0.5,0}
\definecolor{OliveGreen}{cmyk}{0.64,0,0.95,0.40}
\definecolor{BrickRed}{cmyk}{0,0.89,0.94,0.28}

\definecolor{colA}{RGB}{198,30,30}
\definecolor{colB}{RGB}{20,145,50}
\definecolor{colC}{RGB}{25,80,195}
\definecolor{colD}{RGB}{215,120,0}
\definecolor{colE}{RGB}{125,20,185}

\newcommand{\cliff}[1]{}
\newcommand{\avi}[1]{}
\newcommand{\rene}[1]{}
\newcommand{\yichen}[1]{}
\newcommand{\prem}[1]{}

\newcommand{\tim}[1]{\scalebox{0.54}{$#1$}}

\newcommand{\ti}[1]{\scalebox{0.45}{\mbox{\rm #1}}} 

\newcommand{\tpsi}{\tilde{\psi}}
\newcommand{\eps}{\epsilon}
\newcommand{\var}{\mathbb{V}\mbox{\rm ar}}

\newcommand{\D}{\mbox{\rm d}}

\newcommand{\M}{\ti M}
\newcommand{\I}{\ti I}

\newcommand{\e}{\mathbb{E}}
\newcommand{\BN}{\ti{BN}}
\newcommand{\MP}{\ti{MP}}
\newcommand{\Ad}{{\cal A}}
\newcommand{\iid}{{\ti{IID}}}

\newcommand{\join}{\operatorname{join}}

\theoremstyle{definition}
\newtheorem{result}{Result}

\newcommand{\C}[1]{\mathcal{#1}}

\begin{document}
\RUNAUTHOR{Caldentey, Giloni, Hurvich, Talwai and Zhang}
\RUNTITLE{Binomial Smoothing for Inventory and Information Control}
\TITLE{\bf Binomial Smoothing for Inventory \\ and Information Control in Supply Chains\\}
\ARTICLEAUTHORS{%
\AUTHOR{Ren\'e Caldentey}
\AFF{Booth School of Business, The University of Chicago, Chicago, IL 60637,
\EMAIL{rene.caldentey@chicagobooth.edu}}
\AUTHOR{Avi Giloni}
\AFF{Sy Syms School of Business, Yeshiva University, New York, NY 10033,
\EMAIL{agiloni@yu.edu}}
\AUTHOR{Clifford Hurvich}
\AFF{Stern School of Business, New York University,  New York, NY 10012,
\EMAIL{churvich@stern.nyu.edu}}
\AUTHOR{Prem Talwai}
\AFF{Operations Research Center, Massachusetts Institute of Technology, Cambridge, MA 02142, \EMAIL{talwai@mit.edu}}
\AUTHOR{Yichen Zhang}
\AFF{Daniels School of Business, Purdue University,  West Lafayette, IN 47907,
\EMAIL{zhang@purdue.edu}}
}

\date{}
\ABSTRACT{\vspace{0.2cm}

In many decentralized supply chains, upstream firms do not observe market demand directly and instead infer downstream conditions from the order stream. A retailer's replenishment policy therefore plays a dual role: it governs inventory replenishment and shapes the information available for upstream forecasting. This creates a fundamental trade-off. Smoother orders improve upstream predictability, but delaying the response to demand can increase downstream inventory costs. We study how a retailer should optimally smooth demand in a two-tier supply chain with one retailer and one manufacturer when the manufacturer forecasts future orders from the retailer's order history.

We propose \emph{Binomial Smoothing}, a class of replenishment policies that implements delayed demand response by spreading each unit of demand over a finite horizon using binomial weights. The class is interpretable, easy to calibrate, and analytically tractable. Under weakly stationary Gaussian demand satisfying mild regularity conditions, we show that, for any fixed smoothing horizon, the Binomial policy minimizes the manufacturer's forecast error among all policies with the same degree of smoothing. It remains invertible, so the manufacturer can recover demand history from observed orders. More generally, Binomial Smoothing achieves a constant-factor approximation guarantee relative to an optimal policy.

Our results yield a broader insight: replenishment policies should be designed not merely to reduce order variance, as in the traditional bullwhip measure, but to reduce the \emph{unpredictable} component of orders. Carefully designed smoothing can improve supply-chain performance and partially substitute for information sharing, providing a concrete mechanism for \emph{coordination without collaboration}.\vspace{0.1cm}

}\vspace{0.1cm}

\KEYWORDS{Inventory control; order smoothing;  supply chains; forecasting of weakly stationary processes. \vspace{0.2cm}}
\maketitle

\parindent 0em
\vspace{-2.2cm}
\input{Management_Science/Introduction_MS}

\input{Management_Science/Literature_MS}

\input{Management_Science/Overview}

\input{Management_Science/Model_MS}

\input{Management_Science/Preliminaries_MS}
\input{Management_Science/ApproximationPolicies_MS}

\input{Management_Science/Binomial_MS}

\input{Management_Science/Benchmark-Analysis_MS}

\input{Management_Science/Non-IID-Demand}

\input{Management_Science/Extensions_MS}
\input{Management_Science/Conclusions}



\bibliographystyle{ormsv080}

\begingroup
\small
\renewcommand{\baselinestretch}{0.9}\selectfont
\setlength{\bibsep}{0pt plus 0.2ex}
\bibliography{ref}
\endgroup

\newpage
\begin{appendices}
 \renewcommand{\thesection}{Appendix \Alph{section}}

\input{Management_Science/Appendix_MS}

\input{Management_Science/Appendix_Cost_Criterion_MS}

\input{Management_Science/Appendix_Benchmark_Policies}
\end{appendices}
\end{document}

%% file: Management_Science/Introduction_MS.tex
\section{Introduction}\label{sec:Introduction}

A defining feature of many decentralized supply chains is that upstream firms do not observe market demand directly, but instead infer it from the orders they receive \citep{simchi2008designing,HauLee1997,fisher1997what,croson2003impact}. This makes retailers' replenishment policies central to supply chain performance. Orders are not merely operational decisions; they are also information signals. Consequently, retailers' ordering rules govern both how inventory moves through the supply chain and how effectively upstream firms can infer downstream demand conditions for production and replenishment planning. \smallskip

This dual role underscores the importance of {\em order smoothing} in the efficient operation of decentralized supply chains. By order smoothing, we mean inventory replenishment policies that spread the response to demand variability over time, so that orders adjust more gradually to changes in market demand \citep{disney2003bullwhip,MiyaokaHausman2004}. Although the literature has long recognized that smoother orders can mitigate the bullwhip effect and improve coordination \citep{lee1997distortion,HauLee1997,LST2000,Li-Lee2009,WangDisney16}, smoothing is not unambiguously beneficial: delaying the response to realized demand may improve upstream planning while simultaneously increasing downstream inventory-related costs. The central issue, therefore, is not whether demand should be smoothed, but how to smooth it optimally.\smallskip

Most of the existing literature on order smoothing takes a pragmatic approach by focusing on simple heuristic policies, such as moving averages and exponential smoothing, that dampen short-run order variability and are easy to implement \citep{BGP2004,disney2004variance,dejonckheere2004impact}. Yet these rules are not derived from the underlying trade-off between downstream responsiveness and upstream forecastability. As a result, the literature offers limited theoretical guidance on how smoothing should be structured when upstream firms must infer demand conditions from order histories rather than observe market demand directly \citep{zhao2002impact,gavirneni2006information,li2008confidentiality}. \smallskip

This paper addresses this gap by characterizing how demand should be smoothed so as to balance downstream responsiveness and upstream forecastability, and thereby improve supply chain performance. To this end, we study the optimal design of inventory replenishment policies in a two-tier supply chain with one retailer and one manufacturer. The retailer faces weakly stationary demand and chooses a replenishment policy that maps realized demand into current and future orders, while the manufacturer observes only the order stream and replenishes inventory using optimal forecasts. Our central research question is therefore: \emph{How should a retailer optimally smooth demand when orders are both replenishment decisions and the manufacturer's main source of information?} This formulation turns the retailer’s ordering rule into an information-design problem: it jointly determines downstream inventory risk and upstream forecastability. We formalize this interaction through a cost criterion that trades off the retailer's inventory cost against the manufacturer's forecast-error cost, and we ask which policies best navigate this trade-off.\smallskip

We answer this question in three steps. First, we show that there is an irreducible tension between forecastability and responsiveness: no admissible policy can make the manufacturer's orders arbitrarily easy to forecast without generating unbounded inventory volatility for the retailer. Second, we show that the shape of the smoothing profile matters fundamentally. In particular, commonly used benchmark policies can be arbitrarily far from optimal in worst case, implying that smoothing \emph{per se} is not enough. Third, we identify a new class of policies -- \emph{Binomial Smoothing} -- that spreads each unit of demand over a finite horizon using binomial weights. These policies introduce a deliberate, structured delay that is easy to interpret operationally and closely aligned with the manufacturer's forecasting problem.\smallskip

Our main result is that Binomial Smoothing delivers robust near-optimal performance. Under mild regularity conditions, this class achieves a constant-factor guarantee relative to the optimal policy. Thus, Binomial Smoothing provides an implementable answer to the optimal-smoothing question: managers can restrict attention to a simple, finite-dimensional family of delayed-response rules while retaining a provable performance guarantee. The analysis also yields a broader conceptual takeaway. The relevant upstream operational objective is not the total order variance, as in the traditional variance-based measure of the bullwhip effect \citep{HauLee1997,chen2000quantifying}. What matters upstream is the component of orders that is \emph{unpredictable} from the order history, so a good smoothing rule must be evaluated not only by its effect on raw variability, but also by its effect on forecast error. In this sense, carefully designed smoothing can substitute for part of the value of explicit information sharing, even when downstream firms do not communicate demand data directly \citep{LST2000,Li-Lee2009,simchi2005value}.\smallskip

These findings lead to three concrete managerial implications. First, demand smoothing should be viewed as a strategic design choice, not as a mechanical variance-reduction device. In particular, decision makers should evaluate replenishment policies by how they affect both the predictable and unpredictable components of orders, since the latter directly determines the manufacturer's forecasting burden. Second, delaying replenishment can improve system performance, but only when the pattern of delay is calibrated to the upstream forecasting problem. Managers should therefore choose not only how much smoothing to introduce, but also how the replenishment of each unit of demand is distributed over time. Third, standard textbook-like smoothing heuristics can be misleading guides for practice when evaluated only through order variance: two policies may appear similarly smooth, yet induce very different forecastability and inventory outcomes. Managers should therefore compare smoothing policies using forecast-error-based performance measures, rather than relying only on traditional bullwhip or order-variance metrics. In other words, the structure of the response profile is central to supply-chain performance.\smallskip

Our analysis also contributes methodologically. We build on \cite{caldentey2024information}, which characterizes $\epsilon$-optimal policies for a related i.i.d.\ model. That work provides an important theoretical benchmark, but the resulting policies are difficult to interpret in the time domain and not designed as low-dimensional implementable replenishment rules. We complement it with a class of policies that has a transparent time-domain interpretation, can be tuned with a small number of parameters, and extends naturally beyond i.i.d.\ demand. In doing so, we connect the information-sharing literature and the order-smoothing literature through a common lens: optimal smoothing as the design of an order stream that is simultaneously operationally feasible downstream and informative upstream.\smallskip

The remainder of the paper is organized as follows. \Cref{sec:literature} reviews the related literature and positions our work within it. \cref{sec:overview} provides a high-level non-technical overview of the main results and their managerial interpretation. \cref{sec:Model} introduces the model of our two-tier supply chain system and the associated optimization problem. Sections~\ref{sec:preliminaries}--\ref{sec:extensions} develop the analysis, present the Binomial Smoothing class, compare it with benchmark policies, and provide some policy refinements. Finally, \cref{sec:conclusion} offers concluding remarks and directions for future research.

%% file: Management_Science/Literature_MS.tex
\section{Related Literature}\label{sec:literature}

There is a substantial literature in supply chain management dedicated to investigating the performance of inventory policies and their connection to order smoothing and information sharing. We organize our discussion around three themes that are most closely related to our work and motivate our modeling and analytical approach: (a) the bullwhip effect, information sharing, and order variability; (b) order-smoothing heuristics and control-theoretic replenishment rules; and (c) forecast-error-based information design and invertibility. For a more comprehensive review of this literature, we refer the reader to \cite{stadtler2005supply,GEARY20062,HaTang2017} and the references therein. \smallskip

\paragraph{\sf (a) Bullwhip effect, information sharing, and order variability.}
A central theme in this literature is that locally optimal replenishment rules can amplify variability upstream. The classical studies of \cite{lee1997distortion,LST2000,chen2000quantifying,GGS2005} analyze periodic-review order-up-to policies, motivated by their optimality in important single-stage inventory models and by their practical simplicity; see \cref{footnote-base-stock}. However, these policies can perform poorly from a system-wide perspective when the retailer does not share demand information or when the manufacturer relies on imperfect forecasting rules \citep{Raghunathan}. Subsequent work extends the analysis to richer demand processes: \cite{zhang2004impact} and \cite{GGS2005} study ARMA($p,q$) demand, while \cite{Li-Lee2009} use a generalized Martingale Model of Forecast Evolution (MMFE) to study the joint value of information sharing and order smoothing. Among these papers, \cite{Li-Lee2009} is particularly close to our perspective. They show that sharing projected future orders can allow the supplier to recover much of the value of demand information without knowing the retailer’s demand model or ordering policy. They also emphasize that order variability and order uncertainty are distinct: when projected orders are shared, the supplier’s cost is driven by uncertainty in order revisions, not necessarily by the unconditional variance of orders. We build on this distinction but study a complementary setting in which neither projected orders nor market-demand information are shared with the manufacturer. The manufacturer must forecast future orders from the realized order history alone, so the retailer’s replenishment rule becomes an implicit information-design mechanism.\smallskip

Taken together, this stream of work shows that information sharing and order-variability reduction are complementary, but much of this literature measures the manufacturer's exposure through the variance of the retailer's orders or through suboptimal forecasting rules. The gap for our purposes is therefore a forecast-error-based evaluation of smoothing policies: we formulate supply-chain performance in terms of the retailer's inventory volatility and the manufacturer's one-step-ahead mean square forecast error.\smallskip

\paragraph{\sf (b) Order-smoothing heuristics and control-theoretic replenishment rules.}
A second stream proposes replenishment rules that deliberately smooth orders to reduce upstream variability.\footnote{This notion of order smoothing is distinct from the production-smoothing concept studied by \cite{BrayMendelson2015}, who focus on the manufacturer's deliberate stabilization of its own production rather than on the retailer's smoothing of replenishment orders.} \cite{BGP2004} introduce order-smoothing policies, including simple moving average and exponential smoothing rules, and show that reducing order variability can lower total system cost even when it increases the retailer's inventory cost. Related control-theoretic approaches include APIOBPCS ({\em Automatic Pipeline, Inventory and Order-Based Production Control System}), developed and analyzed by \cite{TOWILL01111982,disney2003effect,dejonckheere2004impact,disney2004variance}. These policies adjust orders using demand forecasts, deviations from target inventory, and deviations from target pipeline inventory, thereby providing a flexible way to stabilize production and inventory dynamics. Their performance, however, depends on tuning parameters such as forecast responsiveness and inventory-adjustment speeds, and inappropriate parameter choices can still generate upstream variability. This suggests the need for a smoothing family that is simple and implementable, but also supported by general performance guarantees rather than by model-specific or numerical tuning; we address this need through binomial smoothing policies with uniform approximation guarantees for general weakly stationary demand.\smallskip

\paragraph{\sf (c) Forecast-error-based information design and invertibility.}
A third stream, closer to our approach, treats the retailer's order process not only as a replenishment decision but also as an information signal observed by the manufacturer. This perspective makes invertibility central. \cite{zhang2004impact} and \cite{hosoda2006variance} characterize upstream order processes under invertibility assumptions, while \cite{GGS2005} emphasize that non-invertibility can arise upstream even when the retailer's market demand is invertible. In such cases, the manufacturer's forecasting problem cannot generally be reduced to the time-domain formulas that would apply under invertibility, and the forecasting approach in \cite{GGS2005} does not necessarily coincide with the best linear forecast. Building on this observation, \cite{caldentey2024information} allow the manufacturer to use the full history of retailer orders and characterize optimal forecasts through Kolmogorov's one-step-ahead prediction formula \citep[Section 5.8]{BrockwellDavis}. They reformulate the retailer's optimization problem in the $z$-transform domain, with decision variables in the Hardy space $\mathbb{H}^2$, and construct $\varepsilon$-optimal inventory policies for the case of i.i.d.\ demand. Relatedly, \cite{talwai2026informationdelay} introduce the notion of {\em information delay}, connect implicit information transmission to the group delay of the retailer's ordering transfer function, and develop tractable invertible ARMA policies that approximate the limiting optimal filter. The remaining challenge is to translate these forecast-error and Hardy-space insights into a transparent replenishment rule that applies beyond i.i.d.\ demand and avoids singular or difficult-to-implement limiting filters. \smallskip

\paragraph{\bf Summary of Contributions and Positioning.}
Our paper contributes to the literature on supply-chain information sharing, order smoothing, and the bullwhip effect by studying how replenishment policies can control not only the physical flow of inventory, but also the information conveyed upstream through orders. Existing work has emphasized the value of explicit information sharing, the propagation of order variability, and the performance of specific smoothing heuristics such as moving-average and exponential-smoothing rules. We instead formulate order smoothing as an operational information-design problem: the retailer chooses how demand is embedded in the order stream to balance downstream inventory responsiveness and upstream forecastability. This perspective shifts attention from whether smoothing reduces order variance to how the shape of the smoothing profile affects the manufacturer's forecast error and the retailer's inventory volatility.

Methodologically, this paper builds on \cite{caldentey2024information}, which developed the Hardy-space and inner--outer factorization framework for studying how inventory-replenishment policies control the information content of orders. That work provides the theoretical foundation for evaluating the value of information sharing and for constructing $\epsilon$-optimal policies under no information sharing. Our contribution is different in both objective and output. Rather than characterizing limiting optimal policies that are largely frequency-domain objects and difficult to interpret operationally, we ask whether this theory can lead to a simple, implementable smoothing rule. We answer this question by identifying Binomial Smoothing, a finite-history class with a direct time-domain interpretation: each unit of demand is spread over a finite horizon according to binomial weights. We show that this class minimizes the manufacturer's forecast error within each finite MA$(q)$ class, remains invertible, extends beyond the i.i.d.\ demand setting, and delivers a uniform constant-factor performance guarantee relative to the optimal admissible policy. Thus, the paper moves the theory from an exact but abstract characterization of optimal information control toward an operational prescription for achieving \emph{coordination without collaboration} through the information embedded in replenishment orders.

%% file: Management_Science/Overview.tex
\section{An Overview of Main Results}\label{sec:overview}

Before turning to the formal analysis, this section summarizes the setting, the key trade-offs, the main results, and the managerial insights that emerge from our study. Rather than presenting the full mathematical model, we provide a high-level discussion of the core findings and their operational interpretation. Precise assumptions, formal statements, and proofs are deferred to the subsequent sections and appendices.\smallskip

The supply chain system we study consists of a single retailer and a single manufacturer, as illustrated in \cref{fig:twotierSC}.
\input{Figures/Fig_SC_System_MS_Overview}
Market demand $D_t$ evolves as a weakly stationary Gaussian process (not necessarily i.i.d.), and the retailer serves this demand from a finished-goods inventory $I_t$, replenished periodically through orders $O_t$ placed with the manufacturer. The manufacturer fulfills those orders from its own inventory, replenished according to a state-dependent base-stock policy $S_t$, and uses expedited production whenever necessary to maintain full service. We consider a standard setting with linear holding and backlogging costs at the retailer, linear holding and expediting costs at the manufacturer, full backlogging, and a constant replenishment one-period lead time.\smallskip

We represent the retailer's replenishment policy as a linear {\em impulse response} function that smooths market demand over time before transmitting it upstream:
\[\tag{Retailer's Demand-Smoothing Policy}
O_t = \sum_{n=0}^\infty \varphi_n\, D_{t-n},
\]
where the coefficient sequence $\varphi=\{\varphi_n\}_{n\ge 0}$ characterizes the policy and satisfies $\sum_{n=0}^\infty \varphi_n=1$. This stability condition ensures that average orders match average demand, $\e[O_t]=\e[D_t]$, and prevents the retailer's inventory from drifting without bound.\smallskip

Central to our analysis is the operational meaning of the impulse-response coefficients. Each coefficient $\varphi_n$ represents the fraction of current demand that is translated into orders $n$ periods later. Thus, $\varphi$ defines a {\em delayed demand-response} rule: it determines how realized demand is converted into future orders and, in doing so, shapes both the physical flow of product and the information conveyed to the manufacturer.\smallskip

This interpretation makes the main trade-off transparent. A policy that places most of its weight on $\varphi_0$ responds quickly to demand and keeps the retailer closely aligned with current market conditions, but it also passes demand fluctuations upstream almost immediately. A policy that spreads its coefficients over a longer horizon smooths the order stream and improves its predictability for the manufacturer, but it also makes the retailer replenish more gradually and exposes its net inventory to larger demand fluctuations. Thus, smoothing has the potential to improve upstream forecastability, but it may also make downstream inventory management more difficult.\smallskip

\Cref{fig:impulseresponse} illustrates this interpretation.
\input{Figures/Fig_ImpulseResponse}
The height of each bar represents the coefficient $\varphi_n$, the fraction of demand observed $n$ periods earlier that is incorporated into the current order. In the example shown, $\varphi_0=0.4$ and $\varphi_1=0.25$, so $40\%$ of demand is ordered immediately, $25\%$ one period later, and the remaining $35\%$ over subsequent periods. More generally, a front-loaded impulse response corresponds to a reactive policy, whereas a more dispersed impulse response corresponds to a smoother and more delayed demand-response rule. This observation points to a central theme of the paper: the shape of the impulse response function is critical. Our proposed Binomial Smoothing policy has a distinctive, and not obvious {\em a priori}, impulse-response shape that captures an optimal delayed response in a sense made precise later; see Figure~\ref{fig:transferfunction} for an illustration.\smallskip

We formulate the supply-chain problem from the perspective of the retailer choosing the replenishment policy $\varphi$. Specifically, we assume that the retailer acts as a Stackelberg leader: it selects $\varphi$ to maximize its expected long-run average payoff while  guaranteeing the manufacturer's participation,
\[ \tag{Retailer's Inventory Problem}
\max_{\varphi}\ \Pi^{\ti R}(\varphi)
\qquad
\text{subject to}
\qquad
\Pi^{\ti M}(\varphi)\geq \bar\pi^{\ti M},
\]
where $\Pi^{\ti R}(\varphi)$ and $\Pi^{\ti M}(\varphi)$ denote the retailer's and manufacturer's expected long-run average payoffs, respectively, and $\bar\pi^{\ti M}$ is the manufacturer's outside option or participation requirement. This formulation captures the main economic tension of the model: by changing its replenishment policy, the retailer affects not only its own inventory performance, but also the predictability and cost of the order stream faced by the manufacturer.\smallskip

The effect of $\varphi$ on supply-chain performance and the retailer's inventory management problem is governed by two quantities. On the retailer side, long-run average inventory cost is proportional to the square root of the stationary variance of net inventory,
\[\tag{Retailer's Inventory Volatility}
\sigma_{\I}^2(\varphi)=\lim_{t\to\infty}\var[I_t].
\]
On the manufacturer side, long-run average replenishment and holding cost is proportional to the square root of the one-step-ahead mean squared forecast error (MSFE) of the order process,
\[\tag{Manufacturer's MSFE}
\sigma_{\ti M}^2(\varphi)=\var\bigl(O_{t+1}\mid{\cal F}^{\ti M}_t\bigr),
\]
where ${\cal F}^{\ti M}_t:=\sigma\bigl(O_\tau:\tau\le t\bigr)$ denotes the manufacturer's information at time $t$.\footnote{The fact that $\sigma_{\ti M}^2(\varphi)$ does not depend on $t$ follows from the stationarity of the retailer's order process.} The retailer's policy therefore affects supply-chain performance through two linked channels: downstream inventory volatility and upstream order forecastability.\smallskip

As we will see, solving the retailer's inventory problem boils down to understanding the trade-off between $\sigma_{\I}(\varphi)$ and $\sigma_{\ti M}(\varphi)$. A key contribution of the paper is to show that this trade-off can be analyzed through the shape of the impulse-response function. In particular, we clarify that the issue is not simply how much to smooth, but how to smooth: different smoothing profiles may generate similar reductions in order variability while having very different implications for upstream forecastability and downstream inventory volatility.\smallskip

A convenient reformulation of the retailer's inventory problem is obtained by expressing it as an equivalent cost-minimization problem in which the trade-off between $\sigma_{\I}(\varphi)$ and $\sigma_{\ti M}(\varphi)$ enters through the linear cost functional
\[\tag{Cost Criterion}
{\cal C}(\varphi;\kappa)=\kappa\,\sigma_{\I}(\varphi)+\sigma_{\ti M}(\varphi),
\]
where $\kappa>0$ measures the relative weight placed on the retailer's inventory cost. It is worth noting that, beyond yielding an equivalent formulation of the Stackelberg retailer's inventory problem, this cost-minimization criterion also admits a centralized interpretation: for an appropriate value of $\kappa$, it corresponds to a planner minimizing aggregate long-run inventory-related costs in the supply chain. More generally, varying $\kappa$ allows the analysis to interpolate between objectives that place greater emphasis on retailer inventory volatility, manufacturer forecastability, or overall system performance.\smallskip

Our first result identifies a fundamental limit on smoothing.\smallskip

\begin{result}[Trade-off between forecastability and inventory volatility]
No admissible replenishment policy can make the manufacturer's order process arbitrarily easy to forecast without generating unbounded inventory volatility for the retailer. In particular, along any sequence of admissible policies for which $\sigma_{\ti M}(\varphi)\downarrow 0$, we must have $\sigma_{\I}(\varphi)\uparrow\infty$.
\end{result}\smallskip

Practically, this result implies that demand smoothing is inherently costly for the retailer: a supply chain cannot make upstream orders arbitrarily predictable without eventually generating prohibitive downstream inventory volatility. The design question is therefore not whether smoothing is desirable, but what form of smoothing best balances these competing forces.\smallskip

Our main analytical result identifies a simple family of policies, which we call \emph{Binomial Smoothing}, that provides an effective answer to this design problem. Under Binomial Smoothing, each unit of demand is distributed over a finite horizon of degree $q \in \mathbb{N}_0=\{0,1,2,\dots\}$ according to binomial weights:
\[\tag{Binomial Policy}
\varphi^{\BN}_n=\frac{1}{2^q}\binom{q}{n}, \qquad n=0,\dots,q.
\]
The parameter $q$ controls the degree of delayed demand response. The class is operationally interpretable, analytically tractable, easy to calibrate, and has a defining optimality property: among all admissible policies whose impulse response has degree $q$, the Binomial Smoothing policy minimizes the manufacturer's one-step-ahead MSFE.\smallskip

\begin{result}[Optimality within the class of degree-$q$ policies]
Fix $q\in\mathbb{N}_0$ and consider the class of admissible policies whose impulse response has degree $q$. Among all such policies, the Binomial Smoothing policy of degree $q$ minimizes the manufacturer's root mean squared forecast error, $\sigma_{\ti M}(\varphi)$.
\end{result}\smallskip

This result gives a sharp interpretation of the binomial weights. The degree $q$ determines how far the response to demand is spread into the future, while the binomial profile determines how that delayed response should be allocated across periods to make the order stream as forecastable as possible. Thus, once the smoothing horizon is fixed, Binomial Smoothing is the most favorable policy for upstream forecasting.\smallskip

A second defining feature is {\it invertibility}. In our setting, invertibility means that the history of the retailer's orders contains enough information for the manufacturer to recover the underlying history of market demand, even though demand is not directly observed upstream. 
Thus, Binomial Smoothing does not reduce forecast error by discarding information. It preserves the informational content of demand while reorganizing when that information appears in the order stream. In this sense, structured smoothing can partially replicate one specific benefit of explicit information sharing: making the manufacturer's future order stream more predictable, without requiring the retailer to disclose proprietary demand data directly. \smallskip

\begin{result}[Invertibility of Binomial Smoothing]
Every Binomial Smoothing policy is invertible. Equivalently, by observing the history of the retailer's orders, the manufacturer can recover the unobserved history of market demand.
\end{result}\smallskip

Taken together, these last two properties make Binomial Smoothing especially appealing. For a fixed degree of smoothing, the binomial profile is the best possible choice for reducing the manufacturer's forecast error, and this gain does not come from destroying demand information. Most importantly, the family also delivers a uniform approximation guarantee relative to the optimal policy.\smallskip

\begin{result}[Uniform Performance Guarantee]
For any choice of $\kappa$, one can select a Binomial Smoothing policy whose cost is within the same fixed factor of the minimum cost achievable by any admissible replenishment policy.
\end{result}\smallskip

In this sense, Binomial Smoothing provides a uniform near-optimal guarantee across the full range of objectives considered in our model. Managers can therefore rely on a simple and easy-to-calibrate family of replenishment rules to achieve robust supply-chain performance without designing highly customized or operationally complex smoothing policies.\smallskip

\Cref{fig:transferfunction} illustrates Binomial Smoothing and compares it with three familiar benchmarks: the Myopic policy, Simple Moving Average, and Exponential Smoothing.
\input{Figures/Fig_Policies_Comparison}
These policies span a natural range from full responsiveness to substantial smoothing: the Myopic policy places all weight on current demand, Simple Moving Average distributes demand uniformly over a fixed number of periods, and Exponential Smoothing assigns geometrically decaying weights that front-load the response while still delaying part of it. In the example, the Binomial and Simple Moving Average policies have degree five, and the Exponential Smoothing policy has smoothing parameter $0.5$.\smallskip

Each row in \Cref{fig:transferfunction} corresponds to a different policy. The left column shows the same demand sequence $\{D_t\}$, the middle column shows the impulse-response coefficients $\varphi_n$, and the right column shows the resulting order sequence, $O_t=\sum_n \varphi_n D_{t-n}$. Each order bar is decomposed into colored sub-bars of height $\varphi_n D_{t-n}$, with colors matching the demand realizations in the left panel. For each order bar, only the top segment in the new color, corresponding to same-period demand, represents information not previously available to the manufacturer; this is the manufacturer's forecast error.\smallskip

The figure highlights the central trade-off. The Myopic policy is the most responsive, but it generates the least forecastable order stream. The Simple Moving Average produces the smoothest orders in the variance sense, but it introduces substantial delay. Exponential Smoothing responds more quickly, but its geometric decay is not especially well suited to upstream forecasting. Binomial Smoothing places more weight on intermediate lags and provides an effective compromise: it greatly improves manufacturer forecastability while keeping the increase in retailer inventory volatility comparatively moderate.\smallskip

\begin{result}[Binomial Smoothing versus benchmark policies]
The best Binomial Smoothing policy attains a supply-chain cost within a fixed constant factor of the optimum. By contrast, neither the Myopic policy nor the Simple Moving Average and Exponential Smoothing families admit an analogous uniform guarantee. Equivalently, there is no finite constant that bounds their worst-case relative performance.
\end{result}\smallskip

To make the comparison concrete, \Cref{tab:policycomparison_iid} reports the corresponding performance metrics for i.i.d.\ demand. Both Binomial Smoothing and Simple Moving Average generate relatively high retailer inventory volatility, but Binomial Smoothing delivers by far the largest improvement in manufacturer forecastability.\smallskip

\begin{table}[h!]
\centering
\setlength{\arraycolsep}{10pt}
\begin{tabular}{lccc}
\toprule
Policy & $\var(O_t)/\var(D_t)$ & ${\rm MSFE}(O_t)/\var(D_t)$ & $\var(I_t)/\var(D_t)$ \\
\midrule
Myopic & 1.0000 & 1.0000 & 1.0000 \\
Simple MA & 0.1667 & 0.0278 & 2.5278 \\
Exponential Smoothing & 0.3333 & 0.2500 & 1.3333 \\
Binomial & 0.2461 & 0.0010 & 2.8848 \\
\bottomrule
\end{tabular}
\caption{Normalized metrics for the four benchmark replenishment policies in the special case of i.i.d.\ demand.}
\label{tab:policycomparison_iid}
\end{table}

The table also connects our results to the classical bullwhip literature. The first column, $\var(O_t)/\var(D_t)$, is the traditional bullwhip measure, which compares order variance with demand variance. By this metric, Simple MA appears best, reducing the variance ratio to $0.1667$ versus $0.2461$ under Binomial Smoothing. The second column, ${\rm MSFE}(O_t)/\var(D_t)$, gives a different ranking. It measures the unpredictable component of orders relative to demand variability and therefore captures a {\em forecast-adjusted} notion of the bullwhip effect: from the manufacturer's standpoint, the relevant burden is not total order variance {\em per se}, but the part that cannot be anticipated from past orders.\footnote{The distinction between variability and uncertainty (i.e., lack of predictability) in market demand and orders, in the context of the bullwhip effect, has been repeatedly recognized and discussed in the supply chain management literature (see \citealp{Aviv2001, hosoda2006variance, Li-Lee2009, chenlee2012bullwhip, BrayMendelson2015}). The specific forecast-adjusted bullwhip measure, defined as the MSFE of orders divided by the MSFE of demand, appears in \cite{CaldenteyGiloniHurvich2020OrderSmoothing}, who show that it can differ sharply from the traditional variance-based measure and that commonly used replenishment policies can be ranked quite differently depending on the measure used; in some cases, the same policy can generate bullwhip according to one measure while dampening it according to the other. Independently, \cite{Saoud25} propose the same metric, which they call the {\em ratio of forecast uncertainty}, and provide a comprehensive discussion of the literature on the bullwhip effect and its measurement.} Under this measure, Binomial Smoothing reduces normalized forecast error to $0.0010$, compared with $0.0278$ under Simple MA. Thus, while Simple MA is preferred under the traditional variance-based metric, Binomial Smoothing dampens the bullwhip effect more strongly under a forecast-adjusted measure that better reflects the manufacturer's operational burden.\smallskip


\begin{result}[Forecast-Adjusted Bullwhip]
Among all admissible policies with a finite smoothing horizon of degree $q\in\mathbb{N}_0$, the Binomial Smoothing policy minimizes the forecast-adjusted bullwhip measure.
\end{result}\smallskip

In summary, the retailer's policy should be viewed not only as an inventory replenishment rule, but also as a mechanism that jointly shapes downstream inventory risk and the predictability of the manufacturer's order stream. From this perspective, the central operational problem is to balance demand responsiveness against upstream forecastability. The discussion above points to Binomial Smoothing as a simple and effective way to navigate this trade-off, especially along dimensions that are not fully captured by traditional smoothing policies or variance-based bullwhip measures. The following sections make this message precise by formalizing the model, connecting the retailer's inventory problem to the linear cost criterion, and developing the analytical results that support it.

%% file: Figures/Fig_SC_System_MS_Overview.tex
\begin{figure}[h]
\begin{center}
\begin{tikzpicture}[scale=1.1]
\draw [rounded corners] (1.95,-0.7) rectangle (4.95,0.7);
\node at (3.45,0.45)    {\bf Manufacturer};
\node at (3.45,0.05)   {\footnotesize (Base-Stock Policy)};
\node at (3.45,-0.43)  {\small $S_t=m_t+\sigma_{\ti M}\,\zeta^{\ti M}$};
\draw[<-,>=stealth,line width=1pt] (5.0,0.5) -- (8.95,0.5);
\node at (6.975,0.3)  {\footnotesize Retailer's Orders};
\node at (6.975,0.73) {\small $\displaystyle O_t$};
\draw[->,>=stealth,line width=1pt] (5.0,-0.5) -- (8.95,-0.5);
\node at (6.975,-0.27) {\small $O_{t-1}$};
\node at (6.975,-0.79) {\footnotesize Manufacturer's Fulfillment};
\node at (6.975,-1.1)  {\tiny (100\% Service Level)};
\draw [rounded corners] (9.0,-0.7) rectangle (12.5,0.7);
\node at (10.8,0.45)    {\bf Retailer};
\node at (10.8,0.05)   {\footnotesize (Net Inventory)};
\node at (10.8,-0.43)  {\small $I_t=I_{t-1}+O_{t-1}-D_t$};
\draw[<-,>=stealth,line width=1pt] (12.51,0) -- (14.15,0);
\node at (13.5, 0.3)  {\small $D_t$};
\node at (13.5,-0.2)  {\footnotesize Demand};
\draw (15.1,0) ellipse (0.9cm and 0.7cm);
\node at (15.1,0) {\bf Market};

\def\shifty{-1.6}
\def\shiftx{0.8}
\def\shiftD{-0.5}


\node[anchor=west] at (\shiftx,\shifty)
    {\footnotesize \sf Manufacturer's mean forecast $m_t=\e[O_{t+1}|{\cal F}^{\ti M}_t]$ and MSFE $\sigma_{\ti M}^2=\var[O_{t+1}|{\cal F}^{\ti M}_t]$, where ${\cal F}^{\ti M}_t=\sigma\big(O_\tau : \tau \le t\big)$.};
\end{tikzpicture}
\end{center}
\caption{A two-tier supply chain system.}
\label{fig:twotierSC}
\end{figure}

%% file: Figures/Fig_ImpulseResponse.tex
\begin{figure}[h!]
\begin{center}
\begin{tikzpicture}
  \begin{axis}[
    ybar,
    /pgf/bar width=12pt,
    ymin=0,
    xlabel={Delay $n$},
    ylabel={\large $\varphi_n$},
    ylabel style={rotate=-90}, 
    xtick=data,
    enlarge x limits=0.05,
    enlarge y limits={upper, value=0.1},
    grid=major,
    width=12cm,
    height=6cm,
    ticklabel style={font=\small},
    label style={font=\small},
    title={\bf \sf Impulse Response Function},
    title style={yshift=-1.5ex}
  ]
  \addplot+[draw=black] 
    coordinates {
      (0, 0.4)
      (1, 0.25)
      (2, 0.15)
      (3, 0.1)
      (4, 0.05)
      (5, 0.03)
      (6, 0.015)
      (7, 0.01)
      (8, 0.005)
      (9, 0.002)
      (10, 0.001)
    };
  \end{axis}
\end{tikzpicture}
\caption{An illustration of the retailer's impulse response function.}
 \label{fig:impulseresponse}
\end{center}
\end{figure}

%% file: Figures/Fig_Policies_Comparison.tex
\begin{figure}[h]
\begin{center}
\begin{tikzpicture}[font=\small]
 
\newcommand{\demandBars}{%
  \addplot[fill=black!80, draw=black]  coordinates{(-2,105.31452)(-1,137.91103)};
  \addplot[fill=colA!80, draw=colA!90] coordinates{(0,57.31463)};
  \addplot[fill=colB!80, draw=colB!90] coordinates{(1,50.41441)};
  \addplot[fill=colC!80, draw=colC!90] coordinates{(2,82.54504)};
  \addplot[fill=colD!90, draw=colD]    coordinates{(3,180.44810)};
  \addplot[fill=colE!80, draw=colE!90] coordinates{(4,111.32200)};
  \draw[densely dashed,black!55,thick](axis cs:-2.4,100)--(axis cs:4.4,100);
  \node[font=\scriptsize,black!55,right] at (axis cs:4.4,100){$\mu$};
}
 
\pgfplotsset{demandStyle/.style={
  ybar, every axis plot post/.append style={bar shift=0pt},
  /pgf/bar width=9pt, ymin=0, ymax=205,
  xtick={-2,-1,0,1,2,3,4},
  xticklabels={$t{-}2$,$t{-}1$,$t$,$t{+}1$,$t{+}2$,$t{+}3$,$t{+}4$},
  enlarge x limits=0.09,
  grid=major, grid style={dotted,gray!40},
  width=5.8cm, height=4.8cm,
  ticklabel style={font=\scriptsize}, label style={font=\small},
  ytick={0,50,100,150,200}
}}
 
\pgfplotsset{orderStyle/.style={
  ybar stacked, /pgf/bar width=9pt, ymin=0,
  xtick={0,1,2,3,4},
  xticklabels={$t$,$t{+}1$,$t{+}2$,$t{+}3$,$t{+}4$},
  enlarge x limits=0.15,
  grid=major, grid style={dotted,gray!40},
  width=5.8cm, height=4.8cm,
  ticklabel style={font=\scriptsize}, label style={font=\small}
}}
 
 
\begin{axis}[name=axD1, demandStyle,
  title={\bfseries\sffamily Demand $\{D_t\}$}, title style={yshift=-1.5ex}]
  \demandBars
\end{axis}
 
\begin{axis}[name=axT1, at={(axD1.east)}, anchor=west, xshift=0.9cm,
  ybar, /pgf/bar width=9pt, ymin=0, ymax=0.42,
  xtick={0,1,2,3,4,5}, enlarge x limits=0.12,
  ytick={0,0.1,0.2,0.3}, yticklabels={$0$,$0.1$,$0.2$,$0.3$},
  grid=major, grid style={dotted,gray!40},
  width=5.8cm, height=4.8cm,
  title={\bfseries\sffamily Impulse Response $\{\varphi_n\}$},
  title style={yshift=-1.5ex},
  ticklabel style={font=\scriptsize}]
  \addplot[fill=gray!35,draw=black] coordinates{
    (0,0.03125)(1,0.15625)(2,0.31250)(3,0.31250)(4,0.15625)(5,0.03125)};
  \node[font=\small\bfseries\sffamily, anchor=north]
    at (axis description cs:0.5,0.97) {Binomial};
\end{axis}
 
\begin{axis}[name=axO1, at={(axT1.east)}, anchor=west, xshift=0.9cm,
  orderStyle, ymax=135, ytick={0,25,50,75,100,125},
  title={\bfseries\sffamily Orders $\{O_t\}$}, title style={yshift=-1.5ex}]
  \addplot[fill=black!80,draw=black] coordinates{
    (0,119.05922)(1,99.23073)(2,63.21180)(3,24.83968)(4,4.30972)};
  \addplot[fill=colA!80,draw=colA!90] coordinates{
    (0,1.79108)(1,8.95541)(2,17.91082)(3,17.91082)(4,8.95541)};
  \addplot[fill=colB!80,draw=colB!90] coordinates{
    (0,0)(1,1.57545)(2,7.87725)(3,15.75450)(4,15.75450)};
  \addplot[fill=colC!80,draw=colC!90] coordinates{
    (0,0)(1,0)(2,2.57953)(3,12.89766)(4,25.79532)};
  \addplot[fill=colD!90,draw=colD] coordinates{
    (0,0)(1,0)(2,0)(3,5.63900)(4,28.19502)};
  \addplot[fill=colE!80,draw=colE!90] coordinates{
    (0,0)(1,0)(2,0)(3,0)(4,3.47881)};
  \draw[densely dashed,black!55,thick](axis cs:-0.5,100)--(axis cs:4.5,100);
  \node[font=\scriptsize,black!55,right] at (axis cs:4.5,100){$\mu$};
\end{axis}
 
 
\begin{axis}[name=axD2, demandStyle,
  at={(axD1.south west)}, anchor=north west, yshift=-0.7cm]
  \demandBars
\end{axis}
 
\begin{axis}[name=axT2, at={(axD2.east)}, anchor=west, xshift=0.9cm,
  ybar, /pgf/bar width=9pt, ymin=0, ymax=1.25,
  xtick={0,1,2,3,4,5}, enlarge x limits=0.12,
  ytick={0,0.25,0.5,0.75,1.0},
  grid=major, grid style={dotted,gray!40},
  width=5.8cm, height=4.8cm,
  ticklabel style={font=\scriptsize}]
  \addplot[fill=gray!35,draw=black] coordinates{
    (0,1.0)(1,0)(2,0)(3,0)(4,0)(5,0)};
  \node[font=\small\bfseries\sffamily, anchor=north]
    at (axis description cs:0.5,0.97) {Myopic};
\end{axis}
 
\begin{axis}[name=axO2, at={(axT2.east)}, anchor=west, xshift=0.9cm,
  orderStyle, ymax=205, ytick={0,50,100,150,200}]
  \addplot[fill=colA!80,draw=colA!90] coordinates{
    (0,57.31463)(1,0)(2,0)(3,0)(4,0)};
  \addplot[fill=colB!80,draw=colB!90] coordinates{
    (0,0)(1,50.41441)(2,0)(3,0)(4,0)};
  \addplot[fill=colC!80,draw=colC!90] coordinates{
    (0,0)(1,0)(2,82.54504)(3,0)(4,0)};
  \addplot[fill=colD!90,draw=colD] coordinates{
    (0,0)(1,0)(2,0)(3,180.44810)(4,0)};
  \addplot[fill=colE!80,draw=colE!90] coordinates{
    (0,0)(1,0)(2,0)(3,0)(4,111.32200)};
  \draw[densely dashed,black!55,thick](axis cs:-0.5,100)--(axis cs:4.5,100);
  \node[font=\scriptsize,black!55,right] at (axis cs:4.5,100){$\mu$};
\end{axis}
 
 
\begin{axis}[name=axD3, demandStyle,
  at={(axD2.south west)}, anchor=north west, yshift=-0.7cm]
  \demandBars
\end{axis}
 
\begin{axis}[name=axT3, at={(axD3.east)}, anchor=west, xshift=0.9cm,
  ybar, /pgf/bar width=9pt, ymin=0, ymax=0.27,
  xtick={0,1,2,3,4,5}, enlarge x limits=0.12,
  ytick={0,0.05,0.10,0.15},
  scaled y ticks=false,
  yticklabel style={/pgf/number format/fixed, /pgf/number format/precision=2},
  grid=major, grid style={dotted,gray!40},
  width=5.8cm, height=4.8cm,
  ticklabel style={font=\scriptsize}]
  \addplot[fill=gray!35,draw=black] coordinates{
    (0,0.16667)(1,0.16667)(2,0.16667)(3,0.16667)(4,0.16667)(5,0.16667)};
  \node[font=\small\bfseries\sffamily, anchor=north]
    at (axis description cs:0.5,0.97) {Simple MA};
\end{axis}
 
\begin{axis}[name=axO3, at={(axT3.east)}, anchor=west, xshift=0.9cm,
  orderStyle, ymax=130, ytick={0,25,50,75,100,125}]
  \addplot[fill=black!80,draw=black] coordinates{
    (0,104.32469)(1,86.32924)(2,60.05334)(3,40.53759)(4,22.98517)};
  \addplot[fill=colA!80,draw=colA!90] coordinates{
    (0,9.55244)(1,9.55244)(2,9.55244)(3,9.55244)(4,9.55244)};
  \addplot[fill=colB!80,draw=colB!90] coordinates{
    (0,0)(1,8.40240)(2,8.40240)(3,8.40240)(4,8.40240)};
  \addplot[fill=colC!80,draw=colC!90] coordinates{
    (0,0)(1,0)(2,13.75751)(3,13.75751)(4,13.75751)};
  \addplot[fill=colD!90,draw=colD] coordinates{
    (0,0)(1,0)(2,0)(3,30.07468)(4,30.07468)};
  \addplot[fill=colE!80,draw=colE!90] coordinates{
    (0,0)(1,0)(2,0)(3,0)(4,18.55367)};
  \draw[densely dashed,black!55,thick](axis cs:-0.5,100)--(axis cs:4.5,100);
  \node[font=\scriptsize,black!55,right] at (axis cs:4.5,100){$\mu$};
\end{axis}
 
 
\begin{axis}[name=axD4, demandStyle,
  at={(axD3.south west)}, anchor=north west, yshift=-0.7cm]
  \demandBars
\end{axis}
 
\begin{axis}[name=axT4, at={(axD4.east)}, anchor=west, xshift=0.9cm,
  ybar, /pgf/bar width=9pt, ymin=0, ymax=0.65,
  xtick={0,1,2,3,4,5}, enlarge x limits=0.12,
  ytick={0,0.1,0.2,0.3,0.4,0.5},
  grid=major, grid style={dotted,gray!40},
  width=5.8cm, height=4.8cm,
  ticklabel style={font=\scriptsize}]
  \addplot[fill=gray!35,draw=black] coordinates{
    (0,0.5)(1,0.25)(2,0.125)(3,0.0625)(4,0.03125)(5,0.03125)};
  \node[font=\small\bfseries\sffamily, anchor=north]
    at (axis description cs:0.5,0.97) {Exponential Smoothing};
\end{axis}
 
\begin{axis}[name=axO4, at={(axT4.east)}, anchor=west, xshift=0.9cm,
  orderStyle, ymax=148, ytick={0,25,50,75,100,125}]
  \addplot[fill=black!80,draw=black] coordinates{
    (0,63.26136)(1,32.40697)(2,15.56972)(3,7.60080)(4,4.30972)};
  \addplot[fill=colA!80,draw=colA!90] coordinates{
    (0,28.65731)(1,14.32866)(2,7.16433)(3,3.58216)(4,1.79108)};
  \addplot[fill=colB!80,draw=colB!90] coordinates{
    (0,0)(1,25.20721)(2,12.60360)(3,6.30180)(4,3.15090)};
  \addplot[fill=colC!80,draw=colC!90] coordinates{
    (0,0)(1,0)(2,41.27252)(3,20.63626)(4,10.31813)};
  \addplot[fill=colD!90,draw=colD] coordinates{
    (0,0)(1,0)(2,0)(3,90.22405)(4,45.11202)};
  \addplot[fill=colE!80,draw=colE!90] coordinates{
    (0,0)(1,0)(2,0)(3,0)(4,55.66100)};
  \draw[densely dashed,black!55,thick](axis cs:-0.5,100)--(axis cs:4.5,100);
  \node[font=\scriptsize,black!55,right] at (axis cs:4.5,100){$\mu$};
\end{axis}
 
\end{tikzpicture}
\caption{Demand $\{D_t\}$, impulse response $\{\varphi_n\}$, and resulting orders $\{O_t\}$ under four order-smoothing policies. Each colored segment of an order bar corresponds to the demand period of the same color.}
\label{fig:transferfunction}
\end{center}
\end{figure}

%% file: Management_Science/Model_MS.tex
\section{A Two-Tier Supply Chain Model}\label{sec:Model}

In this section, we provide a precise mathematical description of the supply chain system, as depicted in \cref{fig:twotierSC}, the associated optimization problem introduced in the previous section, and the formal notation used throughout the remainder of the paper.\smallskip

Our modeling builds on a broad stream of prior research on two-echelon and multiechelon supply chains, including \cite{Gavirneni1999}, \cite{LST2000}, \cite{Aviv2001}, \cite{ChenSong2001}, \cite{Li-Lee2009}, \cite{GHS}, and \cite{caldentey2024information}. In line with this literature, we adopt the following operating characteristics: (i) customer demand at the retailer is fully backlogged, so all unfilled demand is eventually fulfilled; (ii) the manufacturer uses a state-dependent order-up-to policy to replenish its inventory; and (iii) the manufacturer offers a 100\% service level to the retailer by relying on expedited production, at an additional cost, whenever on-hand inventory is insufficient.\smallskip

These assumptions reflect standard modeling choices in the multi-echelon inventory literature. Full backlogging is commonly used to capture settings in which demand is not permanently lost but rather delayed, allowing one to focus on the impact of operational policies on inventory and order variability rather than on demand substitution or lost sales. Base-stock policies arise naturally in systems with linear (or convex) costs and stochastic demand, and a large body of work has established their optimality or near-optimality in single- and multi-echelon settings. Finally, the possibility of expedited production (or emergency orders) at a premium cost is a standard device to enforce a high service level while preserving tractability, and is often interpreted as using overtime, premium freight, or external capacity to buffer against stockouts at the retailer. Taken together, these conditions yield a modeling framework that captures a broad and practically relevant class of supply-chain operating environments and ordering policies while preserving analytical tractability.\smallskip 

In what follows, we outline the key components of the model and highlight its central features. For expositional purposes, some derivations are relegated to \ref{App:CostCriterion}; for a more detailed treatment and discussion, we refer the reader to the above-cited literature and the references therein.\medskip

$\diamond$ {\bf Retailer's Inventory Replenishment Policy:} The retailer serves a random market demand process $\{D_t\}$, which we model as a weakly stationary Gaussian process with an MA($\infty$) representation:
\begin{equation}\label{eq:def_demand}
D_t \;=\; d \;+\; \sum_{n=0}^\infty \psi_n \,\eps_{t-n},
\end{equation}
where $d>0$ is the mean per-period market demand, $\psi=\{\psi_n\}_{n\ge 0}$ is an absolutely square-summable sequence of coefficients, and $\eps=\{\eps_t : t\in\mathbb{Z}\}$ is a Gaussian white noise sequence with variance $\var(\eps_t)=\sigma^2_{\eps}$.
In what follows, we set $\sigma_{\eps}=1$. This normalization is without loss of generality because any innovation variance can be absorbed into a rescaling of the MA coefficients $\psi_n$, so only their relative magnitudes matter. \smallskip

It is worth noting that the representation of the demand process in \eqref{eq:def_demand} is fairly general, as any weakly stationary ARMA process admits such a representation.  Moreover, under stationarity, the demand process ${D_t}$ is invertible with respect to the demand-shock sequence $\{\eps_t\}$ (see \cref{dfn:invertible} for a formal definition). This invertibility requirement entails no loss of generality, since \eqref{eq:def_demand} can be interpreted as the Wold representation of a purely non-deterministic Gaussian process ${D_t}$ (see \citealp[\S 5.7]{BrockwellDavis}). In other words, because demand is exogenous to the retailer, which only observes $\{D_t\}$, it can construct a shock sequence $\{\eps_t\}$ that spans the same linear past as $\{D_t\}$.\smallskip

The retailer fulfills the market demand from a stock of finished-goods inventory $I_t$, which is replenished periodically by placing orders $\{O_t\}$ to the manufacturer, with any excess demand being fully backlogged. Under the manufacturer's full-service guarantee, the retailer's net inventory $I_t$ at the end of period $t$ evolves according to the material flow dynamics:
\begin{equation}\label{eq:invR}
I_t = I_{t-1} + O_{t-1} - D_t.
\end{equation}

To replenish its inventory, the retailer employs a stationary ordering policy that gradually smooths demand over time. Specifically, the retailer’s order in period $t$ is a weighted response to current and past demand realizations, given by
\begin{equation}\label{eq:retailerorder}
O_t = \sum_{n=0}^\infty \varphi_n\, D_{t-n}.
\end{equation}

The coefficients $\varphi = \{\varphi_n\}_{n \geq 0}$ define the retailer's inventory policy and serve as the decision variables to be optimized. As such, in what follows we refer to $\varphi$ as the retailer’s (inventory or replenishment) policy, which acts as a linear {\em transfer function} mapping market demand into orders. \smallskip

A few remarks about the retailer's demand in \eqref{eq:def_demand} and its inventory policy in \eqref{eq:retailerorder} are in order. First, the MA($\infty$) representation in \eqref{eq:def_demand} entails no loss of generality: by the Wold decomposition (\citealp{Wold1938}), any purely non-deterministic weakly stationary process admits such a representation. Second, many ordering policies studied in the inventory literature can be expressed as causal, weakly stationary linear filters of past demand shocks and therefore admit an MA($\infty$) representation of the form \eqref{eq:retailerorder} (see \citealp{BrockwellDavis}). In particular, if the demand process is ARMA and the retailer employs a linear forecasting-based replenishment rule (including the widely used one-step-ahead order-up-to policy; e.g., \citealp{Veinott1965}), then the resulting order process is itself ARMA and hence admits the representation \eqref{eq:retailerorder}. More generally, even when demand is not ARMA, the same forecasting-based rule remains a causal weakly stationary linear process and thus still admits an MA($\infty$) expansion. Other examples include the AR(1) policy considered by \cite{LST2000}, as well as simple moving-average and exponential-smoothing policies examined by \cite{BGP2004}, to name a few.\smallskip

As will become clear from the analysis that follows, it is convenient to reinterpret the retailer’s ordering policy from the time domain to the $z$-transform domain. To this end, we associate to the retailer's policy $\varphi = \{\varphi_n\}_{n \geq 0}$ the $z$-transform
\begin{equation}\label{eq:ztrans}
\varphi(z) = \sum_{n=0}^\infty \varphi_n\,z^n.
\end{equation}
Similarly, we define the $z$-transform of the demand process in \eqref{eq:def_demand} by $\psi(z)=\sum_{n=0}^\infty \psi_n\,z^n$. For future reference, we use the notation $\psi\equiv\mathds{1}$, equivalently $\psi(z)=1$ for all $z$, for the special case of i.i.d.\ demand.\smallskip

In what follows, we impose regularity conditions on the market demand process to exclude degenerate or ill-conditioned cases. \smallskip

\begin{assumption}\label{assm1} The $z$-transform of the market demand $\psi(z)$ is uniformly bounded and uniformly bounded away from zero on $\mathbb{T}:=\{z \in \mathbb{C}\colon |z|=1\}$, i.e.,
\begin{equation}\label{eq:boundedpsi}
0< \psi_{\inf}:=\inf_{z\in \mathbb{T}}\lvert \psi(z)\rvert \le \sup_{z\in \mathbb{T}}\lvert \psi(z)\rvert =:\psi_{\sup}< \infty.
\end{equation}
\end{assumption}
These conditions are mild requirements and simply ensure that demand uncertainty behaves in a regular and stable way. In particular, they rule out extreme cases in which demand fluctuations disappear altogether or become excessively large in certain patterns over time. \smallskip

To ensure the long-term stability of the retailer’s inventory process in \eqref{eq:invR}, we restrict its ordering policy to the class of admissible policies, denoted by $\mathcal{A}$. A policy $\varphi = \varphi(z)$ is admissible if it satisfies two conditions: (i) it fulfills market demand on average, that is, $\e[O_t]=d$ or equivalently $\varphi(1) = 1$, ensuring that every unit of demand is eventually ordered, and (ii) it induces a finite steady-state inventory variance, $\sigma^2_{\ti I}(\varphi) < \infty$. The relationship between $\sigma^2_{\ti I}(\varphi)$ and $\varphi$ is established in the following result.\smallskip

\begin{lemma}\label{lem:sigma_I}
Let $\psi(z)$ and $\varphi(z)$ denote the $z$-transforms of the market demand process and the retailer's ordering policy, respectively. Let $I_t$ be the retailer's net-inventory process satisfying the dynamics in \eqref{eq:invR}. Suppose that $\varphi(1)=1$. Then the stationary variance of $I_t$ exists and is given by
\begin{equation}\label{eq_SigmaI_z}
\sigma^2_{\I}(\varphi;\psi):=\lim_{t\to\infty}\var[I_t]
=\frac{1}{2\pi}\int_{-\pi}^{\pi}|\psi(e^{-i\,\lambda})|^2
\left|\frac{e^{-i\,\lambda}\,\varphi(e^{-i\,\lambda})-1}{1-e^{-i\,\lambda}}\right|^2 \D\lambda.
\end{equation}
\end{lemma}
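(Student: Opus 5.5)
We will derive an explicit linear-filter representation of $I_t$ in terms of the demand shocks $\{\eps_t\}$, and then compute its variance via Parseval's identity on $\mathbb{T}$.

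\textbf{Step 1: the inventory recursion.} Iterating \eqref{eq:invR} from some initial time $t_0$ gives
\[
I_t = I_{t_0} + \sum_{s=t_0+1}^{t}\bigl(O_{s-1}-D_s\bigr).
\]
Substituting \eqref{eq:retailerorder}, we write $O_{s-1}-D_s = \sum_{n\ge 0} a_n\, \tilde D_{s-n}$, where $\tilde D_s := D_s-d$ is the centered demand. The constant terms cancel because $\varphi(1)=1$. The filter $a$ has $z$-transform $a(z)=z\varphi(z)-1$, and since $\varphi(1)=1$ we have $a(1)=0$.

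\textbf{Step 2: stationary version via division by $(1-z)$.} Because $a(1)=0$, formally $a(z)=(1-z)b(z)$ with $b(z)=\bigl(z\varphi(z)-1\bigr)/(1-z)$. In coefficients,
\[
b_n=-\sum_{k>n}a_k,
\]
which follows from $\sum_k a_k=0$. This gives a candidate stationary solution
\[
I_t^{*}=c+\sum_{n\ge0} b_n \tilde D_{t-n}.
\]
It satisfies the first-difference equation $I_t^*-I_{t-1}^*=O_{t-1}-D_t$, with the constant $c$ fixed by initial inventory or the base-stock target. The actual $I_t$ then differs from $I_t^*$ only by the fixed quantity $I_{t_0}-I^*_{t_0}$, whose contribution to the variance vanishes if one starts from a stationary initialization or conditions properly. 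More robustly, I will write
\[
I_t - I_{t_0} = \sum_{n}\bigl(b^{(t)}\ \text{truncations}\bigr)
\]
and show that $\var[I_t]$ converges to $\var[I_t^*]$. Composing with $\tilde D_s=\sum_m\psi_m\eps_{s-m}$ gives $I^*_t-c=\sum_n g_n\eps_{t-n}$ with $g(z)=\psi(z)b(z)$.

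\textbf{Step 3: variance formula.} Since $\{\eps_t\}$ is white noise with unit variance, $\var[I^*_t]=\sum_n g_n^2$. By Parseval,
\[
\sum_n g_n^2=\frac{1}{2\pi}\int_{-\pi}^{\pi}\bigl|g(e^{-i\lambda})\bigr|^2\,\D\lambda,
\]
which is exactly \eqref{eq_SigmaI_z}, with both sides possibly $+\infty$. This is consistent with the definition of admissibility, which requires finiteness.

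\textbf{Main obstacle.} The main obstacle is rigor in Steps 1–2 when the coefficients are only square-summable and $g$ might fail to be in $\ell^2$. There I would argue through partial sums. Write $I_t-I_{t_0}=\sum_n g^{(t_0,t)}_n\eps_{t-n}$, where the coefficients $g^{(t_0,t)}$ are the coefficients of $g$ minus those of $z^{t-t_0}g(z)$. In the $L^2(\mathbb{T})$ picture this is the transfer function
\[
g(z)\bigl(1-z^{t-t_0}\bigr),
\]
obtained from the telescoping identity $\sum_{s}z^{t-s}(1-z)=1-z^{t-t_0}$. With a fixed deterministic initial condition, $\var[I_t]$ equals the squared norm of this function. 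Letting the lag grow, the Riemann–Lebesgue lemma gives $\int|g|^2|1-z^{k}|^2/2\pi\to2\|g\|^2-2\Re\langle\cdot\rangle$. This yields the stationary limit $\|g\|^2$ once $g\in L^2$, or once the initial inventory is taken to be the stationary solution, which is the standard convention I would adopt. If $g\notin L^2$, the variance diverges, consistent with the integral being $+\infty$.
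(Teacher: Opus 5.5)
Your Steps 1--3 are the paper's proof. The paper writes $(1-{\cal B})I_t=\psi({\cal B})\bigl[{\cal B}\varphi({\cal B})-1\bigr]\eps_t$, reads off the transfer function $\psi(z)\,(z\varphi(z)-1)/(1-z)$, and applies Parseval. In doing so it works directly with the stationary solution $I^*_t$ and never discusses initial conditions.

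\textbf{The gap.} The problem is the limit argument in your last paragraph, which is false as stated. With a deterministic $I_{t_0}$ and $k=t-t_0$, your own identity reads
\[
\var[I_t]=\frac{1}{2\pi}\int_{-\pi}^{\pi}|g(e^{-i\lambda})|^2\,|1-e^{-ik\lambda}|^2\,\D\lambda
=2\|g\|^2-2\,\mathrm{Re}\,\langle z^k g,g\rangle .
\]
Riemann--Lebesgue sends the last term to $0$, so the limit is $2\|g\|^2$, not $\|g\|^2$.

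Equivalently, $I_t=I_{t_0}+I^*_t-I^*_{t_0}$, so $\var[I_t]=2\gamma(0)-2\gamma(k)\to 2\sigma^2_{\I}$, where $\gamma$ is the autocovariance of $I^*$. A one-line check: i.i.d.\ demand with $\varphi\equiv 1$ gives $I_t=I_{t_0}+D_{t_0}-D_t$. Its variance is $2$ for every $t>t_0$, whereas \eqref{eq_SigmaI_z} gives $1$.

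The same computation refutes your Step 2 remark that the offset $I_{t_0}-I^*_{t_0}$ contributes vanishing variance. That offset is random, because $I^*_{t_0}$ is. The inventory recursion is a pure integrator with no mean reversion, so the offset's effect never decays.

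\textbf{How to repair it.} The value of $\lim_{t\to\infty}\var[I_t]$ depends on how the system is started, and the deterministic-start route cannot give \eqref{eq_SigmaI_z}. Two conventions do work.

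First, take $I_{t_0}=I^*_{t_0}$, as the paper implicitly does. Then $\var[I_t]=\|g\|^2$ for all $t$ and no limit argument is needed.

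Second, use a fresh start in which orders respond only to post-start demand: $O_{t_0}=d$ and $O_s-d=\sum_{n=0}^{s-t_0-1}\varphi_n\tilde D_{s-n}$ for $s>t_0$. A direct coefficient count then gives $I_t-I_{t_0}=\sum_{m=0}^{t-t_0-1}b_m\tilde D_{t-m}$, which is exactly the truncation $b^{(t)}(z):=\sum_{m=0}^{t-t_0-1}b_m z^m$ of your stationary filter. Hence
\[
\var[I_t]=\frac{1}{2\pi}\int_{-\pi}^{\pi}|\psi(e^{-i\lambda})|^2\,|b^{(t)}(e^{-i\lambda})|^2\,\D\lambda .
\]
This converges to the right-hand side of \eqref{eq_SigmaI_z} by $L^2(\mathbb{T})$ convergence of truncations and $\psi_{\sup}<\infty$. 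When $b\notin\ell^2$, both sides diverge together, since $\psi_{\inf}>0$.
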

{\sc Proof:} The proofs of this and other results are given in \ref{App:Proofs}.\smallskip

In summary, the class of admissible policies for the retailer is given by
\begin{equation}\label{eq:feasibletpsi}
\Ad(\psi)
:=
\Bigg\{\varphi\in\ell^2 \;\colon\; \varphi(1)=1
\quad\text{and}\quad
\sigma^2_{\I}(\varphi;\psi) < \infty
\Bigg\}.
\end{equation}

An important subclass of admissible policies that will play a central role in the sequel is the class of finite-order moving-average policies. For each integer \(q\in\mathbb{N}_0:=\{0,1,2,\ldots\}\), define
\begin{equation}\label{eq:Aq}
\Ad_q(\psi)
:=
\Bigg\{
\varphi\in\Ad(\psi)
\;\colon\;
\varphi(z)=\sum_{n=0}^{q}\varphi_n z^n
\Bigg\}.
\end{equation}
Policies in $\Ad_q(\psi)$ admit a finite MA$(q)$ representation: the retailer's order in any period depends only on the current demand realization and on at most $q$ lagged demand realizations. Equivalently, a unit of realized demand is fully transmitted through the retailer's orders within $q+1$ periods. Thus, the parameter \(q\) controls the length of the retailer's delayed response. The special case \(q=0\) corresponds to an immediate, myopic response, while larger values of \(q\) allow the retailer to smooth orders over longer horizons. This finite-dimensional subclass is useful both analytically and operationally, since it restricts attention to policies with a transparent and implementable delayed-response structure.
\smallskip

\medskip

$\diamond$ {\bf Manufacturer's Inventory Replenishment Policy:} Turning to the manufacturer’s inventory problem, the manufacturer fulfills the retailer’s orders from on-hand inventory and replenishes that inventory via regular production orders with a one-period production lead time. If a stockout occurs---that is, if the retailer’s order in a given period exceeds the manufacturer’s available on-hand inventory---the manufacturer places an expedited production order to cover the shortfall and thereby maintain its guaranteed $100\%$ service level. These expedited units incur a higher per-unit cost than regular production. Let $c$ denote the manufacturer’s regular per-unit production cost, let $\Delta c$ denote its per-unit incremental cost of expediting, and let $h^{\ti M}$ denote its per-unit per-period holding cost.\smallskip

To control its net inventory level, the manufacturer uses a state-dependent base-stock policy: the base-stock level $S_t$ chosen in period-$t$ minimizes the manufacturer's expected period-($t+1$) inventory holding and production expediting costs, $\e[h^{\ti M}(S_t-O_{t+1})^+ + \Delta c\,(S_t-O_{t+1})^-\mid {\cal F}^{\ti M}_t]$, where  ${\cal F}^{\ti M}_t=\sigma(O_\tau\colon \tau \le t)$ is the manufacturer's information available at period $t$.\footnote{\label{footnote-base-stock}It is well known that a state-dependent base-stock policy is optimal in many practical settings when demand satisfies certain regularity conditions \citep{Karlin60,Karlin60b,Veinott1965,Veinott1969,JohnsonThompson}, or under a ``costless returns'' assumption (as in \citealp{HauLee1997,DongLee2003,MiyaokaHausman2004,Aviv2007,Li-Lee2009}). Even when these conditions are not satisfied, base-stock policies remain close to optimal under mild conditions on the coefficient of variation of demand; see, e.g., \cite{CGHORLetter}.}  For a weakly stationary policy $\varphi \in \Ad$, conditional on ${\cal F}^{\ti M}_t$, the retailer's order $O_{t+1}$ is normally distributed with mean $m_t(\varphi;\psi)=\e[O_{t+1}\mid{\cal F}^{\ti M}_t]$ and variance $\sigma_{\ti M}^2(\varphi;\psi)=\var(O_{t+1}\mid{\cal F}^{\ti M}_t)$.
Accordingly, the manufacturer’s base-stock policy is state-dependent with respect to ${\cal F}^{\ti M}_t$ and takes the familiar form
\begin{equation}\label{eq:basestock} 
S_t(\varphi;\psi)
= m_t(\varphi;\psi) + \sigma_{\ti M}(\varphi;\psi)\,\zeta^{\ti M},
\quad \text{where } 
\zeta^{\ti M} := \Phi^{-1}\!\left(\frac{\Delta c}{h^{\ti M}+\Delta c}\right).
\end{equation}

Here, $m_t(\varphi;\psi)$ and $\sigma^2_{\ti M}(\varphi;\psi)$ denote the manufacturer’s one-step-ahead mean demand forecast and mean squared forecast error (MSFE), respectively. In words, \eqref{eq:basestock} implies that in each period the manufacturer replenishes its inventory up to a base-stock level $S_t(\varphi)$ that covers the one-step-ahead mean forecast demand $m_t(\varphi;\psi)$, augmented by a safety stock $\sigma_{\ti M}(\varphi;\psi)\, \zeta^{\ti M}$ proportional to the square root of the MSFE.
\smallskip

To characterize $m_t(\varphi;\psi)$ and $\sigma^2_{\ti M}(\varphi;\psi)$, we first describe how the manufacturer forms one-step-ahead forecasts of the retailer's order process $\{O_t\}$. This characterization relies on a factorization of the retailer's ordering rule $\varphi(z)$ into an outer component and an inner component, known as the {\sf Smirnov--Beurling factorization}; see \cite{rudin1970real}. We defer the technical details to the appendix and state here only the implication needed for our analysis. The statement of the result uses the {\sf Backshift operator}, defined by ${\cal B}D_t=D_{t-1}$, so that the retailer's orders in \eqref{eq:retailerorder} can be written as $O_t=\varphi({\cal B})\,D_t=d+\varphi({\cal B})\,\psi({\cal B})\,\eps_t$.\smallskip

\begin{lemma}[Spectral Factorization]\label{lem:spectral}
Let $\psi(z)$ and $\varphi(z)$ denote the $z$-transforms of the market demand process and the retailer's ordering policy, respectively. Write $\varphi(z)={\cal Q}(z)\,{\cal I}(z)$ for the Smirnov--Beurling factorization of $\varphi(z)$, where ${\cal Q}(z)$ is outer and ${\cal I}(z)$ is inner. Then the manufacturer's one-step-ahead conditional mean forecast of $O_{t+1}$ and the associated mean squared forecast error (MSFE) satisfy
\begin{align}
m_t(\varphi;\psi)
&=
\e[O_{t+1}\mid{\cal F}^{\ti M}_t]
=
d+
\frac{\psi({\cal B})\,{\cal Q}({\cal B})-\psi(0)\,{\cal Q}(0)}{{\cal B}}\,
{\cal I}({\cal B})\,\eps_t, \\
\sigma^2_{\ti M}(\varphi;\psi)
&=
\var(O_{t+1}\mid{\cal F}^{\ti M}_t)
=
|\psi(0)|^2\,|{\cal Q}(0)|^2 =
|\psi(0)|^2
\exp\!\left(
\frac{1}{2\pi}\int_{-\pi}^{\pi}
\log\bigl|\varphi(e^{-i\lambda})\bigr|^2\,\mathrm{d}\lambda
\right). \label{eq_SigmaS_z}
\end{align}
\end{lemma}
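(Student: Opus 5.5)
We prove the lemma by reducing the manufacturer's forecasting problem to a Wold representation of the order process, built from the inner--outer factorization. Write $O_t-d=\varphi({\cal B})\psi({\cal B})\eps_t$ and set $g(z):=\psi(z)\varphi(z)=\psi(z){\cal Q}(z){\cal I}(z)$. By \cref{assm1}, $\psi$ is bounded and bounded away from zero on $\mathbb{T}$. Since $\psi$ is also the Wold (hence outer) transfer function of demand, $\psi{\cal Q}$ is outer in $\mathbb{H}^2$. The spectral density of $\{O_t\}$ is therefore $|g(e^{-i\lambda})|^2/(2\pi)=|\psi{\cal Q}|^2/(2\pi)$, because $|{\cal I}|=1$ a.e. on $\mathbb{T}$.

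\emph{Innovations of the order process.} Define $\eta_t:={\cal I}({\cal B})\eps_t$. Because ${\cal I}$ is inner, its Fourier coefficients satisfy $\sum_n|{\cal I}_n|^2=\frac{1}{2\pi}\int|{\cal I}|^2=1$, and more generally the autocovariance of $\eta_t$ at lag $k$ is the $k$-th Fourier coefficient of $|{\cal I}|^2\equiv 1$. Hence $\{\eta_t\}$ is again Gaussian white noise with unit variance. We then have
\[
O_t-d=(\psi{\cal Q})({\cal B})\,\eta_t ,
\]
with $\psi{\cal Q}$ outer. By Beurling's theorem, an outer function is cyclic in $\mathbb{H}^2$: polynomials multiplied by $\psi{\cal Q}$ are dense. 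This translates into the closed linear span of $\{O_\tau:\tau\le t\}$ being equal to that of $\{\eta_\tau:\tau\le t\}$. This identification is the main obstacle. The key step is the isometry between the Hilbert space of the process and $L^2(\mathbb{T})$: it sends $O_\tau-d$ to $e^{-i\tau\lambda}(\psi{\cal Q})(e^{-i\lambda})\,{\cal I}$-weighted elements, and cyclicity of the outer factor gives equality of the two past spaces. Because the processes are Gaussian, conditional expectations coincide with linear projections, and ${\cal F}^{\M}_t$-measurable forecasts reduce to projections onto this common span.

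\emph{Forecast and MSFE.} Expand $O_{t+1}-d=(\psi{\cal Q})(0)\eta_{t+1}+\sum_{n\ge1}(\psi{\cal Q})_n\eta_{t+1-n}$. The first term is orthogonal to the past and the remainder lies in it. Therefore
\[
m_t=d+\frac{(\psi{\cal Q})({\cal B})-\psi(0){\cal Q}(0)}{{\cal B}}\,\eta_t ,
\]
and substituting $\eta_t={\cal I}({\cal B})\eps_t$ gives the stated formula for $m_t$. The MSFE equals $|\psi(0){\cal Q}(0)|^2\var(\eta_{t+1})=|\psi(0)|^2|{\cal Q}(0)|^2$.

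\emph{Closed form.} For the last equality we use the standard representation of the outer factor,
\[
{\cal Q}(z)=c\exp\Bigl(\tfrac{1}{2\pi}\int\tfrac{e^{i\theta}+z}{e^{i\theta}-z}\log|\varphi(e^{i\theta})|\,d\theta\Bigr),
\]
with $|c|=1$, together with $|{\cal Q}|=|\varphi|$ a.e. on $\mathbb{T}$. Evaluating at $z=0$ yields $|{\cal Q}(0)|^2=\exp\bigl(\frac{1}{2\pi}\int\log|\varphi(e^{-i\lambda})|^2\,d\lambda\bigr)$, where the change of variable $\lambda\mapsto-\lambda$ is harmless. This expression requires $\log|\varphi|\in L^1$, which holds for nonzero $\varphi\in\mathbb{H}^2$ since $\varphi(1)=1$. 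As a consistency check, the result is Kolmogorov's formula applied to the spectral density $|\psi\varphi|^2$: the log-integral of $|\psi|^2$ equals $\log|\psi(0)|^2$ because $\psi$ is outer.
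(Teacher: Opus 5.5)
Your proof is correct and follows essentially the same route as the paper: you write $O_t-d=(\psi{\cal Q})({\cal B})\,\eta_t$ with $\eta_t={\cal I}({\cal B})\eps_t$ again white noise, use that $\psi{\cal Q}$ is outer (via $\psi\in\mathbb{H}^\infty$ from \cref{assm1}) to read this as the Wold representation of the orders, split off the innovation term $\psi(0){\cal Q}(0)\eta_{t+1}$, and evaluate $|{\cal Q}(0)|$ by the Kolmogorov/Jensen formula. The one difference is that you justify the equality of past spans explicitly through Beurling's cyclicity theorem, where the paper simply asserts it. In that step, make the isometry convention consistent: with $\eta_\tau\mapsto e^{i\tau\lambda}$, the image of $O_\tau-d$ is $e^{i\tau\lambda}(\psi{\cal Q})(e^{-i\lambda})$, not $e^{-i\tau\lambda}(\psi{\cal Q})(e^{-i\lambda})$. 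With that fix, the past of the orders maps onto $e^{it\lambda}$ times the closure of the polynomial multiples of $\psi{\cal Q}$, as the cyclicity argument requires.
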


The rightmost expression in \eqref{eq_SigmaS_z} is Kolmogorov's formula for the one-step-ahead prediction error; see \citet[Section~5.8]{BrockwellDavis}.\smallskip

From an operational standpoint, the Smirnov--Beurling factorization in \cref{lem:spectral} conceptually untangles the retailer's ordering policy into two distinct information streams. The outer component represents the demand information that is immediately accessible and actionable for the upstream manufacturer's forecasts. Conversely, the inner component captures any information that is hidden by the smoothing process itself. Thus, minimizing the manufacturer's forecast error fundamentally relies on maximizing the transparency of the outer component. The retailer's inventory volatility, however, depends on the full ordering policy, and hence on both the outer and inner components. In particular, the inner component remains operationally relevant because it changes the timing of the retailer's orders relative to demand, which affects how the retailer's inventory evolves over time.\smallskip

The result in \cref{lem:spectral} simplifies further in \cref{cor:invertible} below when the retailer's ordering policy is invertible in the following sense.\smallskip

\begin{definition}[Invertibility]\label{dfn:invertible}
Let $X=\{X_t\}$ be a weakly stationary process and let $Y_t=\theta({\cal B})\,X_t$ for some transfer function $\theta(z)$. We say that $Y$ is \emph{invertible with respect to $X$} if, for every $t$, $X_t \in \overline{\rm span}\{Y_t,Y_{t-1},Y_{t-2},\dots\}$, where the closure is taken in $L^2$. Equivalently, $\theta(z)$ is outer.
\end{definition}\smallskip

The following corollary follows from \cref{lem:spectral}.

\begin{corollary}[Invertible Policies]\label{cor:invertible}
If the retailer's ordering policy is invertible, i.e., if $\varphi(z)$ is outer, then
\begin{equation}\label{eq:invertible_orders}
\sigma^2_{\ti M}(\varphi;\psi)=|\psi(0)|^2\,|\varphi(0)|^2.
\end{equation}
\end{corollary}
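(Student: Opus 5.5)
The plan is to specialize \cref{lem:spectral} to the case of a trivial inner factor. If $\varphi(z)$ is outer, its Smirnov--Beurling factorization $\varphi={\cal Q}\,{\cal I}$ can be chosen with ${\cal Q}=\varphi$ and ${\cal I}\equiv 1$, up to a unimodular constant. The lemma's MSFE formula $\sigma^2_{\ti M}(\varphi;\psi)=|\psi(0)|^2|{\cal Q}(0)|^2$ then gives \eqref{eq:invertible_orders} immediately.

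The one point needing justification is that the factorization is unique only up to a unimodular constant $c$, with $|c|=1$. Any other choice replaces ${\cal Q}$ by $c\,{\cal Q}$, which leaves $|{\cal Q}(0)|$ unchanged, so the formula is well defined.

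As a cross-check, I would also verify the result through the rightmost (Kolmogorov) expression in \eqref{eq_SigmaS_z}. For an outer function $\varphi\in\mathbb{H}^2$, the defining property is
\[
\log|\varphi(0)|=\frac{1}{2\pi}\int_{-\pi}^{\pi}\log\bigl|\varphi(e^{-i\lambda})\bigr|\,\mathrm{d}\lambda .
\]
This is equality in Jensen's inequality, which holds exactly for outer functions. The change of variable $\lambda\mapsto-\lambda$ does not affect the integral. Exponentiating twice this identity gives
\[
\exp\!\Bigl(\frac{1}{2\pi}\int_{-\pi}^{\pi}\log|\varphi(e^{-i\lambda})|^2\,\mathrm{d}\lambda\Bigr)=|\varphi(0)|^2,
\]
and substituting into \eqref{eq_SigmaS_z} yields $\sigma^2_{\ti M}=|\psi(0)|^2|\varphi(0)|^2$.

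The main (mild) obstacle is making sure the admissible $\varphi$ lies in the Hardy class where the factorization and Jensen equality apply, i.e.\ that $\log|\varphi|$ is integrable on $\mathbb{T}$. This follows because $\varphi\in\ell^2$ gives $\varphi\in\mathbb{H}^2$. In addition, $\varphi\not\equiv 0$, since $\varphi(1)=1$. With that in place, the corollary is a direct reading of \cref{lem:spectral}.
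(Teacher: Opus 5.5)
Your proposal is correct and takes the same route as the paper, which reads the corollary directly off Lemma~\ref{lem:spectral}: an outer $\varphi$ has trivial inner factor, so ${\cal Q}=\varphi$ (up to a unimodular constant) and $\sigma^2_{\ti M}=|\psi(0)|^2|\varphi(0)|^2$. Your cross-check through the Kolmogorov expression, using Jensen's equality for outer functions, is also sound and mirrors the argument the paper uses in the proofs of Lemma~\ref{lem:positive_sigma_m} and Proposition~\ref{prop:BN_min_MSFE}.
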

In particular, the Binomial Smoothing policy that we propose is invertible. Operationally, invertibility means that smoothing orders does not mask demand information: the manufacturer can recover the underlying demand history from the observed order stream. Thus, the forecastability gains generated by Binomial Smoothing come from delaying and smoothing the retailer's orders, rather than from making demand information unrecoverable.\smallskip

$\diamond$ {\bf Retailer's Inventory Problem and Supply-Chain Cost Criterion:} 
As anticipated in \cref{sec:overview}, we formulate the supply-chain design problem from the perspective of a retailer that acts as a Stackelberg leader, choosing the replenishment policy $\varphi$ to maximize its expected long-run average payoff while guaranteeing the manufacturer's participation:
\[
\tag{Retailer's Inventory Problem}
\max_{\varphi}\ \Pi^{\ti R}(\varphi;\psi)
\qquad
\text{subject to}
\qquad
\Pi^{\ti M}(\varphi;\psi)\geq \bar\pi^{\ti M},
\]
where $\Pi^{\ti R}(\varphi;\psi)$ and $\Pi^{\ti M}(\varphi;\psi)$ denote the retailer's and manufacturer's expected long-run average payoffs, respectively, and $\bar\pi^{\ti M}$ is the manufacturer's participation requirement. This formulation captures the main economic tension in the model: by changing its replenishment policy, the retailer affects not only its own inventory performance, but also the predictability and cost of the order stream faced by the manufacturer.\smallskip

A convenient reformulation of the Retailer's Inventory Problem is obtained by expressing it as an equivalent cost-minimization problem. Under our standard inventory-cost structure---with linear holding and backlogging costs at the retailer, linear holding and expediting costs at the manufacturer, full backlogging of unmet demand, and a constant replenishment lead time---the only policy-dependent components of the long-run payoffs are the retailer's inventory cost and the manufacturer's inventory and expediting cost. Let ${\cal C}^{\ti R}(\varphi;\psi)$ and ${\cal C}^{\ti M}(\varphi;\psi)$ denote these expected long-run average per-period costs. The equivalent cost criterion takes the form
\[
{\cal C}(\varphi;\psi;\kappa)
=
\kappa\,{\cal C}^{\ti R}(\varphi;\psi)
+
{\cal C}^{\ti M}(\varphi;\psi),
\]
where $\kappa\geq 0$ captures the relative weight placed on the retailer's inventory cost. The derivation of this equivalent formulation, including its connection to the retailer's Stackelberg problem with the manufacturer's participation constraint, is provided in \ref{App:CostCriterion}.\smallskip

Moreover, the same cost structure implies that the retailer's cost is proportional to $\sigma_{\ti I}(\varphi;\psi)$, while the manufacturer's cost is proportional to $\sigma_{\ti M}(\varphi;\psi)$. Therefore, after normalizing the cost coefficients and with a slight abuse of notation, we write the supply-chain cost criterion as
\begin{equation}\label{eq:cost_criterion}
{\cal C}(\varphi;\psi;\kappa)
=
\kappa\,\sigma_{\ti I}(\varphi;\psi)
+
\sigma_{\ti M}(\varphi;\psi).
\end{equation}

This criterion provides a parsimonious representation of the central trade-off in the system: policies that make the manufacturer's order stream more predictable typically do so by increasing the retailer's inventory burden, and vice versa. Beyond yielding an equivalent formulation of the Stackelberg retailer's problem, the criterion also admits a centralized interpretation. For an appropriate value of $\kappa$, it corresponds to a planner minimizing aggregate long-run inventory-related costs in the supply chain, an objective that has been central to the literature on coordinated replenishment and information sharing; see, for example, \cite{Aviv2001,Aviv2007,Li-Lee2009,MiyaokaHausman2004,caldentey2024information}. In the limiting cases $\kappa\downarrow 0$ and $\kappa\uparrow\infty$, the criterion places all weight on the manufacturer's and retailer's costs, respectively. Thus, varying $\kappa$ traces the full spectrum of trade-offs between retailer inventory volatility, manufacturer forecastability, and overall system performance.\medskip

$\diamond$ {\bf The Supply-Chain Optimization Problem:} Combining the preceding discussion, the supply-chain optimization problem of interest is
\begin{equation}\label{eq:SCCost}
{\cal C}^*(\psi;\kappa):=\inf_{\varphi \in \Ad(\psi)} {\cal C}(\varphi;\psi;\kappa)
=\inf_{\varphi \in \Ad(\psi)}\big\{\kappa\,\sigma_{\ti I}(\varphi;\psi)+\sigma_{\ti M}(\varphi;\psi)\big\}.
\end{equation}

Problem \eqref{eq:SCCost} is, in general, difficult to solve. Since the decision variable is a function $\varphi$ ranging over the admissible set $\Ad(\psi)$, \eqref{eq:SCCost} is not a finite-dimensional convex program, but rather a constrained infinite-dimensional optimization problem over analytic functions. The demand transfer function $\psi$ plays a central role in this complexity. First, it enters the definition of the admissible set $\Ad(\psi)$ and therefore affects feasibility. Second, it shapes the objective itself through both $\sigma_{\ti I}(\varphi;\psi)$ and $\sigma_{\ti M}(\varphi;\psi)$. In particular, the inventory term reflects how the demand dynamics encoded by $\psi$ interact with the retailer's ordering rule $\varphi$, while the forecastability term depends on the temporal structure of the induced order process. When $\psi$ is simple, this trade-off can sometimes be analyzed explicitly; for general demand with richer temporal dependence, $\psi$ can vary substantially across frequencies, and the balance between the two terms typically does not yield a clean closed-form optimizer.\smallskip

Even in the benchmark case of i.i.d. market demand (e.g., $\psi(z)\equiv \mathds{1}$), \citet{caldentey2024information} construct a sequence of $\varepsilon$-optimal policies whose $z$-transforms converge to a singular inner function (see \cref{prop:solnosharing} below). Because such a limit cannot be uniformly approximated by polynomials or bounded-degree rational functions, any practically implementable finite-order ARMA policy (such as truncating the power-series expansion or adopting any fixed-order linear rule) cannot be $\varepsilon$-optimal for arbitrarily small $\varepsilon$; achieving $\varepsilon$-optimality necessarily requires the policy degree to grow unboundedly as $\varepsilon\downarrow 0$. \smallskip

Motivated by these limitations, we tackle the optimization problem in \eqref{eq:SCCost} by constructing a class of approximation policies that both admits a provable, uniform performance guarantee and reduces the infinite-dimensional optimization in \eqref{eq:SCCost} to a tractable, one-dimensional problem. As a first step, in \cref{sec:preliminaries} we assemble a set of preliminary results that will be used to analyze and approximate the optimal solution. We then introduce in \cref{sec:approximations} the notion of an $\alpha$-approximation class as a performance benchmark for \eqref{eq:SCCost}, providing a uniform multiplicative guarantee relative to an optimal admissible policy for all $\kappa$, and thus a principled criterion for evaluating and comparing implementable policy classes. Guided by these insights, in \cref{sec:ordersmooting} we propose our binomial smoothing policies and show that they achieve a uniformly bounded relative optimality gap under the mild regularity conditions in \cref{assm1} on the demand model.

%% file: Management_Science/Preliminaries_MS.tex
\section{Preliminaries}\label{sec:preliminaries}

In this section, we collect a set of preliminary results that help build intuition about key structural features of the problem and will play an important role in analyzing and approximating the solution to the supply-chain cost minimization problem \eqref{eq:SCCost}. We first review the benchmark i.i.d.\ demand case following \citet{caldentey2024information}, then study the two extreme regimes $\kappa \uparrow \infty$ and $\kappa\downarrow 0$ and obtain partial characterizations of optimal (or asymptotically optimal) replenishment policies in each limit. Finally, we derive two complementary lower bounds on ${\cal C}^*(\kappa;\psi)$ that serve as performance benchmarks; one of these bounds also yields a sufficient condition under which the information-relaxation bound is tight and leads to an explicit optimal policy.

\subsection {IID Demand}

Under i.i.d.\ demand, e.g., $\psi(z)\equiv \mathds{1}$, the optimization problem \eqref{eq:SCCost} simplifies to
\begin{equation} \label{eq:SCCost_IID}
{\cal C}^*_{\ti{IID}}(\kappa):=\inf_{\varphi \in \Ad} \; \kappa\,\sigma_{\ti I}(\varphi;\mathds{1})
+\sigma_{\ti M}(\varphi;\mathds{1}).
\end{equation}

\citet{caldentey2024information} show that the cost minimization problem \eqref{eq:SCCost_IID} exhibits qualitatively different solutions depending on the value of $\kappa$. When $\kappa\ge \sqrt{5}$, the retailer’s optimal policy is invertible and admits a simple MA(1) representation in terms of the demand shocks $\{\epsilon_t\}$. In contrast, when $\kappa<\sqrt{5}$ the solution to \eqref{eq:SCCost_IID} is substantially more intricate: the authors provide an $\varepsilon$-optimal sequence of policies whose associated transfer functions include a factor that converges to a singular inner function. 
For completeness, we present their main results in the following theorem. \smallskip 

\begin{theorem}[I.I.D. Demand - \citealp{caldentey2024information}]\label{prop:solnosharing}
Consider the problem of minimizing the system-wide inventory cost ${\cal C}_{\ti{IID}}(\varphi;\kappa)=\kappa\,\sigma_{\ti I}(\varphi;\mathds{1})+\sigma_{\ti M}(\varphi;\mathds{1})$ over the class of admissible policies $\Ad(\mathds{1})$ in \eqref{eq:feasibletpsi} for the special case in which market demand is i.i.d., $\psi(z)\equiv \mathds{1}$. Then, the optimal system-wide cost is 
\begin{equation}\label{eq:Ciid}
{\cal C}^*_{\ti{IID}}(\kappa)
= \begin{cases}
1+\sqrt{\kappa^2-1}, & \mbox{if }\kappa\ge \sqrt{5},\\[2pt]
\displaystyle \frac{\kappa}{2}\sqrt{5+2\gamma_{\kappa}}+\frac{e^{-\gamma_{\kappa}}}{2}, & \mbox{if } \kappa<\sqrt{5},
\end{cases}
\end{equation}
where $\gamma_\kappa \geq 0$ is the unique solution of $\kappa^2=(5+2\,\gamma)\,\exp(-2\,\gamma)$.

\medskip

\begin{enumerate}[\rm (a)]
\item\label{prop:solnosharing-a} For $\kappa \geq \sqrt{5}$, the retailer's optimal policy satisfies $\varphi^*(z)=\varphi^*_0+(1-\varphi^*_0)\,z$, where
$\displaystyle \varphi^*_0 = 1 - \frac{1}{\sqrt{\kappa^2 - 1}}.$\smallskip

\item\label{prop:solnosharing-b} For $\kappa < \sqrt{5}$,  define, for each $\delta >0$, the outer policies
\[
\varphi_{\tim \delta}(z) = \frac{1 + z}{2} 
\exp\!\left(\gamma_\kappa\, \frac{z-1}{1 +z+\delta}\right).
\]
Then $\{\varphi_{\tim \delta}\colon \delta >0\}$ is an $\varepsilon$-optimal sequence, that is, $\displaystyle {\cal C}^*_{\ti{IID}}(\kappa) = \lim_{\delta \downarrow 0} {\cal C}_{\ti{IID}}(\varphi_{\tim \delta};\kappa)$.
\end{enumerate}
\end{theorem}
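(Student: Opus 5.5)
The plan is to work entirely in the Hardy space $\mathbb{H}^2$, using \cref{lem:sigma_I} and \cref{lem:spectral} with $\psi\equiv\mathds{1}$. First I would rewrite the inventory term in the time domain. Set $g(z):=\bigl(1-z\varphi(z)\bigr)/(1-z)$. Then $g_n=1-\sum_{k<n}\varphi_k$ is the cumulative shortfall of orders after $n$ periods, and \eqref{eq_SigmaI_z} gives $\sigma^2_{\I}(\varphi;\mathds{1})=\sum_{n\ge0}g_n^2$, with $g_0=1$ and $g_1=1-\varphi_0$. Admissibility, i.e.\ $\varphi(1)=1$ together with finite variance, is exactly $g\in\mathbb{H}^2$, and $z\varphi(z)=1-(1-z)g(z)$ recovers the policy. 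For the forecast term, Jensen's inequality applied to the factorization $\varphi={\cal Q}{\cal I}$ gives
\[
\sigma_{\ti M}(\varphi;\mathds{1})=|{\cal Q}(0)|\ \ge\ |{\cal Q}(0)|\,|{\cal I}(0)|=|\varphi_0|,
\]
with equality if and only if $\varphi$ is outer (cf.\ \cref{cor:invertible}).

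\emph{Case $\kappa\ge\sqrt5$.} Keeping only $g_0$ and $g_1$ in the inventory sum and using the Jensen bound, every admissible $\varphi$ satisfies
\[
{\cal C}_{\ti{IID}}(\varphi;\kappa)\ \ge\ \kappa\sqrt{1+(1-\varphi_0)^2}+|\varphi_0| .
\]
This is a one-dimensional minimization in $\varphi_0$. The first-order condition is $1-\varphi_0=1/\sqrt{\kappa^2-1}$, and substituting back gives the minimum value
\[
\frac{\kappa^2}{\sqrt{\kappa^2-1}}+1-\frac{1}{\sqrt{\kappa^2-1}}=1+\sqrt{\kappa^2-1}.
\]
The bound is attained only if three things hold: $g_n=0$ for $n\ge2$, which forces the MA(1) form $\varphi^*(z)=\varphi^*_0+(1-\varphi^*_0)z$; $\varphi^*_0\ge 0$; and $\varphi^*$ is outer. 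The zero of $\varphi^*$ is at $-\varphi^*_0/(1-\varphi^*_0)$, which lies outside the open disk iff $\varphi^*_0\ge1/2$, i.e.\ iff $\sqrt{\kappa^2-1}\le 2$ fails in the other direction, i.e.\ iff $\kappa\ge\sqrt5$. So for $\kappa\ge\sqrt5$ the lower bound is achieved, which proves (a) and the first branch of \eqref{eq:Ciid}. This also explains where the threshold $\sqrt5$ comes from: it is the point at which the relaxed MA(1) minimizer stops being outer.

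\emph{Case $\kappa<\sqrt5$, upper bound.} Here the MA(1) minimizer of the relaxed problem is not outer, so the bound above is not tight. Each $\varphi_\delta$ is outer, because it is a product of $(1+z)/2$ and the exponential of a function analytic on the closed disk when $\delta>0$, and it satisfies $\varphi_\delta(1)=1$. Its forecast term is therefore
\[
\sigma_{\ti M}(\varphi_\delta)=|\varphi_\delta(0)|=\tfrac12 e^{-\gamma_\kappa/(1+\delta)}\ \longrightarrow\ \tfrac12 e^{-\gamma_\kappa}.
\]
For the inventory term I would compute $\|(1-z\varphi_\delta)/(1-z)\|_2^2$ and pass to the limit by dominated convergence on $\mathbb{T}$. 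The limit policy is $\varphi_0(z)=\tfrac{1+z}{2}\exp\!\bigl(\gamma(z-1)/(z+1)\bigr)$, whose exponential factor is a singular inner function with $|\cdot|=1$ a.e.\ on $\mathbb{T}$. I expect the limit norm to equal $(5+2\gamma_\kappa)/4$, obtained by expanding the square and evaluating the cross term $\int\mathrm{Re}\bigl(z\varphi_0(z)\bigr)\big/|1-z|^2$ through the Poisson/Herglotz representation of the singular factor. The resulting cost is $\tfrac{\kappa}{2}\sqrt{5+2\gamma}+\tfrac12 e^{-\gamma}$. Minimizing this over $\gamma$ gives exactly the stated equation $\kappa^2=(5+2\gamma)e^{-2\gamma}$, which identifies $\gamma_\kappa$.

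\emph{Case $\kappa<\sqrt5$, lower bound.} This is the main obstacle: showing that no admissible $\varphi$ does better than the value above. The first reduction is to outer policies. Writing $\varphi={\cal Q}{\cal I}$, one has $\sigma_{\ti M}(\varphi)=\sigma_{\ti M}({\cal Q}/{\cal Q}(1))$ up to normalization, so I would need to show that replacing $\varphi$ by its outer part does not increase $\sigma_{\I}$ by more than is gained in the forecast term; this is not automatic, since the inner factor changes the timing of orders. I would try a Lagrangian duality argument with the limit $\varphi_0$ as dual certificate. Setting $\log\sigma_{\ti M}=\frac1{2\pi}\int\log|\varphi|$, the concavity inequality $\log|\varphi|\ge\log|\varphi_0|+\mathrm{Re}\bigl((\varphi-\varphi_0)/\varphi_0\bigr)$ fails, so the plan instead is to use $\exp(\text{mean}\log|\varphi|)\ge$ the tangent of the exponential at $\varphi_0$. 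This should be combined with Cauchy--Schwarz on $\sigma_{\I}=\|g\|_2$, i.e.\ $\|g\|_2\ge\langle g,g_0\rangle/\|g_0\|_2$ where $g_0$ is the shortfall function of $\varphi_0$. The goal is to show that the resulting affine lower bound in $\varphi$ is minimized at $\varphi_0$, which amounts to verifying the first-order (Euler--Lagrange) stationarity of $\varphi_0$. I expect this verification, carried out via the Herglotz representation of the singular factor, to be the delicate step of the whole proof.
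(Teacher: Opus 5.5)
The paper does not prove this theorem; it is quoted from Caldentey et al.\ (2024), so your proposal has to stand on its own.

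\textbf{What is correct.} Your treatment of $\kappa\ge\sqrt5$ is complete. The bounds $\sigma_{\ti M}\ge|\varphi_0|$ and $\sigma_{\ti I}^2\ge g_0^2+g_1^2$ give exactly the full-information bound. Its MA(1) minimizer is outer precisely when $\varphi_0^*\ge\frac12$, i.e.\ when $\kappa\ge\sqrt5$.

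The upper bound for $\kappa<\sqrt5$ is also right, including the value $(5+2\gamma)/4$: the $\varphi_\delta$ are outer, bounded by $1$ on the disk, and converge to $\varphi_0$ in $\mathbb{H}^2$. One caution: $|1-z|^{-2}\notin L^1(\mathbb{T})$, so you cannot expand the square and integrate the cross term separately. Keep the numerator together. For example, with $z=(1+it)/(1-it)$ one has $z\varphi_0(z)=(1+it)e^{i\gamma t}/(1-it)^2$ and $\sigma_{\ti I}^2=\frac{1}{4\pi}\int_{\mathbb{R}}|1-z\varphi_0|^2\,t^{-2}\,dt$, and a contour argument in the upper half-plane gives $(5+2\gamma)/4$.

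\textbf{The gap.} The lower bound for $\kappa<\sqrt5$ is the substance of the theorem, and it is not proved. The route you sketch cannot work as stated.
\begin{itemize}
\item The limit $\varphi_0=\frac{1+z}{2}\exp\bigl(\gamma\frac{z-1}{z+1}\bigr)$ is itself admissible. But its exponential factor is singular inner, so its outer part is $\frac{1+z}{2}$ and $\sigma_{\ti M}(\varphi_0)=\frac12$, not $\frac12e^{-\gamma_\kappa}$. Hence ${\cal C}_{\ti{IID}}(\varphi_0;\kappa)=\frac{\kappa}{2}\sqrt{5+2\gamma_\kappa}+\frac12>{\cal C}^*_{\ti{IID}}(\kappa)$.
\item The forecast term jumps upward in the limit $\varphi_\delta\to\varphi_0$, which is exactly why the infimum is not attained. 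So $\varphi_0$ is not a minimizer, and Euler--Lagrange stationarity of $\varphi_0$ for ${\cal C}_{\ti{IID}}$ certifies nothing.
\item A tangent of $\exp$ taken at $\frac{1}{2\pi}\int\log|\varphi_0|=\log\frac12$ sits at the wrong level.
\end{itemize}

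More fundamentally, your tangent step still needs a lower bound on $\frac{1}{2\pi}\int\log|\varphi|$. The only general one among your tools is Jensen's, $\log|\varphi(0)|$. Any bound built from $\sigma_{\ti M}\ge|\varphi(0)|$ and Cauchy--Schwarz on $\|g\|_2$ is at most $\inf_\varphi\{\kappa\,\sigma_{\ti I}(\varphi)+|\varphi(0)|\}$, the full-information value. That value is attained by a non-outer policy and is strictly below the claimed optimum for every $\kappa<\sqrt5$. For instance, at $\kappa=1$ it equals $\sqrt2\approx1.414$, versus ${\cal C}^*_{\ti{IID}}(1)\approx1.507$.

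\textbf{What is missing.} You need a quantitative inequality that exploits the absence of zeros in the disk. Given your upper bound, the claim for $\kappa<\sqrt5$ is equivalent to the frontier inequality
\[
\sigma^2_{\ti I}(\varphi;\mathds{1})\;\ge\;\frac54+\frac12\log\frac{1}{2\,\sigma_{\ti M}(\varphi;\mathds{1})}\qquad\text{for every }\varphi\in\Ad(\mathds{1})\text{ with }\sigma_{\ti M}(\varphi;\mathds{1})\le\tfrac12 .
\]
To recover the theorem from it, combine it with $\sigma_{\ti I}^2\ge1+(1-s)^2$ for $s=\sigma_{\ti M}\in[\frac12,1]$ and $\sigma_{\ti I}\ge1$. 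Minimizing the resulting bound over $s$ returns exactly ${\cal C}^*_{\ti{IID}}(\kappa)$ with $\gamma_\kappa$. Nothing in your plan establishes this inequality.

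Note also that ``reduction to outer policies'' is not the right reduction. Singular inner factors can be approximated by outer functions with smaller $\sigma_{\ti M}$; that is exactly what $\varphi_\delta$ does. The factors that must be excluded are Blaschke zeros. By Hurwitz's theorem these cannot be approximated by zero-free functions, so they need a separate argument showing they never help.
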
\medskip
Figure~4 illustrates the retailer's delayed-response function for three values of $\kappa$ and provides intuition for \cref{prop:solnosharing}.
\input{Figures/Fig_IR_IID_Case}
 The left panel, which corresponds to a relatively large value of $\kappa$, shows a simple and rapidly decaying impulse response: the retailer places most of the weight on current demand, with only a modest adjustment in the next period. By contrast, the middle and right panels, which correspond to smaller values of $\kappa$, show impulse responses that become increasingly spread out, oscillatory, and irregular. Operationally, this means that as the system places greater emphasis on reducing manufacturer forecast risk relative to retailer inventory variability, the theoretically optimal policy requires the retailer to respond to demand shocks through a long and intricate sequence of delayed adjustments. Such policies are difficult to use in practice because they rely on non-rational, infinite-memory singular inner factors and therefore cannot be represented by finite-dimensional ARMA processes. Not only are these singular limiting policies computationally prohibitive, but their oscillatory nature makes them operationally fragile to model misspecification.  Accordingly, although \cref{prop:solnosharing} provides an exact theoretical solution to \eqref{eq:SCCost_IID}, the policy in part~(b), together with its associated cost, should be interpreted as a benchmark for assessing the performance gap of practically implementable policies, rather than as an operational prescription.  This sharp contrast motivates our departure from exact Hardy-space filters, pivoting instead to robust, finite-window approximations that remain close to optimal while being operationally implementable.\smallskip

It is also worth noting that, when the retailer's costs dominate those of the manufacturer (i.e., $\kappa\geq\sqrt{5}$), the optimal policy is invertible and has a simple MA(1) structure. This invertibility provides intuition for why the policy takes an MA(1) form: given the manufacturer’s one-period production lead time, delaying the demand signal by more than one period is not cost-effective. In this regime, the optimal ordering rule can be written as $O_t=\varphi^*_0\,D_t+(1-\varphi^*_0)\,D_{t-1}$,
so the retailer smooths demand using at most one lagged observation. A key contribution of \cref{prop:solnosharing} is the explicit characterization of the coefficient $\varphi^*_0$ that minimizes total supply chain cost. \smallskip

Let us conclude our discussion of the i.i.d. demand case with a structural property of optimal finite-memory policies. While \citet{caldentey2024information} characterize the i.i.d.\ benchmark over the full admissible class $\Ad(\mathds{1})$ defined in \eqref{eq:feasibletpsi}, for $\kappa<\sqrt{5}$ that benchmark is approached through limiting policies that are not contained in any fixed finite-degree class $\Ad_q(\mathds{1})$ defined in \eqref{eq:Aq}. Thus, understanding the structure of an optimizer within $\Ad_q(\mathds{1})$ requires a separate argument. The next proposition uses the symmetry of reciprocal polynomials to establish a sharp bound on the average delay induced by any optimal MA$(q)$ policy.\smallskip

\begin{proposition}\label{prop:group_delay_bound}
Consider the i.i.d.\ demand case, $\psi\equiv\mathds{1}$, and fix $q\in\mathbb{N}_0$. Let $\varphi\in\Ad_q(\mathds{1})$ be an optimal policy for the supply-chain cost minimization problem restricted to the class of admissible MA$(q)$ policies. If $\varphi(z)=\sum_{k=0}^q \varphi_k z^k$, then its average (group) delay is bounded above by the midpoint of the smoothing window:
\begin{equation}\label{eq:group_delay_bound}
\sum_{k=0}^q k\,\varphi_k \leq \frac{q}{2}.
\end{equation}
\end{proposition}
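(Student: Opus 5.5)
The plan is a reflection argument. For $\varphi\in\Ad_q(\mathds{1})$ with real coefficients $\varphi_0,\dots,\varphi_q$, consider the reciprocal policy
\[
\tilde\varphi(z):=z^q\varphi(1/z)=\sum_{k=0}^q \varphi_{q-k}z^k .
\]
It is again a polynomial of degree at most $q$ with $\tilde\varphi(1)=\varphi(1)=1$, so it has finite inventory variance and lies in $\Ad_q(\mathds{1})$. I will show two things. First, $\tilde\varphi$ gives the manufacturer exactly the same forecast error as $\varphi$. Second, the difference in the retailer's inventory variance between the two policies equals $2\sum_k k\varphi_k-q$. If the group delay exceeds $q/2$, replacing $\varphi$ by $\tilde\varphi$ then strictly lowers the cost, contradicting optimality.

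\emph{Step 1 (forecast error is unchanged).} On $\mathbb{T}$ we have $|\tilde\varphi(e^{-i\lambda})|=|e^{-iq\lambda}|\,|\varphi(e^{i\lambda})|=|\varphi(e^{i\lambda})|$. The Kolmogorov expression in \cref{lem:spectral} depends on $\varphi$ only through $\int_{-\pi}^{\pi}\log|\varphi(e^{-i\lambda})|^2\,\D\lambda$. This integral is invariant under $\lambda\mapsto-\lambda$, so $\sigma_{\ti M}(\tilde\varphi;\mathds{1})=\sigma_{\ti M}(\varphi;\mathds{1})$. Note that this uses the general formula \eqref{eq_SigmaS_z}, not \cref{cor:invertible}: $\varphi$ need not be outer, and reflection typically moves zeros between the inside and outside of the unit disk.

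\emph{Step 2 (inventory variance in the time domain).} With $\psi\equiv\mathds{1}$, Parseval applied to \cref{lem:sigma_I} gives $\sigma_{\ti I}^2=\|(1-z\varphi(z))/(1-z)\|_{\mathbb{H}^2}^2$. Because $\varphi(1)=1$, the quotient $(1-z\varphi(z))/(1-z)$ is the polynomial $\sum_{n=0}^{q}T_nz^n$, where $T_n:=\sum_{k\ge n}\varphi_k$ are the tail sums. In particular $T_0=1$, $T_{q+1}=0$, and $\sum_{n=1}^q T_n=\sum_k k\varphi_k$. Hence
\[
\sigma_{\ti I}^2(\varphi)=1+\sum_{m=1}^q T_m^2 .
\]
For the reflected policy the tails are $\tilde T_n=\sum_{j\le q-n}\varphi_j=1-T_{q-n+1}$. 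Reindexing by $m=q-n+1$ gives
\[
\sigma_{\ti I}^2(\tilde\varphi)=\sum_{m=1}^{q+1}(1-T_m)^2=1+\sum_{m=1}^q(1-T_m)^2 .
\]
Subtracting the two expressions,
\[
\sigma_{\ti I}^2(\varphi)-\sigma_{\ti I}^2(\tilde\varphi)=\sum_{m=1}^q(2T_m-1)=2\sum_{k=0}^q k\varphi_k-q .
\]

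\emph{Step 3 (conclusion).} Suppose $\sum_k k\varphi_k>q/2$. Then $\sigma_{\ti I}(\tilde\varphi)<\sigma_{\ti I}(\varphi)$ while $\sigma_{\ti M}$ is unchanged. Since $\kappa>0$, this gives ${\cal C}(\tilde\varphi;\mathds{1};\kappa)<{\cal C}(\varphi;\mathds{1};\kappa)$, contradicting the optimality of $\varphi$ in $\Ad_q(\mathds{1})$. Hence \eqref{eq:group_delay_bound} holds. The same identity shows the bound is sharp: any symmetric (palindromic) polynomial, such as the binomial policy, has group delay exactly $q/2$.

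The main point needing care is Step 1. The invariance must be argued through Kolmogorov's formula, not $|\varphi(0)|$, because $\tilde\varphi(0)=\varphi_q$ in general differs from $\varphi_0$. One must also check the degenerate case in which $\varphi$ vanishes somewhere on $\mathbb{T}$. There the log-integral is still finite for a nonzero polynomial, and both sides change identically under the reflection, so the argument is unaffected. The tail-sum bookkeeping in Step 2 is routine but is the step where an index error would be easiest to make.
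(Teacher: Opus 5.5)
Your proof is correct and follows the paper's argument: you reflect to the reciprocal polynomial $z^q\varphi(1/z)$, use Kolmogorov's formula to see that $\sigma_{\ti M}$ is unchanged, and write both inventory variances as sums of squared tail sums whose difference is $q-2\sum_k k\varphi_k$. Your pairing via $\tilde T_n=1-T_{q-n+1}$ is slightly cleaner than the paper's, which pairs $\bigl(\sum_{n\le k}\varphi_n\bigr)^2$ with $\bigl(\sum_{n\ge k}\varphi_n\bigr)^2$ and drops the factor $1+\varphi_k$ without comment; that step relies on the identity $\sum_k\varphi_k\bigl(\sum_{n<k}\varphi_n-\sum_{n>k}\varphi_n\bigr)=0$.
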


\smallskip

\cref{prop:group_delay_bound} has a useful practical implication: optimal smoothing should not delay the retailer's demand response too much. Although postponing orders can improve the manufacturer's forecastability, excessive back-loading raises the retailer's inventory exposure and is not cost-effective. Thus, the optimal use of a finite smoothing window is disciplined: it may spread the response to demand over time, but it does not place its average response beyond the midpoint of the window. From a managerial perspective, smoothing should be viewed as a calibrated delay, not as a prescription to postpone replenishment as much as possible.

\subsection{Asymptotic Regimes}

Let us return to the general demand case and study the cost minimization problem \eqref{eq:SCCost} in the two limiting regimes $\kappa\uparrow\infty$ and $\kappa\downarrow 0$.\smallskip

{\bfseries\boldmath -) Large-$\kappa$ regime ($\kappa \uparrow \infty$):}
 In this case, the retailer’s cost dominates the supply-chain cost, and an optimal policy is obtained by minimizing the retailer’s inventory volatility $\sigma_{\ti I}(\varphi;\psi)$ defined in \eqref{eq_SigmaI_z}. We note that this policy coincides with the policy a retailer would choose in a decentralized supply chain if it internalized only its own inventory costs and not the effect of its ordering policy on the manufacturer’s forecast uncertainty and replenishment costs. Accordingly, for future reference, we refer to this limiting optimal policy as the {\em myopic policy}.
\smallskip

\begin{proposition}[Myopic Policy]\label{prop:myopic}
The retailer's myopic policy $
\varphi^{\MP}:=\arg\min\bigl\{\sigma_{\ti I}(\varphi;\psi)\colon \varphi \in \Ad\bigr\}$
is given by
\[
\varphi^{\MP}(z)=\frac{\psi(z)-(1-z)\,\psi(0)}{z\,\psi(z)}.
\]
Moreover, under this policy $\sigma_{\ti I}\bigl(\varphi^{\MP};\psi\bigr)=|\psi(0)|$.
\end{proposition}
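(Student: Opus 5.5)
We will minimize $\sigma^2_{\I}(\varphi;\psi)$ in the form given by \cref{lem:sigma_I}, recasting it as a least-squares problem in the Hardy space $\mathbb{H}^2$. Set $g(z):=\psi(z)\,\frac{1-z\,\varphi(z)}{1-z}$. The constraint $\varphi(1)=1$ makes $1-z\varphi(z)$ vanish at $z=1$, so the quotient has no pole there. Since $|e^{-i\lambda}\varphi(e^{-i\lambda})-1| = |1-e^{-i\lambda}\varphi(e^{-i\lambda})|$, \cref{lem:sigma_I} gives $\sigma^2_{\I}(\varphi;\psi)=\|g\|^2_{\mathbb{H}^2}=\sum_{n\ge0}|g_n|^2$, where $g_n$ are the Taylor coefficients of $g$. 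Evaluating at $z=0$ gives $g(0)=\psi(0)$ for \emph{every} admissible $\varphi$. This yields the lower bound
\[
\sigma^2_{\I}(\varphi;\psi)\;\ge\;|g_0|^2=|\psi(0)|^2 ,
\]
with equality if and only if $g(z)\equiv\psi(0)$.

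Next we solve $g\equiv\psi(0)$ for $\varphi$. The equation $\psi(z)(1-z\varphi(z))=\psi(0)(1-z)$ gives $z\varphi(z)=\bigl(\psi(z)-\psi(0)(1-z)\bigr)/\psi(z)$, which is exactly the stated formula for $\varphi^{\MP}$. We must then check that this candidate lies in $\Ad(\psi)$.
\begin{enumerate}[(i)]
\item \emph{Analyticity at the origin.} The numerator $\psi(z)-\psi(0)+\psi(0)z$ vanishes at $z=0$, so dividing by $z$ leaves no singularity there.
\item \emph{Square-summability.} Because $\psi$ is outer (the Wold representation is invertible) and $|\psi|\ge\psi_{\inf}>0$ on $\mathbb{T}$ by \cref{assm1}, the function $1/\psi$ is bounded on $\mathbb{T}$ and belongs to $\mathbb{H}^\infty$. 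Hence $\varphi^{\MP}=\bigl(1-\psi(0)(1-z)/\psi(z)\bigr)/z$ lies in $\mathbb{H}^2$, so its coefficients are in $\ell^2$.
\item \emph{Normalization.} Evaluating at $z=1$ (as a boundary value, or via the limit of the formula) gives $\varphi^{\MP}(1)=\psi(1)/\psi(1)=1$.
\item \emph{Finite inventory variance.} By construction $\sigma^2_{\I}=|\psi(0)|^2<\infty$.
\end{enumerate}
Since $\varphi^{\MP}$ attains the lower bound, it is a minimizer. Uniqueness follows because equality forces $g\equiv\psi(0)$, and $\varphi$ is uniquely recovered from $g$.

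The main obstacle is to justify rigorously the identification of the integral in \cref{lem:sigma_I} with $\|g\|^2_{\mathbb{H}^2}$. This requires $g\in\mathbb{H}^2$, so that $g(0)$ equals its zeroth Fourier coefficient; a priori, $g$ is defined through a quotient by $1-z$ and an $L^2(\mathbb{T})$ boundary function. I would argue as follows. Finiteness of $\sigma^2_{\I}$ means the boundary function of $g$ lies in $L^2(\mathbb{T})$. Moreover, $g$ is analytic in the disk: $\psi\in\mathbb{H}^2$ and $(1-z\varphi)/(1-z)$ is analytic, since it is the generating function of the partial sums $1-\sum_{k<n}\varphi_k$. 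A Smirnov-class argument then shows $g\in\mathbb{H}^2$: the factor $(1-z\varphi)/(1-z)$ is a quotient of $\mathbb{H}^2$ and outer functions, hence in the Nevanlinna--Smirnov class, and a Smirnov function with $L^2$ boundary values belongs to $\mathbb{H}^2$. Alternatively, this identification may already be implicit in the proof of \cref{lem:sigma_I}, where $I_t$ is written as a causal filter $-g(\mathcal{B})\epsilon_t$ whose coefficients are square-summable. In that case $\sigma^2_{\I}=\sum_n g_n^2$ holds directly in the time domain, and the rest of the argument is immediate.
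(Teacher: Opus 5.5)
Your proof is correct and follows the paper's argument. The paper's $\theta_n$ are exactly the negated Taylor coefficients of your $g$; $\theta_0=-\psi(0)$ is fixed, and the minimum comes from setting $\theta_n=0$ for $n\ge 1$, which the paper solves coefficient-wise for $\tilde{\psi}=\varphi\psi$ rather than as a functional equation, while your admissibility checks, Smirnov-class justification that $g\in\mathbb{H}^2$, and uniqueness remark make explicit what the paper leaves implicit.
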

\smallskip

In the special case in which market demand is i.i.d.\ (e.g., $\psi(z)\equiv \mathds{1}$), the retailer's myopic policy
reduces to $\varphi^{\MP}(z)\equiv \mathds{1}$, implying $O_t=D_t$ for all $t$. Hence, the retailer introduces no
delay in its response to demand, passing demand immediately to the manufacturer.\smallskip

\begin{example}
To illustrate the delayed response to demand embedded in the myopic policy, let us consider two canonical
examples of demand processes: (i) an AR(1) demand with $z$-transform 
$\psi_{\ti{AR}}(z)=\frac{1-\theta}{1-\theta\,z}$ for some $\theta\in(-1,1)$, and
(ii) an MA(1) demand with $z$-transform $\psi_{\ti{MA}}(z)=\psi_0+(1-\psi_0)\,z$ for some $\psi_0>0.5$
to ensure invertibility (see \cref{dfn:invertible}). It follows from \cref{prop:myopic} that the corresponding
myopic policy in each case is given by
\[
\varphi_{\ti{AR}}^{\MP}(z)=(1+\theta)-\theta\,z
\qquad \mbox{and}\qquad
\varphi_{\ti{MA}}^{\MP}(z)=\frac{1}{\psi_0+(1-\psi_0)\,z}.
\]
In other words, the myopic policy associated with an AR(1) demand is an MA(1), and the myopic policy
associated with an MA(1) demand is an AR(1). The corresponding impulse responses are depicted in
\cref{fig:impulse_responses_comparison}.
\input{Figures/Fig_Ex1_AR_MA}

In these examples, the myopic policy satisfies $\varphi^{\MP}_0>1$, meaning that the demand observed in
the current period contributes more than one-for-one to the current order quantity. This occurs when
$\theta>0$ in the AR(1) demand case, and whenever $\psi_0<1$ in the MA(1) demand case.
\end{example}\smallskip

{\bfseries\boldmath -) Small-$\kappa$ regime ($\kappa \downarrow 0$):} In this regime, the manufacturer’s cost dominates the supply-chain objective. Consequently, the limiting optimal policy $\varphi^*$ is characterized by minimizing the manufacturer’s root MSFE $\sigma_{\M}(\varphi;\psi)$ over admissible policies $\varphi\in\Ad$. We first note that $\sigma_{\M}(\varphi;\psi)>0$ for every admissible policy $\varphi\in\Ad(\psi)$. \smallskip

\begin{lemma}\label{lem:positive_sigma_m}
For every policy $\varphi\in\Ad(\psi)$, the manufacturer's root MSFE is strictly positive, $\sigma_{\M}(\varphi;\psi)>0$. Moreover, fix $q\in\mathbb{N}_0$. If $\varphi\in\Ad_q(\psi)$, so that $\varphi(z)=\sum_{n=0}^{q}\varphi_n z^n$, then
\[
\sigma_{\M}(\varphi;\psi)
\geq
|\psi(0)|\max_{0\leq n\leq q}
\left\{\frac{|\varphi_n|}{\binom{q}{n}}\right\}
\geq
\frac{|\psi(0)|}{2^q}.
\]
\end{lemma}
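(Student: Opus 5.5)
The strategy is to use the inner--outer factorization of \cref{lem:spectral} together with Kolmogorov's formula \eqref{eq_SigmaS_z}, and then exploit the fact that a polynomial's outer factor can be written down explicitly through its zeros.

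Strict positivity comes first. Let $\varphi\in\Ad(\psi)$. Then $\varphi(1)=1$, and finiteness of $\sigma_{\I}$ in \eqref{eq_SigmaI_z} forces $\varphi\in\ell^2$, so $\varphi\in\mathbb{H}^2$ and $\varphi\not\equiv 0$. A nonzero $\mathbb{H}^2$ function satisfies $\log|\varphi(e^{-i\lambda})|\in L^1(\mathbb{T})$, so the integral in \eqref{eq_SigmaS_z} is finite. Since $\psi(0)\neq 0$ (it is the Wold leading coefficient; alternatively \cref{assm1} with outerness gives $|\psi(0)|>0$), we get $\sigma^2_{\M}=|\psi(0)|^2\exp(\text{finite})>0$.

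For the quantitative bound, fix $\varphi\in\Ad_q(\psi)$ of exact degree $p\le q$ with $\varphi_p\neq0$, and factor it over its zeros:
\[
\varphi(z)=\varphi_p\prod_{j=1}^{p}(z-z_j).
\]
By Jensen's formula, or equivalently by taking the outer factor explicitly (reflecting zeros inside the disc to $1/\bar z_j$), we have
\[
\exp\Big(\frac{1}{2\pi}\int\log|\varphi|\Big)=|\varphi_p|\prod_{j}\max(1,|z_j|),
\]
which equals $|\mathcal Q(0)|$.

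Next we write each coefficient through Vieta's formulas: $\varphi_n=\varphi_p\,(-1)^{p-n}e_{p-n}(z_1,\dots,z_p)$. Each elementary symmetric polynomial $e_k$ is a sum of $\binom{p}{k}$ monomials. Each monomial has modulus at most $\prod_j\max(1,|z_j|)$, because a product of a subset of the $|z_j|$ is bounded by the product of $\max(1,|z_j|)$ over all $j$. This gives
\[
|\varphi_n|\le\binom{p}{p-n}|\mathcal Q(0)|=\binom{p}{n}|\mathcal Q(0)|\le\binom{q}{n}|\mathcal Q(0)|.
\]
The last inequality holds for $n\le p$; for $n>p$ we have $\varphi_n=0$ and the bound is trivial. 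Multiplying by $|\psi(0)|$ and using $\sigma_{\M}=|\psi(0)||\mathcal Q(0)|$ yields the first inequality.

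For the second inequality, we use $1=\varphi(1)=\sum_n\varphi_n\le\sum_n|\varphi_n|$. Setting $M=\max_n|\varphi_n|/\binom{q}{n}$, this gives $1\le M\sum_n\binom{q}{n}=M2^q$, hence $M\ge 2^{-q}$.

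The main point needing care is matching $|\mathcal Q(0)|$ for a polynomial with $|\varphi_p|\prod\max(1,|z_j|)$. This requires handling zeros on $\mathbb{T}$, where $\log|z-z_j|$ is still integrable with integral $0$, and being consistent about the convention $e^{-i\lambda}$ versus $e^{i\lambda}$. Both are harmless, since Jensen's formula applies to any polynomial.
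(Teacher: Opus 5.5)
Your proof is correct and follows essentially the same route as the paper: Jensen's formula identifies $\sigma_{\M}(\varphi;\psi)/|\psi(0)|$ with the Mahler measure $|\varphi_p|\prod_j\max(1,|z_j|)$, the coefficient bound $|\varphi_n|\le\binom{q}{n}$ times that measure gives the first inequality, and summing with $\varphi(1)=1$ gives the second. The differences are minor. You prove Mahler's coefficient inequality directly via Vieta's formulas instead of citing it. Working with exact degree $p\le q$ and the form $|\varphi_p|\prod_j\max(1,|z_j|)$ also avoids the paper's implicit assumptions that $\varphi_q\neq 0$ and $\varphi(0)\neq 0$.
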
\smallskip

While $\sigma_{\M}(\varphi;\psi)$ is strictly positive for each fixed admissible $\varphi$, its infimum over $\Ad(\psi)$ is nevertheless zero. To see this, consider the family of simple moving-average policies
\begin{equation}\label{eq:class_MA}
\Ad^{\ti{MA}}
:=\Bigg\{\varphi_{\tim N}^{\ti{MA}}\in\Ad\;\colon\;
\varphi_{\tim N}^{\ti{MA}}(z)=\frac{1}{N+1}\sum_{n=0}^N z^n
=\frac{1-z^{N+1}}{(N+1)(1-z)},
\quad N\in\mathbb N_0
\Bigg\}.
\end{equation}
Each $\varphi_{\tim N}^{\ti{MA}}$ is invertible (outer), and therefore \cref{cor:invertible} yields
\[
\sigma_{\M}\!\bigl(\varphi_{\tim N}^{\ti{MA}};\psi\bigr)
=|\psi(0)|\,\bigl|\varphi_{\tim N}^{\ti{MA}}(0)\bigr|
=\frac{|\psi(0)|}{N+1}\xrightarrow[N\to\infty]{}0.
\]

However, the convergence $\sigma_{\ti M}(\varphi_k;\psi)\to 0$ does \emph{not} by itself imply that $\varphi_k$ becomes asymptotically optimal as $\kappa$ becomes arbitrarily small. The reason is that making the manufacturer's forecast error very small may require ordering policies that create large swings in the retailer's inventory position. In other words, improving forecastability on the manufacturer side can come at the expense of much greater inventory volatility on the retailer side. The next proposition formalizes this tension and shows that the limit $\kappa\downarrow 0$ is singular in precisely this sense.\smallskip

\begin{proposition}\label{prop:lim_kappa_0}
Let $\{\varphi_k\}_{k\in\mathbb N}\subset \Ad(\psi)$ be a sequence of admissible policies such that the manufacturer’s root MSFE along this sequence satisfies $\sigma_{\ti M}(\varphi_k;\psi)\downarrow 0$ as $k\to\infty$. Then the retailer’s inventory volatility satisfies
\[
\sigma_{\ti I}(\varphi_k;\psi)\to\infty
\qquad\text{as }k\to\infty.
\]
\end{proposition}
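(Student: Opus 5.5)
We argue by contradiction. Suppose, along a subsequence, $\sigma_{\ti I}(\varphi_k;\psi)\le K<\infty$ while $\sigma_{\ti M}(\varphi_k;\psi)\downarrow 0$. The plan is to show that a uniform bound on inventory volatility forces a uniform positive lower bound on Kolmogorov's quantity $\int\log|\varphi_k|$. That contradicts \eqref{eq_SigmaS_z}, because $\sigma_{\ti M}\to 0$ means $\frac{1}{2\pi}\int\log|\varphi_k(e^{-i\lambda})|^2\,\D\lambda\to-\infty$ (recall $\psi(0)\neq 0$ by \cref{assm1}, since $|\psi(0)|$ is the geometric mean of $|\psi|$ on $\mathbb T$, which is bounded below by $\psi_{\inf}$).

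\textbf{Step 1 (translate the inventory bound).} Write $g_k(z):=(z\varphi_k(z)-1)/(1-z)$. By \cref{lem:sigma_I} and \cref{assm1},
\[
\|g_k\|_{H^2}^2\le \sigma^2_{\ti I}(\varphi_k;\psi)/\psi_{\inf}^2\le K^2/\psi_{\inf}^2 .
\]
Since $\varphi_k(1)=1$, $g_k$ is analytic in the disk and lies in $H^2$. Solving gives
\[
\varphi_k(z)=\frac{1+(1-z)g_k(z)}{z}.
\]
Thus $z\varphi_k(z)=1+(1-z)g_k(z)=:h_k(z)$, with $h_k(0)=1+g_k(0)$. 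Note also $\log|z\varphi_k|=\log|\varphi_k|$ on $\mathbb T$.

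\textbf{Step 2 (lower bound on $\int\log|h_k|$).} We need $\int_{\mathbb T}\log|h_k|$ bounded below uniformly. Jensen's inequality alone gives $\int\log|h_k|\ge\log|h_k(0)|$ plus an inner-factor correction, and that is not directly useful because $h_k(0)=0$ (as $h_k(z)=z\varphi_k(z)$). So work instead with the outer part. $\log\sigma_{\ti M}$ equals $\log|\psi(0)|+\log|\mathcal Q_k(0)|$, where $\mathcal Q_k$ is the outer factor of $\varphi_k$, and $|\mathcal Q_k(0)|=\exp\frac1{2\pi}\int\log|\varphi_k|$. The key point is that $|\varphi_k|$ cannot be small on most of the circle while $(1-z)g_k$ stays bounded in $H^2$:
\begin{itemize}
\item On the circle, $|h_k|\ge 1-|1-z||g_k|$.
\item Near $z=1$ (an arc $|\lambda|<\delta$) we have $|1-z|\le\delta$, so $|h_k|\ge 1/2$ except on the set where $|g_k|>1/(2\delta)$.
\item By Chebyshev, that exceptional set has measure at most $4\delta^2K^2/\psi_{\inf}^2$, which is small relative to the arc length $2\delta$ when $\delta$ is small.
\end{itemize}
Hence $\log|\varphi_k|\ge-\log 2$ on a set $E_k$ of measure at least $c>0$, uniformly in $k$.

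\textbf{Step 3 (control the remaining set, the main obstacle).} Positivity of $\log|\varphi_k|$ on part of the circle does not prevent $\int\log|\varphi_k|\to-\infty$, because $\log|\varphi_k|$ can be very negative elsewhere. The first attempt would be harmonic-measure / two-constants reasoning: $|\varphi_k|\ge 1/2$ on $E_k$ together with $\|\varphi_k\|_{H^2}$ bounded (from Step 1: $\|h_k\|_{H^2}\le 1+2K/\psi_{\inf}$) does not by itself bound $\int\log|\varphi_k|$ below. So instead derive the contradiction directly from $\sigma_{\ti M}\to 0$.

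Use the inner--outer decomposition $\varphi_k=\mathcal Q_k\mathcal I_k$, where $|\mathcal Q_k|=|\varphi_k|$ on $\mathbb T$ and $|\mathcal Q_k(0)|=\sigma_{\ti M}(\varphi_k)/|\psi(0)|\to0$. Then:
\begin{itemize}
\item $\mathcal Q_k\in H^2$ with uniformly bounded norm, so $\mathcal Q_k$ is a normal family.
\item $\log|\mathcal Q_k|$ is the Poisson integral of its boundary values. With $\int(\log|\mathcal Q_k|)^+\le C$ from the $H^2$ bound, $\int\log|\varphi_k|\to-\infty$ forces $\log|\mathcal Q_k(r)|\to-\infty$ for every $r\in(0,1)$, by Harnack-type comparison of Poisson kernels. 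Hence $\mathcal Q_k\to0$ locally uniformly.
\item The constraint $\varphi_k(1)=1$, made robust through the $H^2$ control of $g_k$, should then yield the contradiction. Concretely, $\mathcal Q_k\mathcal I_k=\varphi_k=(1+(1-z)g_k)/z$. Evaluating at $z=r$ near $1$ gives $|\varphi_k(r)|\ge(1-(1-r)|g_k(r)|)/r$, and $|g_k(r)|\le\|g_k\|_{H^2}(1-r^2)^{-1/2}$. So $|\varphi_k(r)|\ge 1-O(\sqrt{1-r})\ge1/2$ for a fixed $r$ close to $1$, uniformly in $k$.
\item This contradicts $|\varphi_k(r)|\le|\mathcal Q_k(r)|\to0$.
\end{itemize}

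This last pointwise-evaluation argument is the crux. It bypasses Step 2 entirely, so the final proof would streamline to Step 1 followed by this interior-point comparison.
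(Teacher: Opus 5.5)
Your argument is correct and is essentially the paper's proof. You use the $H^2$ point-evaluation bound on $g_k$ at a fixed real $r$ near $1$ to get $|r\varphi_k(r)|\ge 1/2$ uniformly in $k$, and since the inner factor has modulus at most one, the outer factor is at least that large at $r$; a Harnack-type comparison then ties its value at $r$ to its value at $0$, i.e., to $\sigma_{\ti M}$. The differences are cosmetic. You run the Harnack step through two-sided Poisson-kernel bounds, with $\int\log^+|\varphi_k|$ controlled by the $H^2$ norm, and argue by contradiction, whereas the paper applies Harnack's inequality to $\log(C_M/|\mathcal O_k|)$ on a smaller disk to get an explicit constant $c(M)$. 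As you yourself note, your Step~2 is unnecessary.
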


Proposition~\ref{prop:lim_kappa_0} states that any sequence of admissible policies that drives the manufacturer’s root MSFE to zero must necessarily incur unbounded inventory volatility for the retailer. In particular, although the \emph{infimum} of $\sigma_{\M}$ over $\Ad(\psi)$ equals zero, it cannot be approached while keeping $\sigma_{\I}$ bounded. Equivalently, the limiting problem obtained by setting $\kappa=0$ has no optimizer in $\Ad$: minimizing sequences ``escape'' by making the retailer’s inventory increasingly volatile.  As a concrete illustration, for the moving-average class \eqref{eq:class_MA} one can compute directly from \eqref{eq_SigmaI_z} that for an i.i.d. demand $\psi\equiv \mathds{1}$
\[
\sigma_{\I}\!\bigl(\varphi_{\tim N}^{\ti{MA}};\mathds{1}\bigr)
= \sqrt{\frac{(N+2)\,(2\,N+3)}{6\,(N+1)}}\xrightarrow[N\to\infty]{}\infty.
\]

\cref{prop:lim_kappa_0} reveals that the asymptotic regime $\kappa\downarrow 0$ requires a careful characterization of the trade-off between $\sigma_{\ti M}(\varphi;\psi)$ and $\sigma_{\ti I}(\varphi;\psi)$. In particular, we will see that the class of simple MA policies $\Ad^{\ti{MA}}$, while satisfying $\lim_{N\to\infty}\sigma_{\ti M}\!\bigl(\varphi_{N}^{\ti{MA}};\psi\bigr)=0$, can nonetheless exhibit cost performance that is arbitrarily worse than that of an optimal policy as $\kappa\downarrow 0$. We investigate this trade-off further in \cref{sec:ordersmooting}, where we introduce our proposed class of smoothing policies and show, in fact, that (see \cref{prop:relerror})
\[
\lim_{\kappa \downarrow 0}\ \inf_{N \in \mathbb{N}_0}\ \frac{\C{C}(\varphi_{N}^{\ti{MA}};\psi;\kappa)}{\C{C}^*(\psi;\kappa)}=\infty.
\]
The proof of this result relies on a lower bound for $\C{C}^*(\psi;\kappa)$, which we derive next.

\smallskip

\subsection{Lower Bounds}\label{sec:lowerbound}
We conclude this section by deriving two complementary lower bounds on the optimal supply-chain cost ${\cal C}^*(\kappa;\psi)$: one that leverages the i.i.d.\ demand benchmark of \citet{caldentey2024information}, and another that applies an information relaxation that enlarges the manufacturer’s information set.

\subsubsection{Bounds Based on i.i.d. Demand}\label{sec:lowerboudiid}

Our first bound relies on \cref{prop:solnosharing} to derive a lower bound on ${\cal C}^*(\kappa;\psi)$ for a general market-demand process $\psi(z)$, expressed in terms of ${\cal C}^*_{\ti{IID}}(\kappa)$, the optimal cost in the special case of i.i.d.\ market demand (e.g., $\psi\equiv \mathds{1}$). 
\smallskip

\begin{proposition}[Bounds from the i.i.d. benchmark]\label{prop:boundsiid}
Let $\psi(z)$ be the market-demand transfer function and define
$\psi_{\inf}=\inf_{\zeta \in \mathbb{T}}|\psi(\zeta)|$. Then, 
\[
{\cal C}^*(\kappa;\psi) \geq |\psi(0)|\,{\cal C}^*_{\ti{IID}}(\kappa_{\inf}),\qquad \mbox{where ~$\kappa_{\inf}:=\dfrac{\kappa\,\psi_{\inf}}{|\psi(0)|}$.}
\]
\end{proposition}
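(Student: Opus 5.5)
The plan is a direct pointwise comparison: fix an arbitrary admissible $\varphi\in\Ad(\psi)$ and bound each of the two terms of ${\cal C}(\varphi;\psi;\kappa)$ from below by the corresponding i.i.d.\ quantity for a suitably chosen policy in $\Ad(\mathds{1})$. Taking the infimum over $\varphi$ and invoking the definition of ${\cal C}^*_{\ti{IID}}$ in \eqref{eq:SCCost_IID} will then give the claim.

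The inventory term is handled with \cref{lem:sigma_I}. Since $|\psi(e^{-i\lambda})|^2\ge\psi_{\inf}^2$ on $\mathbb{T}$, formula \eqref{eq_SigmaI_z} gives $\sigma_{\I}(\varphi;\psi)\ge \psi_{\inf}\,\sigma_{\I}(\varphi;\mathds{1})$. In particular $\sigma_{\I}(\varphi;\mathds{1})<\infty$. Together with $\varphi(1)=1$, this shows $\varphi\in\Ad(\mathds{1})$, so $\varphi$ is a feasible policy for the i.i.d.\ problem.

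The forecast-error term is handled with \cref{lem:spectral}. The expression \eqref{eq_SigmaS_z} factors as $\sigma_{\M}(\varphi;\psi)=|\psi(0)|\,\exp\bigl(\frac{1}{4\pi}\int\log|\varphi|^2\bigr)$, and the second factor is exactly $\sigma_{\M}(\varphi;\mathds{1})$ because $\psi\equiv\mathds{1}$ has $\psi(0)=1$. So the identity $\sigma_{\M}(\varphi;\psi)=|\psi(0)|\,\sigma_{\M}(\varphi;\mathds{1})$ holds exactly.

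Combining the two bounds,
\[
{\cal C}(\varphi;\psi;\kappa)\ \ge\ \kappa\psi_{\inf}\,\sigma_{\I}(\varphi;\mathds{1})+|\psi(0)|\,\sigma_{\M}(\varphi;\mathds{1})
= |\psi(0)|\bigl(\kappa_{\inf}\,\sigma_{\I}(\varphi;\mathds{1})+\sigma_{\M}(\varphi;\mathds{1})\bigr)\ \ge\ |\psi(0)|\,{\cal C}^*_{\ti{IID}}(\kappa_{\inf}).
\]
Taking the infimum over $\varphi\in\Ad(\psi)$ yields the result.

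The only delicate point is the interpretation of \eqref{eq_SigmaS_z} when $\varphi$ is merely in $\ell^2$, so that $\log|\varphi|$ might fail to be integrable. This is the main thing to check, and it is settled by \cref{lem:positive_sigma_m}: $\sigma_{\M}>0$, so the Kolmogorov formula gives a finite positive value for both $\psi$ and $\mathds{1}$, and the factorization above is legitimate. One also needs $|\psi(0)|>0$ for $\kappa_{\inf}$ to be well defined. This follows from the Wold normalization; if $\psi(0)=0$ the right-hand side of the claim is trivially zero.
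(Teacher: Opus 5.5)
Your proposal is correct and is essentially the paper's own proof. You bound the inventory term below using $|\psi|\ge\psi_{\inf}$ on $\mathbb{T}$, use the exact identity $\sigma_{\M}(\varphi;\psi)=|\psi(0)|\,\sigma_{\M}(\varphi;\mathds{1})$ from \cref{lem:spectral}, factor out $|\psi(0)|$ to obtain $|\psi(0)|\,{\cal C}_{\iid}(\varphi;\kappa_{\inf})$, and take the infimum over $\varphi$.

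Your explicit check that $\varphi\in\Ad(\mathds{1})$ fills in a step the paper leaves implicit; note that it uses $\psi_{\inf}>0$ from \cref{assm1}. The aside on $\psi(0)=0$ is moot because $\psi$ is outer, and in that case $\kappa_{\inf}$ would be undefined rather than the bound being trivially zero.
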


By construction, the bound is trivially tight when $\psi\equiv \mathds{1}$. Moreover, $\kappa_{\inf}$ is well defined since $\psi(z)$ is outer and therefore has no zeros inside the unit circle, implying in particular that $\psi(0)\neq 0$.

\subsubsection{An Information-Relaxation Lower Bound}\label{sec:lowerbound_inf}

We obtain our second class of lower bounds on the supply chain cost ${\cal C}(\varphi;\psi;\kappa)=\kappa\,\sigma_{\ti I}(\varphi;\psi)+\sigma_{\ti M}(\varphi;\psi)$ by applying an information relaxation that enlarges the manufacturer’s information set ${\cal F}^{\ti M}_t=\sigma(O_\tau\colon \tau \leq t)$ with additional information about the market demand process beyond what is contained in the retailer’s order history. This relaxation can only improve the manufacturer’s forecasting problem and therefore weakly reduces the associated forecast error, i.e., it lowers $\sigma_{\ti M}(\varphi;\psi)$. Importantly, the retailer’s ordering policy and inventory dynamics are unchanged, so the inventory variability term $\sigma_{\ti I}(\varphi;\psi)$ is unaffected. Consequently, evaluating ${\cal C}(\varphi;\psi;\kappa)$ under the relaxed information set yields a cost no larger than the original one, and thus provides a valid lower bound on the achievable supply chain cost.\smallskip

We consider the extreme case of {\em full information}, in which we replace ${\cal F}^{\ti M}_t$ by 
${\cal F}^{\ti F}_t:=\sigma\!\bigl(D_\tau, O_\tau:\tau\le t\bigr)$, that is, in addition to observing the retailer’s orders, the manufacturer also observes the history of market demand. Under this information relaxation, the manufacturer has the same information as the retailer and can therefore recover the demand shocks $\{\eps_t\}$. Thus, for a given policy $\varphi(z)$ and using the representation $O_t=d+\tpsi({\cal B})\,\eps_t$, where $\tpsi(z)=\varphi(z)\,\psi(z)$, the manufacturer’s one-step-ahead mean forecast $m_t(\varphi;\psi)=\e[O_{t+1}|{\cal F}^{\ti F}_t]$ and MSFE $\sigma^2_{\ti M}(\varphi;\psi)=\var[O_{t+1}|{\cal F}^{\ti F}_t]$ are given by
$$m_t(\varphi;\psi)=\e\left[d+\sum_{n=0}^\infty \tpsi_n \eps_{t+1-n} \big| {\cal F}^{\ti F}_t\right]= d+\sum_{n=1}^\infty \tpsi_n \eps_{t+1-n}\qquad \mbox{and}\qquad \sigma^2_{\ti M}(\varphi;\psi)=|\psi(0)|^2\,|\varphi(0)|^2.$$

Because ${\cal F}^{\ti M}_t \subseteq {\cal F}^{\ti F}_t$, enlarging the manufacturer's information set can only weakly reduce the one-step-ahead MSFE. Since $\sigma_{\ti I}(\varphi;\psi)$ remains unchanged, this immediately yields a lower bound on the supply chain cost:
$${\cal C}(\varphi;\psi;\kappa) \geq {\cal C}_{\ti F}(\varphi;\psi;\kappa):=\kappa\,\sigma_{\ti I}(\varphi;\psi)
+|\psi(0)|\,|\varphi(0)|.$$
The following result characterizes the infimum ${\cal C}^*_{\ti F}(\psi;\kappa)$ over the class of admissible policies.\medskip

\begin{proposition}[Full Information Lower Bound]\label{prop:lowerbound}
Let $\displaystyle \tpsi_0^*(\kappa)\in \argmin_{\tpsi_0}
\Big\{\kappa\,\sqrt{\psi(0)^2\!+\!\bigl(\rho-\psi(0)-\psi'(0)\bigr)^2}\!+\!|\tpsi_0|\Big\}$.
Then,
\[
{\cal C}^*_{\ti F}(\psi;\kappa):=\inf_{\varphi \in \Ad} {\cal C}_{\ti F}(\varphi;\psi;\kappa)
=\kappa\,\sqrt{\psi(0)^2+\bigl(\tpsi_0^*(\kappa)-\psi(0)-\psi'(0)\bigr)^2}+|\tpsi_0^*(\kappa)|.
\]
Furthermore, the infimum is achieved by the policy
\[
\varphi^{\ti F}(z)=
\frac{\psi(z)-(1-z^2)\,\psi(0)+(1-z)\,z\,\bigl(\tpsi_0^*(\kappa)-\psi'(0)\bigr)}
{z\,\psi(z)}.
\]
\end{proposition}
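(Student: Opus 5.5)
The plan is to reparametrize the problem in terms of the order transfer function $\tpsi(z)=\varphi(z)\,\psi(z)$, and then in terms of the ``inventory transfer function''
\[
g(z):=\frac{\psi(z)-z\,\tpsi(z)}{1-z}=\psi(z)\,\frac{1-z\,\varphi(z)}{1-z}.
\]
By \cref{lem:sigma_I}, $\sigma_{\I}^2(\varphi;\psi)=\frac{1}{2\pi}\int_{-\pi}^{\pi}|g(e^{-i\lambda})|^2\,\D\lambda$. For $\varphi\in\Ad(\psi)$ this integral is finite, so $g$ lies in $\mathbb{H}^2$. By Parseval, $\sigma_{\I}^2(\varphi;\psi)=\sum_{n\ge0}g_n^2$ (real coefficients). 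The condition $\varphi(1)=1$ is exactly what makes the numerator vanish at $z=1$, so that $g$ is analytic rather than having a pole there. The second term of ${\cal C}_{\ti F}$ is $|\psi(0)|\,|\varphi(0)|=|\tpsi_0|$.

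\textbf{Lower bound.} Compare coefficients in the identity $(1-z)\,g(z)=\psi(z)-z\,\tpsi(z)$:
\begin{itemize}
\item at $z^0$: $g_0=\psi(0)$;
\item at $z^1$: $g_1-g_0=\psi'(0)-\tpsi_0$, so $g_1=\psi(0)+\psi'(0)-\tpsi_0$.
\end{itemize}
Thus the first two coefficients of $g$ are pinned down by $\psi$ and the single scalar $\rho:=\tpsi_0$, while the remaining ones are free. Dropping them gives
\[
{\cal C}_{\ti F}(\varphi;\psi;\kappa)\ \ge\ \kappa\sqrt{\psi(0)^2+\bigl(\rho-\psi(0)-\psi'(0)\bigr)^2}+|\rho|,
\]
with equality iff $g_n=0$ for all $n\ge2$. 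Minimizing the right-hand side over $\rho\in\mathbb{R}$ gives the claimed lower bound. The right-hand side is a continuous, convex, coercive function of $\rho$ (a $\kappa$-weighted Euclidean norm plus $|\rho|$), so a minimizer $\tpsi_0^*(\kappa)$ exists.

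\textbf{Attainment.} Set $\rho=\tpsi_0^*(\kappa)$ and choose the polynomial
\[
g(z)=\psi(0)+\bigl(\psi(0)+\psi'(0)-\rho\bigr)z .
\]
Then define $\tpsi(z)=\bigl(\psi(z)-(1-z)g(z)\bigr)/z$; this is analytic because $\psi(0)-g_0=0$. Expanding,
\[
(1-z)g(z)=(1-z^2)\,\psi(0)+(\psi'(0)-\rho)\,z(1-z).
\]
Dividing $\tpsi$ by $\psi$ then reproduces exactly the stated $\varphi^{\ti F}$. The constant coefficient of $\tpsi$ is the $z$-coefficient of $\psi(z)-(1-z)g(z)$, namely $\psi'(0)-(g_1-g_0)=\rho$, as required. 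By construction $\sigma_{\I}^2=g_0^2+g_1^2<\infty$, so the lower bound is attained.

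It remains to verify that $\varphi^{\ti F}\in\Ad(\psi)$. First, $\varphi^{\ti F}\in\ell^2$: \cref{assm1} makes $1/\psi$ bounded on $\mathbb{T}$, and $\psi$ is outer, so $\varphi^{\ti F}=\tpsi/\psi\in\mathbb{H}^2$. Second, $\varphi^{\ti F}(1)=1$: at $z=1$ the numerator reduces to $\psi(1)$, and $\psi(1)\neq0$ by \cref{assm1}.

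I expect the main obstacle to be this last point, together with justifying the $\mathbb{H}^2$ identification of $\sigma_{\I}$ for a general admissible $\varphi$, where $\psi$ need not extend analytically across $\mathbb{T}$. I would first try to handle $\varphi(1)=1$ in the same boundary/limit sense the paper uses to define $\Ad(\psi)$. Finiteness of the integral in \cref{lem:sigma_I} forces $g$ to have square-integrable boundary values. Since $g$ is the product of $\psi$ with an $\mathbb{H}^2$-type quotient, and $\psi$ is outer and bounded above and below on $\mathbb{T}$, this should imply $g\in\mathbb{H}^2$ (Smirnov class with $L^2$ boundary values). That is what legitimizes the coefficient comparison used in the lower bound.
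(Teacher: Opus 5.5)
Your proposal is correct and follows essentially the same route as the paper's proof. Your inventory transfer function $g$ is just $-\theta$ from the proof of \cref{prop:myopic}, and $g_0,g_1$ are pinned down by $\psi$ and $\tpsi_0$ exactly as in the paper. The paper zeroes out the remaining coefficients by choosing $\tpsi_1=\psi_0+\psi_1+\psi_2-\tpsi_0$ and $\tpsi_n=\psi_{n+1}$ for $n\ge 2$, which is equivalent to your polynomial choice of $g$, and then reduces to the same one-dimensional minimization over $\tpsi_0$.

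Your extra checks are not done in the paper and only make the argument more complete: that $g\in\mathbb{H}^2$ via the Smirnov class, and that $\varphi^{\ti F}$ is admissible via $1/\psi\in\mathbb{H}^\infty$. For $\varphi^{\ti F}(1)=1$, a slightly cleaner argument is the identity $z\varphi^{\ti F}(z)-1=-(1-z)\,g(z)/\psi(z)$ with $g/\psi$ bounded, which gives $\varphi^{\ti F}(z)\to 1$ as $z\to 1$ without needing $\psi(1)$ to be defined.
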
\medskip

It follows from the preceding discussion that the policy $\varphi^{\ti F}(z)$ in \cref{prop:lowerbound} would be optimal for the original optimization problem \eqref{eq:SCCost} if it were invertible. In that case, the manufacturer could recover the market-demand history $\{D_t\}$ from the retailer’s orders, implying that ${\cal F}^{\ti M}_t={\cal F}^{\ti F}_t$. The following corollary is immediate.\medskip

\begin{corollary}
If the retailer's policy $\varphi^{\ti F}(z)$ is invertible (outer), then it is an optimal solution to \eqref{eq:SCCost} and ${\cal C}^*(\psi;\kappa)={\cal C}^*_{\ti F}(\psi;\kappa)$.
\end{corollary}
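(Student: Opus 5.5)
The plan is a sandwich argument: I will show that the full-information lower bound ${\cal C}^*_{\ti F}(\psi;\kappa)$ is attained by a feasible policy for the original problem \eqref{eq:SCCost} whenever $\varphi^{\ti F}$ is outer. Only three ingredients are needed: the pointwise inequality ${\cal C}(\varphi;\psi;\kappa)\ge {\cal C}_{\ti F}(\varphi;\psi;\kappa)$ from the information relaxation, \cref{prop:lowerbound}, and \cref{cor:invertible}.

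\emph{Lower bound.} The relaxation argument preceding \cref{prop:lowerbound} gives ${\cal C}(\varphi;\psi;\kappa)\ge {\cal C}_{\ti F}(\varphi;\psi;\kappa)$ for every $\varphi\in\Ad(\psi)$. The reason is that ${\cal F}^{\ti M}_t\subseteq{\cal F}^{\ti F}_t$, so the conditional variance of $O_{t+1}$ can only decrease when the manufacturer sees more, while $\sigma_{\ti I}$ does not depend on the manufacturer's information. Taking the infimum over $\Ad(\psi)$ on both sides yields ${\cal C}^*(\psi;\kappa)\ge {\cal C}^*_{\ti F}(\psi;\kappa)$.

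\emph{Attainment.} \cref{prop:lowerbound} states that the infimum defining ${\cal C}^*_{\ti F}$ is achieved by $\varphi^{\ti F}$. In particular $\varphi^{\ti F}\in\Ad(\psi)$, that is, $\varphi^{\ti F}(1)=1$ and $\sigma_{\ti I}(\varphi^{\ti F};\psi)<\infty$, and
\[
{\cal C}_{\ti F}(\varphi^{\ti F};\psi;\kappa)={\cal C}^*_{\ti F}(\psi;\kappa).
\]
If admissibility is not read off directly from that proposition, I would check it from the formula: $\varphi^{\ti F}(1)=\psi(1)/\psi(1)=1$, and finiteness of $\sigma_{\ti I}$ follows because $z\varphi^{\ti F}(z)-1$ has the factor $(1-z)$ cancelling the denominator in \eqref{eq_SigmaI_z}, together with \cref{assm1}.

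\emph{Equality under outerness.} Assume now that $\varphi^{\ti F}$ is outer. By \cref{cor:invertible},
\[
\sigma_{\ti M}(\varphi^{\ti F};\psi)=|\psi(0)|\,|\varphi^{\ti F}(0)|,
\]
which is exactly the full-information forecast-error term. Hence ${\cal C}(\varphi^{\ti F};\psi;\kappa)={\cal C}_{\ti F}(\varphi^{\ti F};\psi;\kappa)$. Chaining the inequalities gives
\[
{\cal C}^*(\psi;\kappa)\le {\cal C}(\varphi^{\ti F};\psi;\kappa)={\cal C}^*_{\ti F}(\psi;\kappa)\le {\cal C}^*(\psi;\kappa).
\]
So all terms are equal, and $\varphi^{\ti F}$ attains the infimum in \eqref{eq:SCCost}.

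There is no substantive obstacle. The only point needing care is to make sure that $\varphi^{\ti F}$ is a genuine element of $\Ad(\psi)$, so that it is feasible for the original problem and not just for the relaxed one. \cref{prop:lowerbound} supplies this, with the quick verification above as a backup.
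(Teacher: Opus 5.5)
Your proof is correct and follows the paper's argument. The full-information relaxation gives ${\cal C}^*(\psi;\kappa)\ge{\cal C}^*_{\ti F}(\psi;\kappa)$, and outerness of $\varphi^{\ti F}$ makes its true forecast error equal the relaxed one, so the bound is attained; the paper phrases this last step as ${\cal F}^{\ti M}_t={\cal F}^{\ti F}_t$ under invertibility, whereas you invoke \cref{cor:invertible} directly, which expresses the same fact.
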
\smallskip

Unfortunately, verifying whether $\varphi^{\ti F}(z)$ is outer is not straightforward. While outerness can be checked by locating the zeros of $\varphi^{\ti F}(z)$ and confirming that they all lie outside the unit disk, this is generally a hard task in practice. Alternatively, one may verify the {\em Jensen's condition}, namely,
\[
\log\bigl|\varphi^{\ti F}(0)\bigr|
=\frac{1}{2\pi}\int_{-\pi}^{\pi}\log\bigl|\varphi^{\ti F}(e^{-i\lambda})\bigr|\,\D\lambda,
\]
which (under the usual integrability conditions) characterizes outerness; see \cite{MartinezAvendanoRosenthal}. In applied settings, however, evaluating the boundary integral may itself be technically delicate: the integrand $\log\bigl|\varphi^{\ti F}(e^{-i\lambda})\bigr|$ can require regularization when $\varphi^{\ti F}$ has zeros on or near the unit circle.\smallskip

{}

Another implication of \cref{prop:lowerbound} is that the myopic policy in \cref{prop:myopic} is asymptotically optimal as $\kappa \uparrow \infty$. To see this, note that the full-information policy $\varphi^{\ti F}(z)$ can be rewritten in terms of the myopic order-up-to policy as
\[
\varphi^{\ti F}(z)=\varphi^{\ti{MP}}(z)+\frac{1-z}{\psi(z)}\big(\tpsi_0^*(\kappa)-\psi(0)-\psi'(0)\big).
\]

\begin{corollary}\label{cor:myopic_asym}
As $\kappa \uparrow \infty$, we have $\tpsi_0^*(\kappa)\to \psi(0)+\psi'(0)$, and hence $\varphi^{\ti F}(z)\to \varphi^{\ti{MP}}(z)$.
\end{corollary}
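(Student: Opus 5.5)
The plan is to obtain the convergence of $\tpsi_0^*(\kappa)$ from an elementary comparison of objective values in the scalar problem of \cref{prop:lowerbound}. The convergence of policies then follows from the decomposition $\varphi^{\ti F}=\varphi^{\ti{MP}}+\frac{1-z}{\psi(z)}\bigl(\tpsi_0^*(\kappa)-\psi(0)-\psi'(0)\bigr)$ displayed just before the corollary.

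Write $a:=\psi(0)$ and $c:=\psi(0)+\psi'(0)$, and let
\[
f_\kappa(\rho):=\kappa\sqrt{a^2+(\rho-c)^2}+|\rho|
\]
denote the scalar objective, with minimizer $\rho_\kappa:=\tpsi_0^*(\kappa)$.

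\textbf{Step 1: the minimizer converges.}
\begin{itemize}
\item Evaluating $f_\kappa$ at the feasible point $\rho=c$ gives the upper bound $f_\kappa(\rho_\kappa)\le f_\kappa(c)=\kappa|a|+|c|$.
\item Dropping the nonnegative term $|\rho_\kappa|$ gives the lower bound $f_\kappa(\rho_\kappa)\ge \kappa\sqrt{a^2+(\rho_\kappa-c)^2}$.
\item Combining the two and dividing by $\kappa$ yields $\sqrt{a^2+(\rho_\kappa-c)^2}\le |a|+|c|/\kappa$.
\item Squaring both sides gives
\[
(\rho_\kappa-c)^2\le \frac{2|a|\,|c|}{\kappa}+\frac{c^2}{\kappa^2}\xrightarrow[\kappa\uparrow\infty]{}0.
\]
\end{itemize}
Hence $\tpsi_0^*(\kappa)\to\psi(0)+\psi'(0)$, with an explicit rate $O(\kappa^{-1/2})$. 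The bound does not use $a\neq 0$, although $a\ne0$ holds anyway since $\psi$ is outer.

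\textbf{Step 2: the policy converges.} By the decomposition above, $\varphi^{\ti F}-\varphi^{\ti{MP}}=(\rho_\kappa-c)\,\frac{1-z}{\psi(z)}$. The main technical point is to make sense of the limit in an appropriate norm, which requires controlling $1/\psi$.
\begin{itemize}
\item Since $\psi$ is outer, $1/\psi$ is analytic in the open unit disk.
\item \Cref{assm1} gives $|1/\psi|\le 1/\psi_{\inf}$ on $\mathbb{T}$.
\item Because $\psi$ is outer with boundary modulus bounded below, $1/\psi$ is itself an outer function with bounded boundary values. It therefore lies in $H^\infty$ with $\|1/\psi\|_\infty\le 1/\psi_{\inf}$.
\end{itemize}
It follows that
\[
\bigl\|\varphi^{\ti F}-\varphi^{\ti{MP}}\bigr\|_{H^\infty}\le \frac{2\,|\rho_\kappa-c|}{\psi_{\inf}}\to0.
\]
This gives convergence in $H^\infty$, hence in $H^2=\ell^2$, and therefore coefficientwise convergence of the impulse responses.
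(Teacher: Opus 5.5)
Your proof is correct and takes essentially the paper's route. The paper gives no separate proof and treats the corollary as immediate from the decomposition $\varphi^{\ti F}=\varphi^{\ti{MP}}+\frac{1-z}{\psi(z)}\bigl(\tpsi_0^*(\kappa)-\psi(0)-\psi'(0)\bigr)$ together with the convergence of the scalar minimizer; you make both steps explicit, through the objective-value comparison giving $\bigl(\tpsi_0^*(\kappa)-\psi(0)-\psi'(0)\bigr)^2=O(1/\kappa)$ and the bound $\|1/\psi\|_{H^\infty}\le 1/\psi_{\inf}$ (valid since $\psi$ is outer), which also pins down an $H^\infty$ mode of convergence that the paper leaves unstated.
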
\smallskip

This confirms the intuition that the myopic order-up-to policy should be nearly optimal  when the retailer’s inventory-management costs dominate overall supply costs.


%% file: Figures/Fig_IR_IID_Case.tex

\begin{figure}[h!]
\begin{center}
\begin{tikzpicture}
\begin{groupplot}[
    group style={group size=3 by 1, horizontal sep=0.3cm},
    ybar,
    /pgf/bar width=5pt,
    width=0.374\textwidth,
    height=5.8cm,
    ymin=-0.25, ymax=0.95,
    enlarge x limits=0.02,
    enlarge y limits={upper, value=0.05},
    grid=major,
    xlabel={Delay $n$},
    ticklabel style={font=\small},
    label style={font=\small},
    title style={font=\small, yshift=-1.2ex}
]

\nextgroupplot[
    xmin=-0.5, xmax=10.5,
    xtick={0,2,4,6,8,10},
    ylabel={$\varphi_n$},
    ylabel style={rotate=-90},
    title={\bfseries \sffamily $\kappa=5$}
]
\addplot+[draw=black]
coordinates {
    (0, 0.795876)
    (1, 0.204124)
    (2, 0.000000)
    (3, 0.000000)
    (4, 0.000000)
    (5, 0.000000)
    (6, 0.000000)
    (7, 0.000000)
    (8, 0.000000)
    (9, 0.000000)
    (10, 0.000000)
};

\nextgroupplot[
    xmin=-0.5, xmax=10.5,
    xtick={0,2,4,6,8,10},
    ylabel={},
    yticklabels={},
    title={\bfseries \sffamily $\kappa=1$}
]
\addplot+[draw=black]
coordinates {
    (0,  0.189859)
    (1,  0.557532)
    (2,  0.356046)
    (3, -0.126164)
    (4,  0.007612)
    (5,  0.042749)
    (6, -0.054123)
    (7,  0.045646)
    (8, -0.029296)
    (9,  0.012094)
    (10, 0.002288)
};

\nextgroupplot[
    width=0.404\textwidth,
    xmin=-0.5, xmax=20.5,
    xtick={0,5,10,15,20},
    ylabel={},
    yticklabels={},
    title={\bfseries \sffamily $\kappa=0.01$}
]
\addplot+[draw=black, /pgf/bar width=4.37pt]
coordinates {
    (0,  0.001212)
    (1,  0.015807)
    (2,  0.087896)
    (3,  0.265002)
    (4,  0.444770)
    (5,  0.342817)
    (6, -0.048334)
    (7, -0.203077)
    (8,  0.054423)
    (9,  0.117495)
    (10, -0.096292)
    (11, -0.009700)
    (12, 0.069330)
    (13, -0.044917)
    (14, -0.009401)
    (15, 0.039670)
    (16, -0.023398)
    (17, -0.006880)
    (18, 0.022839)
    (19, -0.012490)
    (20, -0.004388)
};

\end{groupplot}
\end{tikzpicture} \vspace{-0.5cm}

\caption{Illustration of the retailer's impulse response for three values of $\kappa$. 
For $\kappa=5\ge \sqrt{5}$, the optimal policy is
$\varphi^*(z)=\varphi_0^*+(1-\varphi_0^*)z$ with
$\varphi_0^*=1-1/\sqrt{\kappa^2-1}=1-1/\sqrt{24}$.
For $\kappa=1$ and $\kappa=0.01$, the bars show the first coefficients of the outer policy
$\varphi_{\delta}(z)$ with $\delta=0.0001$.}
\label{fig:iid-impulse-three-cases}
\end{center}
\end{figure}

%% file: Figures/Fig_Ex1_AR_MA.tex
\begin{figure}[h]
\begin{center}
\begin{minipage}{0.48\textwidth}
\centering
\begin{tikzpicture}
  \begin{axis}[
    ybar,
    /pgf/bar width=8pt,
    ymin=-1.5,
    ymax=2,
    xmin=-0.5,
    xmax=5.5,
    xlabel={Delay $n$},
    ylabel={\large $\varphi^{\MP}_n$},
    ylabel style={rotate=-90},
    xtick={0,1,2,3,4,5},
    enlarge x limits=0.15,
    enlarge y limits={upper, value=0.1},
    grid=major,
    width=7cm,
    height=5.5cm,
    ticklabel style={font=\small},
    label style={font=\small},
    yticklabel style={/pgf/number format/fixed},
    title={AR(1) Demand ($\theta{=}0.5$)},
    title style={yshift=-1.5ex, font=\small},
  ]
  \addplot+[draw=black] coordinates {
    (0, 1.5)
    (1, -0.5)
    (2, 0)
    (3, 0)
    (4, 0)
    (5, 0)
  };
  \end{axis}
\end{tikzpicture}
\end{minipage}%
\hfill
\begin{minipage}{0.48\textwidth}
\centering
\begin{tikzpicture}
  \begin{axis}[
    ybar,
    /pgf/bar width=8pt,
    ymin=-1.5,
    ymax=2,
    xmin=-0.5,
    xmax=10.5,
    xlabel={Delay $n$},
    ylabel={\large $\varphi^{\MP}_n$},
    ylabel style={rotate=-90},
    xtick={0,1,2,3,4,5,6,7,8,9,10},
    enlarge x limits=0.08,
    enlarge y limits={upper, value=0.1},
    grid=major,
    width=7cm,
    height=5.5cm,
    ticklabel style={font=\small},
    label style={font=\small},
    yticklabel style={/pgf/number format/fixed},
    title={MA(1) Demand ($\psi_0{=}0.6$)},
    title style={yshift=-1.5ex, font=\small},
  ]
  \addplot+[draw=black] coordinates {
    (0, 1.667)
    (1, -1.111)
    (2, 0.741)
    (3, -0.494)
    (4, 0.329)
    (5, -0.219)
    (6, 0.146)
    (7, -0.098)
    (8, 0.065)
    (9, -0.043)
    (10, 0.029)
  };
  \end{axis}
\end{tikzpicture}
\end{minipage}

\caption{Impulse responses: (left) $\varphi(z) = (1{+}\theta) - \theta z$ with $\theta = 0.5$; (right) $\varphi(z) = 1/(\psi_0 + (1{-}\psi_0)z)$ with $\psi_0 = 0.6$.}
\label{fig:impulse_responses_comparison}
\end{center}
\end{figure}

%% file: Management_Science/ApproximationPolicies_MS.tex
\section{$\alpha$-Approximation  Policies}\label{sec:approximations}

In this section, we introduce the notion of an $\alpha$-approximation class as a performance benchmark for the supply-chain cost in \eqref{eq:SCCost}. Because exact optimal policies are typically difficult to characterize in a tractable and implementable form, this notion provides a uniform  multiplicative guarantee relative to an optimal admissible policy, valid for all $\kappa$, and thereby offers a principled criterion for evaluating and comparing implementable policy classes in the sections that follow.
\smallskip

\begin{definition}[$\alpha$-Approximation Class of Policies]\label{def:alpha_approx}
A subclass $\Ad'\subseteq\Ad(\psi)$ of admissible inventory policies is an \emph{$\alpha$-approximation class of policies} (uniformly in $\kappa$) if there exists a constant $\alpha \geq 1$, independent of $\kappa$, such that
\begin{equation}\label{eq:alpha_approx_def}
\inf_{\varphi \in \Ad'} {\cal C}(\varphi;\kappa) \leq \alpha \,{\cal C}^*(\kappa),
\qquad\text{for all }\kappa \in (0,\infty).
\end{equation}
\end{definition}\smallskip

This definition captures the idea of a \emph{uniformly reliable} class of simple policies. Since the full optimization over $\Ad(\psi)$ is infinite-dimensional and may yield policies that are difficult to characterize or implement, it is natural to restrict attention to a tractable subclass $\Ad'$. Condition \eqref{eq:alpha_approx_def} ensures that this restriction does not incur an uncontrolled loss: for every $\kappa$, the best policy in $\Ad'$ achieves a cost within a fixed multiplicative factor $\alpha$ of the true optimum. Importantly, the bound is \emph{uniform in $\kappa$}, so the same class remains near-optimal across all relative weightings of downstream inventory costs and upstream forecast-error costs, providing a robustness benchmark for implementable policy design.\smallskip

Even though $\alpha$-approximation is, by construction, a coarse notion of near-optimality, it is still strong enough to impose substantive structural restrictions on a class of inventory policies $\Ad'$. In particular, requiring a guarantee that holds uniformly over all $\kappa$ rules out policy families that are empirically appealing or easy to implement but whose performance can deteriorate arbitrarily on some instances of the problem. As we will show, this includes several widely used “smoothing” heuristics proposed in the literature, most notably simple moving-average and exponential-smoothing rules, which deliberately dampen order variability by filtering demand signals rather than explicitly optimizing the underlying cost tradeoff. While such policies may behave reasonably in certain parameter regimes, their fixed filtering structure can create instances in which the resulting delay generates costs that are arbitrarily larger than the true optimum; consequently, no finite $\alpha$ can certify them as uniformly reliable in the sense of \eqref{eq:alpha_approx_def}.\smallskip

Under the regularity requirement in \eqref{eq:boundedpsi}, our first result shows that $\alpha$-approximation for a general market-demand process $\psi$ is essentially equivalent, up to a condition-number factor, to $\alpha$-approximation in the i.i.d. demand benchmark. \smallskip

\begin{proposition}\label{prop:iid_reduction}
Suppose the market demand $\psi(z)$ satisfies \eqref{eq:boundedpsi}. Then there exists an $\alpha \geq 1$ such that the class of inventory policies $\Ad'$ is an $\alpha$-approximation class for the market demand $\psi$ if and only if there exists an $\tilde{\alpha}\geq 1$ such that $\Ad'$ is an $\tilde{\alpha}$-approximation class for i.i.d. market demand.
\end{proposition}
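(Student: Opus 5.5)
The plan is to sandwich the general-demand cost between two rescaled i.i.d.\ costs, pointwise in $\varphi$, and then transfer the approximation guarantees with a loss of at most the condition number $r:=\psi_{\sup}/\psi_{\inf}\ge 1$.

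\emph{Step 1 (forecast-error term).} By \cref{lem:spectral}, Kolmogorov's formula gives $\sigma_{\ti M}(\varphi;\psi)=|\psi(0)|\,|{\cal Q}(0)|$, where ${\cal Q}$ is the outer factor of $\varphi$ alone. Hence
\[
\sigma_{\ti M}(\varphi;\psi)=|\psi(0)|\,\sigma_{\ti M}(\varphi;\mathds{1})
\]
exactly, for every $\varphi$. This uses that $\psi$ is outer (the Wold representation), so that $\psi(0)\neq 0$ and $\frac{1}{2\pi}\int_{-\pi}^{\pi}\log|\psi(e^{-i\lambda})|^2\,\D\lambda=\log|\psi(0)|^2$.

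\emph{Step 2 (inventory term).} In the integral formula \eqref{eq_SigmaI_z} of \cref{lem:sigma_I}, the weight $|\psi(e^{-i\lambda})|^2$ lies in $[\psi_{\inf}^2,\psi_{\sup}^2]$ by \cref{assm1}. Therefore
\[
\psi_{\inf}\,\sigma_{\ti I}(\varphi;\mathds{1})\le\sigma_{\ti I}(\varphi;\psi)\le\psi_{\sup}\,\sigma_{\ti I}(\varphi;\mathds{1}).
\]
In particular, finiteness of $\sigma_{\ti I}$ does not depend on $\psi$, so $\Ad(\psi)=\Ad(\mathds{1})$. This means a class $\Ad'$ is admissible in one problem if and only if it is admissible in the other.

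\emph{Step 3 (combined sandwich).} Write $\kappa_{\inf}=\kappa\psi_{\inf}/|\psi(0)|$ and $\kappa_{\sup}=\kappa\psi_{\sup}/|\psi(0)|$. Steps 1 and 2 give, for every $\varphi$,
\[
|\psi(0)|\,{\cal C}_{\ti{IID}}(\varphi;\kappa_{\inf})\le{\cal C}(\varphi;\psi;\kappa)\le|\psi(0)|\,{\cal C}_{\ti{IID}}(\varphi;\kappa_{\sup}).
\]
Taking infima over $\Ad$ (the left inequality recovers \cref{prop:boundsiid}) yields
\[
|\psi(0)|\,{\cal C}^*_{\ti{IID}}(\kappa_{\inf})\le{\cal C}^*(\psi;\kappa)\le|\psi(0)|\,{\cal C}^*_{\ti{IID}}(\kappa_{\sup}).
\]
The remaining elementary fact is that for $c\ge1$ we have ${\cal C}_{\ti{IID}}(\varphi;c\kappa)\le c\,{\cal C}_{\ti{IID}}(\varphi;\kappa)$ pointwise, since $\sigma_{\ti M}\ge 0$. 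Hence ${\cal C}^*_{\ti{IID}}(r\kappa)\le r\,{\cal C}^*_{\ti{IID}}(\kappa)$, and note that $\kappa_{\sup}=r\kappa_{\inf}$.

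\emph{Step 4 (both directions).} First suppose $\Ad'$ is an $\tilde\alpha$-approximation class for i.i.d.\ demand. Then
\[
\inf_{\Ad'}{\cal C}(\varphi;\psi;\kappa)\le|\psi(0)|\inf_{\Ad'}{\cal C}_{\ti{IID}}(\varphi;\kappa_{\sup})\le|\psi(0)|\,\tilde\alpha\,{\cal C}^*_{\ti{IID}}(r\kappa_{\inf})\le\tilde\alpha\, r\,|\psi(0)|\,{\cal C}^*_{\ti{IID}}(\kappa_{\inf})\le\tilde\alpha\, r\,{\cal C}^*(\psi;\kappa).
\]
So $\alpha=r\tilde\alpha$ works.

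Conversely, suppose $\Ad'$ is an $\alpha$-approximation class for $\psi$. Given $\kappa$, choose $\kappa'=\kappa|\psi(0)|/\psi_{\inf}$, so that $\kappa'_{\inf}=\kappa$ and $\kappa'_{\sup}=r\kappa$. Then
\[
\inf_{\Ad'}{\cal C}_{\ti{IID}}(\varphi;\kappa)\le|\psi(0)|^{-1}\inf_{\Ad'}{\cal C}(\varphi;\psi;\kappa')\le|\psi(0)|^{-1}\alpha\,{\cal C}^*(\psi;\kappa')\le\alpha\,{\cal C}^*_{\ti{IID}}(r\kappa)\le\alpha\, r\,{\cal C}^*_{\ti{IID}}(\kappa).
\]
So $\tilde\alpha=r\alpha$ works. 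Since $\kappa\mapsto\kappa'$ is a bijection of $(0,\infty)$, the guarantee holds uniformly in $\kappa$.

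The main point needing care is Step 1 together with the equality of admissible sets. One must justify that the inner–outer factorization of $\varphi\psi$ has outer part ${\cal Q}\psi$, which follows because $\psi$ is outer and bounded. One must also check that $\log|\varphi|$ integrability is unaffected by $\psi$, which holds by Kolmogorov's formula since $\log|\psi|$ is bounded on $\mathbb{T}$. Everything else is monotone comparison.
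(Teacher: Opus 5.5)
Your proposal is correct and follows essentially the same route as the paper. Both arguments combine the pointwise sandwich $|\psi(0)|\,{\cal C}_{\ti{IID}}(\varphi;\kappa_{\inf})\le{\cal C}(\varphi;\psi;\kappa)\le|\psi(0)|\,{\cal C}_{\ti{IID}}(\varphi;\kappa_{\sup})$ with the linear scaling bound in $\kappa$, losing a factor $\psi_{\sup}/\psi_{\inf}$ in each direction. The differences are cosmetic: in the ``if'' direction you apply the scaling to ${\cal C}^*_{\ti{IID}}$ rather than to ${\cal C}^*(\psi;\cdot)$, and you state explicitly that $\Ad(\psi)=\Ad(\mathds{1})$, which the paper uses only implicitly.
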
\smallskip

Equipped with \cref{prop:iid_reduction}, we may therefore focus on characterizing policy classes that are $\alpha$-approximation classes in the benchmark case of i.i.d. demand.  This reduction lets us work in a simplified setting while retaining generality up to a constant factor, and it allows us to leverage \cref{prop:solnosharing}, which provides a precise characterization of the optimal cost under i.i.d.\ demand.

%% file: Management_Science/Binomial_MS.tex
\section{Binomial Smoothing Policies}\label{sec:ordersmooting}

In this section, we propose a novel class of inventory policies aimed at approximately minimizing the supply-chain cost in \eqref{eq:SCCost}. Our goal is to identify an $\alpha$-approximation class (cf.\ \cref{def:alpha_approx}) that delivers a uniform performance guarantee relative to an optimal admissible policy, uniformly over all $\kappa$. At the same time, we seek policies that are analytically tractable, operationally interpretable, and straightforward to implement using standard forecasting and ordering primitives, thereby overcoming the practical limitations of the $\eps$-optimal policies characterized in \cref{prop:solnosharing}. \smallskip

The construction of our proposed order-smoothing policies is guided by the {\em desiderata} that the class attain near-optimal performance in the two limiting regimes, $\kappa \downarrow 0$ and $\kappa \uparrow \infty$; accordingly, we select minimal-structure policies that satisfy both requirements, thereby underpinning the desired uniform (in $\kappa$) $\alpha$-approximation guarantee.\smallskip

First, by \cref{cor:myopic_asym}, it follows that when $\kappa \uparrow \infty$, an optimal policy converges to the myopic policy $\varphi^{\MP}(z)$. Accordingly, we seek a family of policies that includes the myopic policy as a member. \smallskip

In the opposite regime, when $\kappa \downarrow 0$, we would like our class of policies to satisfy the following design principles:

\begin{enumerate}[(a)]
\item {\sc Minimal MSFE:} To achieve near-optimality for small $\kappa$ (i.e., when the manufacturer's cost predominates in the overall supply-chain cost relative to the retailer's), we must prioritize policies that, {\em ceteris paribus}, minimize the manufacturer's MSFE, thereby reducing its associated cost.

\item {\sc Structural Simplicity:} To keep the proposed policy class transparent and easy to interpret (for example, in terms of delayed demand response), we restrict attention to policies $\varphi$ in the class ${\cal A}_q \subset {\cal A}(\psi)$ of admissible policies that admit a finite MA$(q)$ representation, $\varphi(z)=\sum_{n=0}^q \varphi_n z^n$, for some integer $q \in \mathbb{N}_0=\{0,1,2,\dots\}$. The order $q$ serves as a tuning parameter that controls the degree of delay and smoothing.\footnote{
Note that by \cref{prop:solnosharing}, when $\kappa \geq \sqrt{5}$, the optimal policy is an MA(1), so restricting to the class of MA($q$) policies entails no loss of optimality over this range of $\kappa$ values.}
\end{enumerate}\vspace{0.2cm}

As we see next, imposing these two requirements fully characterizes a specific class of smoothing policies. To this end, we introduce the following definition.\smallskip 

\begin{definition}{\rm (Binomial Smoothing)}
The class of {\rm Binomial Smoothing Policies (BN)} is given by
$$
\Ad^{\BN}
=\Bigg\{\varphi^{\BN}_q \in \Ad_q \;\colon\; 
\varphi^{\BN}_q(z)=\left(\frac{1+z}{2}\right)^q
=\frac{1}{2^q}\sum_{n=0}^q \binom{q}{n}\, z^n
\quad \text{for } q\in\mathbb{N}_0
\Bigg\}.
$$
\end{definition}\smallskip

It is worth noting that the class $\Ad^{\BN}$ of Binomial Smoothing policies is invertible in the sense of \cref{dfn:invertible}.\smallskip 

\begin{proposition}\label{prop:BN_min_MSFE}
Let $\Ad_q(\psi)\subseteq \Ad(\psi)$ denote the class of admissible policies $\varphi$ that admit an MA$(q)$ representation. Consider the problem of minimizing the manufacturer's root MSFE over the class $\Ad_q(\psi)$, 
$$
\inf_{\varphi \in \Ad_q(\psi)}\; \sigma_{\ti M}(\varphi;\psi).
$$
Then an optimal solution is given by the Binomial Smoothing policy $\varphi^{\BN}_q(z)$. 
\end{proposition}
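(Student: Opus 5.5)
The proof is a sandwich between the lower bound of \cref{lem:positive_sigma_m} and an exact evaluation of $\sigma_{\ti M}$ at the Binomial policy. The lemma already gives, for every $\varphi\in\Ad_q(\psi)$,
\[
\sigma_{\ti M}(\varphi;\psi)\;\ge\;|\psi(0)|\max_{0\le n\le q}\frac{|\varphi_n|}{\binom{q}{n}}\;\ge\;\frac{|\psi(0)|}{2^q}.
\]
So it suffices to show two things: that $\varphi^{\BN}_q$ belongs to $\Ad_q(\psi)$, and that it attains the value $|\psi(0)|/2^q$.

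\textbf{Admissibility.} Clearly $\varphi^{\BN}_q(1)=1$, and $\varphi^{\BN}_q$ is a polynomial of degree $q$, hence lies in $\ell^2$. It remains to check $\sigma^2_{\ti I}(\varphi^{\BN}_q;\psi)<\infty$ via \cref{lem:sigma_I}. The polynomial $z\varphi^{\BN}_q(z)-1$ vanishes at $z=1$, so
\[
\frac{z\,\varphi^{\BN}_q(z)-1}{1-z}
\]
is itself a polynomial, hence bounded on $\mathbb{T}$. Combined with $\psi_{\sup}<\infty$ from \cref{assm1}, the integrand in \eqref{eq_SigmaI_z} is bounded, so the integral is finite.

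\textbf{Attaining the bound.} The only zeros of $\varphi^{\BN}_q(z)=((1+z)/2)^q$ are at $z=-1$, which lies on the unit circle. This is the one delicate point, since outerness is not simply ``zeros strictly outside the disk.'' I would settle it in one of two ways:
\begin{itemize}
\item Cite the standard fact that $1+z$ is outer in $H^2$ (a polynomial with no zeros in the open disk is outer), and that finite products of outer functions are outer. Then \cref{cor:invertible} gives
\[
\sigma_{\ti M}(\varphi^{\BN}_q;\psi)=|\psi(0)|\,|\varphi^{\BN}_q(0)|=\frac{|\psi(0)|}{2^q}.
\]
\item Alternatively, verify directly through Kolmogorov's formula in \cref{lem:spectral}, using the classical integral
\[
\frac{1}{2\pi}\int_{-\pi}^{\pi}\log|1+e^{-i\lambda}|\,\D\lambda=0.
\]
This yields
\[
\frac{1}{2\pi}\int_{-\pi}^{\pi}\log\bigl|\varphi^{\BN}_q(e^{-i\lambda})\bigr|^2\,\D\lambda=-2q\log 2,
\]
and hence $\sigma^2_{\ti M}(\varphi^{\BN}_q;\psi)=|\psi(0)|^2\,4^{-q}$.
\end{itemize}
This is also exactly Jensen's condition, so it confirms outerness and the invertibility claim made before the statement.

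Combining the lower bound with this equality shows that $\varphi^{\BN}_q$ attains the infimum. I expect the only real obstacle to be the boundary zero at $-1$, which the integral identity above resolves.

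If I were not allowed to rely on \cref{lem:positive_sigma_m}, I would re-derive its key step. For a polynomial $\varphi=\varphi_q\prod_j(z-z_j)$, Jensen's formula gives
\[
\sigma_{\ti M}=|\psi(0)|\,|\varphi_q|\prod_j\max(1,|z_j|),
\]
and each coefficient is bounded by $|\varphi_n|\le\binom{q}{n}\,|\varphi_q|\prod_j\max(1,|z_j|)$ through the elementary-symmetric-function expansion. Then $\varphi(1)=1$ forces $1=\sum_n\varphi_n\le 2^q\,\sigma_{\ti M}/|\psi(0)|$.
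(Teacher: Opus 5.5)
Your proof is correct, but it reaches the lower bound by a different route than the paper.

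\textbf{The paper's route.} The paper does not use \cref{lem:positive_sigma_m}. After reducing to $\psi\equiv\mathds{1}$, it uses $\varphi(1)=1$ to write $\varphi(z)=\prod_{k=1}^q (z-z_k)/(1-z_k)$ and reads off the outer factor. This gives $|Q(0)|=\prod_{|z_k|<1}|1-z_k|^{-1}\prod_{|z_k|\ge 1}|z_k|\,|1-z_k|^{-1}$, and each factor is at least $1/2$ because $|1-z_k|\le 1+|z_k|$.

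\textbf{Comparison.} In Mahler-measure terms, both proofs establish $1=|\varphi(1)|\le 2^q M(\varphi)$. The paper evaluates $\varphi(1)$ on the root side, $|\varphi(1)|=|\varphi_q|\prod_k|1-z_k|$. One triangle inequality per root then suffices, and no coefficient inequality is needed. You evaluate it on the coefficient side, $|\varphi(1)|\le\sum_n|\varphi_n|$. You then need Mahler's inequality $|\varphi_n|\le\binom{q}{n}M(\varphi)$, or your elementary-symmetric-function bound (which is its proof), to get back to $M(\varphi)$.

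\textbf{What your route buys.} It is modular, since it reuses a lemma already established in the paper. Its coefficientwise form also makes uniqueness immediate. If $\sigma_{\ti M}=|\psi(0)|2^{-q}$, then $|\varphi_n|\le\binom{q}{n}2^{-q}$ for all $n$. Together with $\sum_n\varphi_n=1$, this forces $\varphi_n=\binom{q}{n}2^{-q}$. In the paper's version, uniqueness corresponds to every root being forced to $z=-1$.

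\textbf{Attainment step.} Your treatment here is more careful than the paper's proof, which takes for granted that $\varphi^{\BN}_q$ is invertible and lies in $\Ad_q(\psi)$. Your check that $\sigma_{\ti I}<\infty$, and your Jensen-condition computation using $\frac{1}{2\pi}\int_{-\pi}^{\pi}\log|1+e^{-i\lambda}|\,\D\lambda=0$, handle the boundary zero at $-1$ cleanly.

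\textbf{Minor point.} Both your fallback factorization and the paper's product formula tacitly assume $\deg\varphi=q$. A policy of lower degree $d<q$ only gets the larger lower bound $2^{-d}$, so nothing breaks.
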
\smallskip

Proposition~\ref{prop:BN_min_MSFE} identifies Binomial Smoothing as the natural benchmark within the class of finite-history smoothing policies. Its connection to \cref{lem:positive_sigma_m} provides useful intuition for why the binomial weights arise. The lemma shows that, within $\Ad_q(\psi)$, the manufacturer's root MSFE is controlled from below by the largest coefficient after normalization by the corresponding binomial coefficient. Hence, reducing forecast error requires spreading the coefficients as evenly as possible relative to the binomial profile. Since admissibility requires $\varphi(1)=\sum_{n=0}^q \varphi_n=1$, this balance is achieved by choosing $\varphi_n=2^{-q}\binom{q}{n}$, which is precisely the Binomial Smoothing policy. Thus, \cref{prop:BN_min_MSFE} can be viewed as the sharp version of the lower-bound intuition in \cref{lem:positive_sigma_m}: among all MA$(q)$ policies, Binomial Smoothing allocates the unit demand response across the $q+1$ periods in the most forecastable way, while retaining a transparent finite delayed-response structure.\smallskip

As an immediate consequence of \cref{prop:BN_min_MSFE}, the optimality of Binomial Smoothing within $\Ad_q(\psi)$ admits an equivalent interpretation in terms of a forecast-adjusted notion of the bullwhip effect.\smallskip

\begin{corollary}\label{cor:BN_forecast_adjusted_bullwhip}
For a policy $\varphi\in\Ad_q(\psi)$, define its forecast-adjusted bullwhip effect by
\begin{equation}\label{eq:forecast_adjusted_bullwhip}
\mathscr{B}_{\ti{FA}}(\varphi;\psi)
:=
\frac{\var(O_{t+1}\mid {\cal F}^{\ti M}_t)}
{\var(D_{t+1}\mid {\cal F}^{\ti D}_t)}
=
\frac{\sigma_{\M}^2(\varphi;\psi)}{|\psi(0)|^2},
\end{equation}
where ${\cal F}^{\ti D}_t:=\sigma(D_\tau:\tau\leq t)$ denotes the demand history. Then, among all admissible MA$(q)$ policies, the Binomial Smoothing policy minimizes $\mathscr{B}_{\ti{FA}}(\varphi;\psi)$. In particular, $\displaystyle \min_{\varphi\in\Ad_q(\psi)}
\mathscr{B}_{\ti{FA}}(\varphi;\psi)
=
\mathscr{B}_{\ti{FA}}(\varphi_q^{\BN};\psi)
=
2^{-2q}.$
\end{corollary}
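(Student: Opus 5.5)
The corollary follows from \cref{prop:BN_min_MSFE} together with \cref{lem:spectral} and \cref{cor:invertible}. The plan has three steps: justify the two expressions for $\mathscr{B}_{\ti{FA}}$ in \eqref{eq:forecast_adjusted_bullwhip}, observe that minimizing the ratio is the same as minimizing $\sigma_{\M}$, and then evaluate the minimum explicitly.

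\emph{Step 1: the denominator.} Since $D_{t+1}=d+\psi({\cal B})\eps_{t+1}$ and $\psi(z)$ is outer (it is the Wold representation), the demand history ${\cal F}^{\ti D}_t$ spans the same linear past as $\{\eps_\tau:\tau\le t\}$. Hence $\var(D_{t+1}\mid{\cal F}^{\ti D}_t)=|\psi(0)|^2\sigma_\eps^2=|\psi(0)|^2$. Equivalently, this is \cref{lem:spectral} applied with the trivial policy $\varphi\equiv 1$. The denominator is strictly positive because $\psi(0)\neq 0$, as noted after \cref{prop:boundsiid}. The numerator is $\sigma_{\M}^2(\varphi;\psi)$ by definition, which gives the second expression in \eqref{eq:forecast_adjusted_bullwhip}.

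\emph{Step 2: reduction to \cref{prop:BN_min_MSFE}.} The denominator $|\psi(0)|^2$ does not depend on $\varphi$. Therefore minimizing $\mathscr{B}_{\ti{FA}}(\cdot;\psi)$ over $\Ad_q(\psi)$ is equivalent to minimizing $\sigma_{\M}(\cdot;\psi)$ over $\Ad_q(\psi)$. By \cref{prop:BN_min_MSFE}, the latter is minimized by $\varphi^{\BN}_q$. We also need $\varphi^{\BN}_q\in\Ad_q(\psi)$. It satisfies $\varphi^{\BN}_q(1)=1$. Its inventory variance is finite because $z\varphi^{\BN}_q(z)-1$ vanishes at $z=1$, so the integrand in \eqref{eq_SigmaI_z} is bounded; here we use \cref{assm1}, which bounds $|\psi|$ on $\mathbb{T}$.

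\emph{Step 3: the value.} The only zero of $\varphi^{\BN}_q(z)=((1+z)/2)^q$ is at $z=-1$, on the unit circle. Such a polynomial is outer, so $\varphi^{\BN}_q$ is invertible as asserted in the text. \cref{cor:invertible} then gives $\sigma_{\M}^2(\varphi^{\BN}_q;\psi)=|\psi(0)|^2|\varphi^{\BN}_q(0)|^2=|\psi(0)|^2 2^{-2q}$. Dividing by $|\psi(0)|^2$ yields $\mathscr{B}_{\ti{FA}}(\varphi^{\BN}_q;\psi)=2^{-2q}$.

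As a consistency check, this value matches the lower bound $|\psi(0)|2^{-q}$ in \cref{lem:positive_sigma_m}, which holds for every $\varphi\in\Ad_q(\psi)$. This independently confirms optimality without invoking \cref{prop:BN_min_MSFE}.

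The only potentially delicate step is the outerness of a polynomial whose zero lies on $\mathbb{T}$. One can handle it via Jensen's formula: $\frac{1}{2\pi}\int_{-\pi}^{\pi}\log|1+e^{-i\lambda}|\,\D\lambda=0=\log|1+0|$. Equivalently, one can verify directly that Kolmogorov's formula \eqref{eq_SigmaS_z} returns $2^{-2q}|\psi(0)|^2$.
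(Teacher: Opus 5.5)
Your proposal is correct and follows the same route as the paper, which treats this corollary as an immediate consequence of \cref{prop:BN_min_MSFE}: the denominator is the $\varphi$-independent demand MSFE $|\psi(0)|^2$, and because the binomial policy is outer, \cref{cor:invertible} gives $\sigma_{\M}^2(\varphi_q^{\BN};\psi)=|\psi(0)|^2\,2^{-2q}$. Your extra observation that the bound $\sigma_{\M}(\varphi;\psi)\ge|\psi(0)|\,2^{-q}$ from \cref{lem:positive_sigma_m} already certifies optimality is also valid and gives a second route to the minimality claim.
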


\smallskip
Within a fixed finite-memory class, Binomial Smoothing minimizes the unpredictable component of the order stream relative to the unpredictable component of demand. Thus, under a forecast-adjusted bullwhip measure, the binomial profile is the most effective MA$(q)$ smoothing rule for reducing the manufacturer's forecasting burden.\smallskip

In the remainder of this section, we study the performance of this policy class in greater detail. Motivated by \cref{prop:iid_reduction}, we begin with the i.i.d.\ demand case, which provides the cleanest setting for understanding the structure and performance of Binomial Smoothing. We then turn to general weakly stationary demand processes and show how the analysis extends beyond the i.i.d.\ benchmark.

\subsection{IID Demand}\label{sec:smooth_iid}

When demand is i.i.d., that is, when $\psi\equiv \mathds{1}$, the myopic policy belongs to $\Ad^{\BN}$ as the special case $q=0$; specifically, $\varphi^{\MP}(z)\equiv \mathds{1}=\varphi^{\BN}_0(z)\in\Ad^{\BN}$. Therefore, under i.i.d.\ demand, the class $\Ad^{\BN}$ satisfies our asymptotic design requirements: as $\kappa \uparrow \infty$, it includes the myopic policy, whereas as $\kappa \downarrow 0$, its higher-order members can be selected to minimize the manufacturer's MSFE while generating a delayed and smoothed order response. In this sense, $\Ad^{\BN}$ provides a parsimonious family that bridges these two limiting regimes and yields the desired uniform (in $\kappa$) $\alpha$-approximation guarantee.\smallskip

The next theorem characterizes the supply-chain cost induced by Binomial Smoothing policies and establishes the corresponding uniform $\alpha$-approximation guarantee.\smallskip

\begin{theorem}\label{thm:BN_cost_alpha} 
Under a Binomial Smoothing policy $\varphi^{\BN}_q(z)$, the associated volatilities of the retailer's inventory and the manufacturer's forecast error satisfy
\[
\sigma^2_{\ti I}(\varphi^{\BN}_q;\mathds{1})=\frac{q+2}{2}-\frac{q}{2^{2q+1}}\binom{2q}{q},
\qquad
\sigma^2_{\ti M}(\varphi^{\BN}_q)=2^{-2q}.
\]
Thus, for a given $\kappa$, an optimal Binomial Smoothing policy solves
\begin{equation}\label{eq:costBN}
{\cal C}_{\iid}^{\ti{BN}}(\kappa)
=\min_{q\in\mathbb{N}_0}
\left\{
\kappa\,\sqrt{\frac{q+2}{2}-\frac{q}{2^{2q+1}}\binom{2q}{q}}
+\frac{1}{2^q}
\right\}.
\end{equation}

Moreover, the class $\Ad^{\BN}$ of Binomial Smoothing policies is an $\alpha$-approximation class for \eqref{eq:SCCost_IID}, uniformly over $\kappa>0$. In the i.i.d.\ demand case, one may take $\alpha=\alpha^{\BN}:=1/\sqrt{\log(2)}$. 
\end{theorem}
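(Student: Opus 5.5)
Both formulas follow from the earlier lemmas, and the $\alpha$-guarantee comes from comparing \eqref{eq:costBN} with the closed form ${\cal C}^*_{\iid}(\kappa)$ in \cref{prop:solnosharing}.

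\emph{Forecast error.} The policy $\varphi^{\BN}_q(z)=((1+z)/2)^q$ has its only zero at $z=-1$, which lies on $\mathbb{T}$ and not inside the disk. Hence it is outer; equivalently, Jensen's condition holds because $\int_{-\pi}^{\pi}\log|1+e^{-i\lambda}|\,\D\lambda=0$. \cref{cor:invertible} with $\psi\equiv\mathds{1}$ then gives $\sigma^2_{\M}=|\varphi^{\BN}_q(0)|^2=2^{-2q}$.

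\emph{Inventory volatility.} I would work in the time domain through \cref{lem:sigma_I}. The coefficient of $z^n$ in $(1-z\varphi(z))/(1-z)$ is $1-\sum_{k<n}\varphi_k$. For the binomial weights this equals $P(X\ge n)$ with $X\sim\mathrm{Bin}(q,1/2)$. By Parseval, $\sigma^2_{\I}=\sum_{n\ge 0}P(X\ge n)^2=1+\sum_{n\ge1}P(X\ge n)^2$.

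For an independent copy $X'$ of $X$, we have $\sum_{n\ge1}P(X\ge n)^2=\sum_{n\ge 1}P(\min(X,X')\ge n)=\e[\min(X,X')]$. Using $\min(X,X')=\tfrac12(X+X'-|X-X'|)$, this equals $\tfrac q2-\tfrac12\e|X-X'|$. Since $X-X'+q\sim\mathrm{Bin}(2q,1/2)$, the classical mean-absolute-deviation formula gives $\e|X-X'|=q\binom{2q}{q}2^{-2q}$. Together these yield the stated $\sigma^2_{\I}$, and \eqref{eq:costBN} follows immediately.

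\emph{Approximation guarantee.} I would split the range of $\kappa$ into three regions.
\begin{itemize}
\item[(i)] For $\kappa\ge\sqrt5$, take $q=0$, which costs $\kappa+1$. The ratio $(\kappa+1)/(1+\sqrt{\kappa^2-1})$ is decreasing in $\kappa$, and its value at $\kappa=\sqrt5$ is $(1+\sqrt5)/3<1/\sqrt{\log 2}$.
\item[(ii)] For $\kappa<\sqrt5$, use the crude bound $\sigma^2_{\I}\le (q+2)/2$ and choose $q=\lceil\log_2(1/\kappa)\rceil$ (or $q=0$ when $\kappa\ge1$). This gives ${\cal C}^{\BN}_{\iid}(\kappa)\le\kappa\sqrt{(\log_2(1/\kappa)+3)/2}+\kappa$.
\item[(iii)] On the optimal side, for $\kappa<\sqrt5$, \cref{prop:solnosharing} gives ${\cal C}^*_{\iid}\ge\frac{\kappa}{2}\sqrt{5+2\gamma_\kappa}$. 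From $\kappa^2=(5+2\gamma)e^{-2\gamma}$ one gets $2\gamma_\kappa=\log(5+2\gamma_\kappa)+2\log(1/\kappa)$, so $5+2\gamma_\kappa\ge 2\log(1/\kappa)+\log 5+5$.
\end{itemize}
Dividing, the ratio behaves like $\sqrt{\log_2(1/\kappa)/2}\big/\sqrt{\log(1/\kappa)/2}\to1/\sqrt{\log2}$ as $\kappa\downarrow0$. This identifies $\alpha^{\BN}=1/\sqrt{\log 2}$ as the asymptotic constant.

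\emph{Main obstacle.} The crude estimates above leave additive constants ($+3$, $+\kappa$, rounding of $q$) that are negligible as $\kappa\downarrow0$ but not uniformly over the middle range. Making the bound hold with exactly $1/\sqrt{\log2}$ for all $\kappa\in(0,\sqrt5)$ will likely require three refinements:
\begin{itemize}
\item keep the exact $\sigma^2_{\I}$, since the subtracted term $q\binom{2q}{q}2^{-2q-1}\approx\sqrt{q/(4\pi)}$ partially offsets the $+1$;
\item retain the $e^{-\gamma_\kappa}/2$ term in ${\cal C}^*_{\iid}$;
\item optimize $q$ more finely than $\lceil\log_2(1/\kappa)\rceil$.
\end{itemize}
I would try to show that the ratio of the two bounds increases toward its limit as $\kappa\downarrow 0$ by working with the variable $q$ (equivalently $\gamma_\kappa$) on each interval where the minimizing $q$ is constant. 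If that proves unwieldy, a compactness argument on $[\kappa_0,\sqrt5]$, with an explicit numerical check there, would complete the bound.
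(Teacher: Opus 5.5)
\paragraph{The two variance formulas.} Your computations of $\sigma^2_{\ti M}(\varphi^{\BN}_q)=2^{-2q}$ and of $\sigma^2_{\ti I}(\varphi^{\BN}_q;\mathds{1})$ are correct, and so is \eqref{eq:costBN}. The forecast-error part is the paper's argument.

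For the inventory volatility you take a different and shorter route. The paper extracts the constant Laurent coefficient of $\varrho(z)\varrho(1/z)$ term by term and then needs two binomial summation identities. Your representation $\sigma^2_{\ti I}=1+\e[\min(X,X')]$, with $X,X'$ independent with law $\varphi$, reduces everything to de Moivre's mean-absolute-deviation formula. It also explains the subtracted term as $\tfrac12\e|X-X'|$, and it gives the MA and ES volatilities just as quickly. Your case $\kappa\ge\sqrt5$ is also the paper's and is fine.

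\paragraph{Where the argument fails.} The gap is in the $\alpha$-approximation part for $\kappa<\sqrt5$, and it is not a ``middle range'' issue. Take your stated estimates: $2^{-q}\le\kappa$, $\sigma^2_{\ti I}\le(q+2)/2$, and ${\cal C}^*_{\iid}\ge\frac{\kappa}{2}\sqrt{5+2\gamma_\kappa}$ with $5+2\gamma_\kappa\ge 5+\log 5+2\log(1/\kappa)$. Writing $A=\tfrac12\log_2(1/\kappa)$, the ratio of your bounds for $\kappa<1$ is
\[
\frac{1}{\sqrt{\log 2}}\cdot\frac{\sqrt{A+3/2}+1}{\sqrt{A+(5+\log 5)/(4\log 2)}} .
\]
Since $(5+\log 5)/(4\log 2)\approx 2.38<3/2+1$, the second factor exceeds $1$ for every $A\ge 0$. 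On $[1,\sqrt5)$ your $q=0$ bound gives ratios above $1.29$.

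So your estimates approach $1/\sqrt{\log 2}$ from above and certify it for no $\kappa\in(0,\sqrt5)$. Taken at face value they prove only a uniform constant of roughly $1.73$. That gives the qualitative $\alpha$-approximation claim, but not $\alpha^{\BN}=1/\sqrt{\log 2}$.

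The compactness-plus-numerics fallback cannot close this either. The true ratio ${\cal C}^{\BN}_{\iid}(\kappa)/{\cal C}^*_{\iid}(\kappa)$ itself tends to $1/\sqrt{\log 2}$ as $\kappa\downarrow0$; it is only $1.078$ at $\kappa=0.01$ and increases slowly. The constant is therefore sharp and there is no asymptotic slack. A numerical check on $[\kappa_0,\sqrt5]$ leaves the tail $(0,\kappa_0]$ needing an analytic bound that is correct to second order. Your choice $q=\lceil\log_2(1/\kappa)\rceil$ is the culprit. It leaves a forecast term of order $\kappa$, a relative excess of order $1/\sqrt{\log(1/\kappa)}$, and nothing in your lower bound offsets it because you dropped $e^{-\gamma_\kappa}/2$.

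\paragraph{The missing step.} What is missing is the paper's termwise comparison in the variable $\gamma_\kappa$, which implements exactly the refinements you listed.
\begin{enumerate}
\item Keep $e^{-\gamma_\kappa}/2=\kappa/(2\sqrt{5+2\gamma_\kappa})$ in ${\cal C}^*_{\iid}$.
\item Take $q_\kappa=\lceil\gamma_\kappa/\log 2\rceil+1$, roughly $\tfrac12\log_2\log(1/\kappa)$ larger than your choice. Then $2^{-q_\kappa}\le e^{-\gamma_\kappa}/2$ and $q_\kappa\le\gamma_\kappa/\log 2+2$.
\item Use the exact volatility with $\binom{2q}{q}\ge 4^q/(2\sqrt q)$. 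This gives $\sigma^2_{\ti I}\le\frac{q+2}{2}-\frac{\sqrt q}{4}\le\frac q2+\frac34$ for $q\ge1$.
\end{enumerate}
Hence
\[
\kappa\sqrt{\tfrac{q_\kappa}{2}+\tfrac34}\le\kappa\sqrt{\tfrac{\gamma_\kappa}{2\log 2}+\tfrac74}\le\kappa\sqrt{\tfrac{\gamma_\kappa}{2\log 2}+\tfrac{5}{4\log 2}}=\frac{1}{\sqrt{\log 2}}\cdot\frac{\kappa}{2}\sqrt{5+2\gamma_\kappa},
\]
because $7/4<5/(4\log 2)\approx1.80$. Adding this to the forecast-term bound gives ${\cal C}^{\BN}_{\iid}(\kappa)\le{\cal C}^*_{\iid}(\kappa)/\sqrt{\log 2}$ for all $\kappa<\sqrt5$ at once, with no monotonicity or numerical argument.

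Within this scheme the $-\tfrac14$ is essential. With $\sigma^2_{\ti I}\le(q+2)/2$ you would need an integer $q\in[\gamma_\kappa/\log 2+1,\ \gamma_\kappa/\log 2+1.61]$, which need not exist.
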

\smallskip

In terms of delayed demand response, the retailer’s orders under a Binomial Smoothing policy $\varphi^{\BN}_q(z)$ are given by
\[
O_t=\frac{1}{2^q}\sum_{n=0}^q \binom{q}{n} D_{t-n}.
\]
Thus, the BN policy $\varphi^{\BN}_q$ spreads the replenishment of market demand over the next $q$ periods, with weights proportional to the binomial coefficients $\binom{q}{n}$, as depicted in
\cref{fig:impulseresponse_binomial}.
\input{Figures/Fig_Impulse_Binomial}

Interestingly, the Binomial Smoothing policy exhibits a distinctive delayed-demand response: its impulse-response coefficients $\{\varphi^{\BN}_{q,n}\}_{n\ge 0}$ are small for recent lags, then increase and peak at $n=\lfloor q/2 \rfloor$, and subsequently decrease for $n>\lceil q/2 \rceil$. This contrasts with the benchmark policies considered in the next section, which react more quickly to market demand (see \cref{fig:impulseresponsebench} for comparison).\smallskip

By construction, the BN policy minimizes the manufacturer’s MSFE within the class of admissible MA($q$) policies. We interpret this optimality not as a delay in information sharing---indeed, because the ordering policy is invertible, the manufacturer can infer demand from orders without delay---but rather as the selection of a smoothing profile that optimally trades off information and responsiveness. Specifically, increasing $q$ makes the order stream progressively less reactive to short-run swings in demand, which reduces the manufacturer’s root MSFE (with $\sigma_{\ti M}(\varphi^{\BN}_q;\mathds{1})\downarrow 0$ as $q\to\infty$), while simultaneously increasing the effective lag in the retailer’s replenishment response and thus raising inventory costs for the retailer. In this sense, the ``optimal smoothing'' induced by $\varphi^{\BN}_q$ balances the manufacturer’s ability to predict the retailer’s orders against the operational cost of delaying inventory replenishment.\smallskip

The following corollary follows immediately from the binomial form of $\varphi_q^{\BN}$ in \cref{thm:BN_cost_alpha} and complements \cref{prop:group_delay_bound}. Recall that \cref{prop:group_delay_bound} shows that, under i.i.d.\ demand, an optimal MA$(q)$ smoothing rule does not place its average demand response beyond the midpoint $q/2$ of its smoothing window.\smallskip

\begin{corollary}\label{cor:BN_group_delay}
For each $q\in\mathbb{N}_0$, the Binomial Smoothing policy $\varphi_q^{\BN}(z)$ has group delay exactly equal to the midpoint of its smoothing window:
\begin{equation}\label{eq:BN_group_delay}
\sum_{n=0}^q n\,\varphi_{q,n}^{\BN}
=
\frac{1}{2^q}\sum_{n=0}^q n \binom{q}{n}
=
\frac{q}{2}.
\end{equation}
\end{corollary}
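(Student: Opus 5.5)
The identity in \eqref{eq:BN_group_delay} is a direct computation, and I will do it in two complementary ways. The first is to differentiate the $z$-transform. Since $\varphi_q^{\BN}(z)=\sum_{n=0}^q \varphi^{\BN}_{q,n} z^n$, the group delay $\sum_n n\,\varphi^{\BN}_{q,n}$ equals $\varphi_q^{\BN\,\prime}(1)$. From the closed form $\varphi_q^{\BN}(z)=\bigl((1+z)/2\bigr)^q$ we get
\[
\varphi_q^{\BN\,\prime}(z)=\frac{q}{2}\left(\frac{1+z}{2}\right)^{q-1}.
\]
Evaluating at $z=1$ gives $q/2$, as claimed. For $q=0$ both sides are zero trivially.

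The second way uses the binomial coefficients directly. The identity $n\binom{q}{n}=q\binom{q-1}{n-1}$ gives
\[
\sum_{n=0}^q n\binom{q}{n}=q\sum_{m=0}^{q-1}\binom{q-1}{m}=q\,2^{q-1}.
\]
Dividing by $2^q$ again yields $q/2$. A third, probabilistic reading is also available: $\varphi^{\BN}_{q,n}$ is the probability mass function of a $\mathrm{Binomial}(q,1/2)$ random variable, whose mean is $q/2$. Equivalently, the weights are symmetric under $n\mapsto q-n$, so their centroid sits at the midpoint of the window.

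There is no real obstacle here. The only thing to check is that the binomial weights sum to one, $\varphi_q^{\BN}(1)=1$, so that the first moment of the weights is the mean delay. It is also worth remarking that this shows Binomial Smoothing attains the upper bound \eqref{eq:group_delay_bound} of \cref{prop:group_delay_bound} with equality.
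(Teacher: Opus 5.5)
Your proof is correct and is essentially the paper's argument: the paper gives no separate proof and only notes that the identity follows immediately from the binomial form $\varphi_q^{\BN}(z)=\bigl((1+z)/2\bigr)^q$, and either your evaluation of the derivative at $z=1$ or the identity $n\binom{q}{n}=q\binom{q-1}{n-1}$ makes that step explicit. One small remark: the paper defines group delay as the unnormalized first moment $\sum_n n\,\varphi_n$, so the normalization $\varphi_q^{\BN}(1)=1$ is not needed for the identity itself, only for reading it as a mean delay.
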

\smallskip

\cref{cor:BN_group_delay} shows that Binomial Smoothing attains the boundary identified in \cref{prop:group_delay_bound}. Thus, in terms of average delay, the binomial profile uses the finite smoothing window as fully as an optimal MA$(q)$ policy can, while avoiding excessive delay in the retailer's demand response.

%% file: Figures/Fig_Impulse_Binomial.tex
\begin{figure}[h]
\begin{center}
\begin{tikzpicture}
  \begin{axis}[
    ybar,
    /pgf/bar width=10pt, 
    ymin=0,
    ymax=0.3,
    xmax=10,
    xlabel={Delay $n$},
    ylabel={\large $\varphi_n$},
    ylabel style={rotate=-90},
    xtick={0,...,10},
    enlarge x limits=0.1,
    enlarge y limits={upper, value=0.1},
    grid=major,
    width=8cm,
    height=5.5cm,
    ticklabel style={font=\small},
    label style={font=\small},
    yticklabel style={/pgf/number format/fixed}, 
    every axis plot/.append style={fill=gray!40},
    title={Binomial Smoothing Policy (BN) ($q{=}8$)},
    title style={yshift=-1.5ex},
  ]
  \addplot+[] coordinates {
    (0, 0.00391)
    (1, 0.03125)
    (2, 0.10938)
    (3, 0.21875)
    (4, 0.27344)
    (5, 0.21875)
    (6, 0.10938)
    (7, 0.03125)
    (8, 0.00391)
    (9, 0)
    (10, 0)
  };
  \end{axis}
\end{tikzpicture}\vspace{-0.1cm}
\caption{Impulse response of the Binomial Policy with $q = 8$.}
\label{fig:impulseresponse_binomial}
\end{center}
\end{figure}

%% file: Management_Science/Benchmark-Analysis_MS.tex
\subsubsection{Benchmark Analysis}\label{sec:benchmark}

To assess the performance of our proposed Binomial Smoothing (BN) class, we compare it against two standard order-smoothing benchmarks: the Simple Moving Average (MA) and Exponential Smoothing (ES) policies studied in \cite{BGP2004}. These policies provide natural benchmarks because they smooth the retailer's order stream by spreading the response to a demand shock over time. The detailed expressions for their corresponding values of \(\sigma_{\ti M}\), \(\sigma_{\ti I}\), and the optimized benchmark costs are reported in \ref{app:benchmark_details}.

\medskip

{\bf Simple Moving Average (MA):}
The class of simple moving-average policies is
\[
\Ad^{\ti{MA}}
:=
\Bigg\{
\varphi_{\tim N}^{\ti{MA}} \in \Ad
\;\colon\;
\varphi_{\tim N}^{\ti{MA}}(z)
=
\frac{1}{N+1}\sum_{n=0}^{N} z^n
=
\frac{1-z^{N+1}}{(N+1)(1-z)},
\quad
N\in\mathbb{N}_0
\Bigg\}.
\]
The parameter \(N\) determines the length of the delayed demand response. The response is spread uniformly over the \(N+1\) periods \(0,1,\ldots,N\). Under i.i.d. demand, the myopic order-up-to policy is the special case \(N=0\), for which \(\varphi_{0}^{\ti{MA}}(z)=\mathds{1}\).

\medskip

{\bf Exponential Smoothing (ES):}
The class of exponential-smoothing policies is
\[
\Ad^{\ti{ES}}
:=
\Bigg\{
\varphi_{\tim \theta}^{\ti{ES}} \in \Ad
\;\colon\;
\varphi_{\tim \theta}^{\ti{ES}}(z)
=
(1-\theta)\sum_{n=0}^{\infty}(\theta z)^n
=
\frac{1-\theta}{1-\theta z},
\quad
\theta\in[0,1)
\Bigg\}.
\]
The parameter \(\theta\) controls the persistence of the delayed response. Larger values of \(\theta\) place more weight on later periods and therefore induce greater smoothing. The myopic policy is again included as a special case: when \(\theta=0\), \(\varphi_{0}^{\ti{ES}}(z)=\mathds{1}\).

\medskip

\cref{fig:impulseresponsebench} illustrates the impulse response functions of the benchmark MP, MA, and ES policies. In this example, the MA policy uses a smoothing window of \(N=4\), while the ES policy uses \(\theta=0.7\). The figure highlights the different ways these benchmark policies delay the retailer's response to a demand shock: MA spreads the response uniformly over a finite window, while ES uses geometrically decaying weights.

\input{Figures/Fig_Impulse-Response_Benchmarks}

The choice of \(N\) in the MA class and \(\theta\) in the ES class directly controls the extent of the delayed demand response embedded in the retailer's orders. This makes both benchmark classes simple and operationally transparent. However, the same one-parameter structure also limits their ability to balance downstream inventory risk and upstream forecastability uniformly across cost regimes. In particular, even when their smoothing parameters are optimized for each value of \(\kappa\), these benchmark classes do not provide a uniform approximation guarantee.

To formalize this limitation, for any policy class \(\widehat{\Ad}\), define its relative performance by
\begin{equation}\label{eq:relerror}
{\cal E}_{\iid}^{\widehat{\Ad}}(\kappa)
:=
\inf_{\varphi\in\widehat{\Ad}}
\frac{
{\cal C}_{\iid}(\varphi;\kappa)
}{
{\cal C}_{\iid}^{*}(\kappa)
}.
\end{equation}
For a single policy or benchmark \(j\), we write \({\cal E}_{\iid}^{j}(\kappa)\) for the corresponding relative performance ratio. For the MA and ES classes, the infimum in \eqref{eq:relerror} is taken over \(N\in\mathbb{N}_0\) and \(\theta\in[0,1)\), respectively. The optimized cost formulas used to compute these ratios are given in \ref{app:benchmark_details}.\smallskip

\begin{proposition}\label{prop:relerror}
Let \({\cal E}_{\iid}^{\ti{MA}}(\kappa)\) and \({\cal E}_{\iid}^{\ti{ES}}(\kappa)\) denote the relative performance ratios for the simple moving-average and exponential-smoothing policy families, \(\Ad^{\ti {MA}}\) and \(\Ad^{\ti {ES}}\), respectively. Then
\[
\lim_{\kappa \downarrow 0} {\cal E}_{\iid}^{\ti{MA}}(\kappa)
=
\lim_{\kappa \downarrow 0} {\cal E}_{\iid}^{\ti{ES}}(\kappa)
=
\infty.
\]
Consequently, neither the simple moving-average family nor the exponential-smoothing family constitutes an \(\alpha\)-approximation class of inventory policies for any finite \(\alpha\).
\end{proposition}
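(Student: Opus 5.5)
The plan is to compare explicit asymptotics. I will show that the optimal i.i.d.\ cost ${\cal C}^*_{\iid}(\kappa)$ vanishes like $\kappa\sqrt{\log(1/\kappa)}$ as $\kappa\downarrow 0$, while the best cost achievable within either benchmark family vanishes only like $\kappa^{2/3}$. The ratio in \eqref{eq:relerror} is then bounded below by a constant times $\kappa^{-1/3}/\sqrt{\log(1/\kappa)}$, which diverges. The final claim, that neither family is an $\alpha$-approximation class, follows directly from \cref{def:alpha_approx}.

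\textbf{Step 1: upper bound on the optimal cost.} For $\kappa<\sqrt5$, \cref{prop:solnosharing} gives ${\cal C}^*_{\iid}(\kappa)=\tfrac{\kappa}{2}\sqrt{5+2\gamma_\kappa}+\tfrac12 e^{-\gamma_\kappa}$. The defining equation gives $e^{-\gamma_\kappa}=\kappa/\sqrt{5+2\gamma_\kappa}$, so
\[
{\cal C}^*_{\iid}(\kappa)\le \kappa\sqrt{5+2\gamma_\kappa}.
\]
Taking logarithms in $\kappa^2=(5+2\gamma)e^{-2\gamma}$ yields $2\gamma_\kappa=2\log(1/\kappa)+\log(5+2\gamma_\kappa)$. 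Hence $\gamma_\kappa=\log(1/\kappa)+O(\log\log(1/\kappa))$. In particular $5+2\gamma_\kappa\le c\log(1/\kappa)$ for all sufficiently small $\kappa$, and therefore ${\cal C}^*_{\iid}(\kappa)\le C\kappa\sqrt{\log(1/\kappa)}$.

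\textbf{Step 2: moving average.} Each $\varphi^{\ti{MA}}_N$ is outer, so $\sigma_{\ti M}=1/(N+1)$. From the formula displayed after \cref{prop:lim_kappa_0},
\[
\sigma_{\ti I}^2=\frac{(N+2)(2N+3)}{6(N+1)}\ge \frac{N+1}{3}.
\]
Thus, for every $N$, the cost satisfies ${\cal C}_{\iid}(\varphi^{\ti{MA}}_N;\kappa)\ge \kappa\sqrt{x/3}+1/x$ with $x=N+1\ge1$. Minimizing over real $x>0$ gives a lower bound of the form $c'\kappa^{2/3}$, uniformly in $N$.

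\textbf{Step 3: exponential smoothing.} Here $\varphi^{\ti{ES}}_\theta$ is outer, since its only pole is at $1/\theta$ and it has no zeros in the disk. Hence $\sigma_{\ti M}=1-\theta$. For \cref{lem:sigma_I} I compute the integrand directly: $z\varphi(z)-1=(z-1)/(1-\theta z)$, so $(z\varphi(z)-1)/(1-z)=-1/(1-\theta z)$. Parseval then gives $\sigma^2_{\ti I}=1/(1-\theta^2)$. Writing $y=1-\theta\in(0,1]$ and using $1-\theta^2\le 2y$, the cost is at least $\kappa/\sqrt{2y}+y$. Minimizing over $y>0$ again gives $c''\kappa^{2/3}$, uniformly in $\theta$.

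\textbf{Step 4: conclusion.} Dividing the lower bounds of Steps 2--3 by the upper bound of Step 1 gives ${\cal E}^{\ti{MA}}_{\iid}(\kappa)$ and ${\cal E}^{\ti{ES}}_{\iid}(\kappa)$ both at least a constant times $\kappa^{-1/3}(\log(1/\kappa))^{-1/2}$, which tends to $\infty$ as $\kappa\downarrow 0$. Now suppose either family were an $\alpha$-approximation class, for i.i.d.\ demand or, by \cref{prop:iid_reduction}, for any $\psi$ satisfying \cref{assm1}. Then its relative performance ratio would be bounded by $\alpha$ for all $\kappa$, which is a contradiction.

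The main obstacle is Step 1: it must extract a sharp enough upper bound on ${\cal C}^*_{\iid}$ from the implicit definition of $\gamma_\kappa$. The key is to use the equation twice, once to eliminate $e^{-\gamma_\kappa}$ and once to show $\gamma_\kappa$ grows only logarithmically. The remaining steps are routine one-variable minimizations.
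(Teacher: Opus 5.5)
Your proposal is correct and follows essentially the same route as the paper. It combines an $O\bigl(\kappa\sqrt{\log(1/\kappa)}\bigr)$ upper bound on ${\cal C}^*_{\iid}(\kappa)$, obtained from the implicit equation for $\gamma_\kappa$, with $\Omega(\kappa^{2/3})$ lower bounds on the optimized MA and ES costs, using the same relaxation $\sigma^2_{\ti{I}}\ge (N+1)/3$ for MA. Two steps differ only cosmetically: the paper phrases the first bound as ${\cal S}_\kappa\ge\kappa^2/6$ for ${\cal S}_\kappa=e^{-\gamma_\kappa}/2$, which is equivalent to your upper bound on $\gamma_\kappa$; and for ES you use $1-\theta^2\le 2(1-\theta)$, which gives a bound uniform in $\theta$ without the paper's first-order condition and $\theta_\kappa\to 1$.
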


\smallskip

\cref{prop:relerror} shows that both MA and ES can perform arbitrarily poorly as \(\kappa\downarrow 0\), i.e., when the total supply-chain cost becomes increasingly dominated by the manufacturer's forecasting component. This highlights a key limitation of generic smoothing rules: reducing order variability is not sufficient to guarantee good performance unless the smoothing pattern also preserves useful demand information for the manufacturer.

To assess the magnitude of this deterioration away from the asymptotic regime, \cref{table:suboptbenchmarks_noinfonew} reports the relative performance ratios \({\cal E}_{\iid}^{j}(\kappa)\) over a range of \(\kappa\) values for the myopic policy (MP), the optimized MA and ES benchmarks, and the proposed BN class.

\begin{table}[!h]
\centering
\setlength{\tabcolsep}{10pt}
\begin{tabular}{c c c c c}
\toprule
\(\kappa\) & MP & MA & ES & BN  \\
\midrule
\rowcolor{gray!20}
0.01 & 46.214 & 2.811 & 3.198 & 1.078 \\
0.1  & 5.842  & 1.569 & 1.748 & 1.051 \\
\rowcolor{gray!20}
0.5  & 1.839  & 1.146 & 1.226 & 1.027 \\
1    & 1.327  & 1.049 & 1.093 & 1.012 \\
\rowcolor{gray!20}
5    & 1.017  & 1.017 & 1.001 & 1.017 \\
10   & 1.005  & 1.005 & 1.000 & 1.005 \\
\bottomrule
\end{tabular}
\vspace{0.2cm}
\caption{Relative performance \({\cal E}_{\iid}^{j}(\kappa)\), as defined in \eqref{eq:relerror}, of benchmark policies as a function of \(\kappa\).}
\label{table:suboptbenchmarks_noinfonew}
\end{table}

As shown in \cref{table:suboptbenchmarks_noinfonew}, BN dominates the benchmark policies for \(\kappa\leq 1\). This is precisely the regime in which the manufacturer's forecasting component is most important. For example, when \(\kappa=0.01\), the best MA and ES policies incur costs \(2.811\) and \(3.198\) times the optimum, respectively, while BN remains within roughly \(7.8\%\) of optimality. The myopic policy performs especially poorly in this regime, with relative performance \(46.214\).

For larger values of \(\kappa\), all policies become nearly optimal because the retailer's inventory component receives greater weight. In this region, ES can slightly outperform the integer-parameter BN class because ES is parameterized by a continuous smoothing parameter \(\theta\), whereas the baseline BN class is indexed by an integer \(q\in\mathbb{N}_0\). As discussed in \cref{sec:extensions}, a simple refinement of BN removes this granularity disadvantage and yields a modified BN policy that uniformly dominates these benchmarks.

Overall, the comparison highlights the importance of controlling the information content embedded in replenishment orders. While MA and ES smooth orders through simple one-parameter rules, \cref{prop:relerror} shows that these rules do not provide uniform approximation guarantees. By contrast, BN uses a calibrated delay that preserves demand information in the order stream and enables the manufacturer to extract it through optimal forecasting. In this sense, BN provides an operational mechanism for supply-chain coordination without explicit collaboration.

{}

%% file: Figures/Fig_Impulse-Response_Benchmarks.tex
{}

\begin{figure}[h]
\begin{center}
\begin{tikzpicture}
  \begin{groupplot}[
    group style={
      group size=3 by 1,
      horizontal sep=1.2cm 
    },
    ybar,
    /pgf/bar width=4pt,
    ymin=0,
    ymax=1, 
    xmax=10,
    xlabel={Delay $n$},
    ylabel={\large $\varphi_n$},
    ylabel style={rotate=-90},
    xtick={0,...,10},
    enlarge x limits=0.1,
    enlarge y limits={upper, value=0.1},
    grid=major,
    width=5.5cm,
    height=5cm,
    ticklabel style={font=\small},
    label style={font=\small},
    yticklabel style={/pgf/number format/fixed}, 
    every axis plot/.append style={fill=gray!40},
  ]

  \nextgroupplot[
    title={Myopic Policy},
    title style={yshift=-1.5ex}
  ]
  \addplot+[/pgf/bar width=8pt] coordinates {
    (0, 1)
  };

  \nextgroupplot[
    title={Moving Average ($N{=}4$)},
    title style={yshift=-1.5ex},
    ylabel={}
  ]
  \addplot+[/pgf/bar width=8pt] coordinates {
    (0, 0.2)
    (1, 0.2)
    (2, 0.2)
    (3, 0.2)
    (4, 0.2)
  };

  \nextgroupplot[
    title={Exp. Smoothing ($\theta{=}0.7$)},
    title style={yshift=-1.5ex},
    ylabel={}
  ]
  \addplot+[/pgf/bar width=8pt] coordinates {
    (0, 0.3)
    (1, 0.21)
    (2, 0.147)
    (3, 0.1029)
    (4, 0.07203)
    (5, 0.05042)
    (6, 0.03529)
    (7, 0.0247)
    (8, 0.01729)
    (9, 0.0121)
    (10, 0.00847)
  };

  \end{groupplot}
\end{tikzpicture}\vspace{-0.1cm}
\caption{Impulse response functions for the three benchmark policies.}
\label{fig:impulseresponsebench}
\end{center}
\end{figure}

%% file: Management_Science/Non-IID-Demand.tex
\subsection{Binomial Smoothing for Weakly Stationary Market Demand}\label{sec:noniid}

Let us return to the case of a general weakly stationary market demand, as defined in \eqref{eq:def_demand}, and investigate the optimization problem in \eqref{eq:SCCost}. In contrast to the i.i.d. demand case of the previous section, the existing literature offers limited guidance on optimal policy structure beyond heuristics such as those in \cref{sec:benchmark}. To partially fill this gap, we extend our proposed class of Binomial Smoothing policies and numerically assess their performance across a range of canonical demand models, yielding promising results.\smallskip

To this end, we first show that the performance guarantee in \cref{thm:BN_cost_alpha}, originally derived for i.i.d. demand, extends to general weakly stationary demand. In particular, combining \cref{prop:iid_reduction} and \cref{thm:BN_cost_alpha} yields the following corollary.\smallskip

\begin{corollary}\label{cor:BN_alpha_ws}
Suppose the demand $z$-transform $\psi$ satisfies \eqref{eq:boundedpsi}. Then the class $\Ad^{\BN}$ of Binomial Smoothing policies is an $\alpha$-approximation class for \eqref{eq:SCCost}, uniformly over $\kappa>0$, with
\[
\alpha=\alpha^{\BN}\,\frac{\psi_{\sup}}{\psi_{\inf}}
=\frac{1}{\sqrt{\log(2)}}\,\frac{\psi_{\sup}}{\psi_{\inf}}.
\]
\end{corollary}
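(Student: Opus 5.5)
The plan is to prove the corollary by a direct sandwich argument rather than invoking \cref{prop:iid_reduction} as a black box, so that the constant $\psi_{\sup}/\psi_{\inf}$ appears explicitly. The idea is to bound the cost of any policy under demand $\psi$ above and below by the i.i.d.\ cost at rescaled values of $\kappa$. Then we apply \cref{thm:BN_cost_alpha} at a suitably rescaled weight.

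\textbf{Step 1: pointwise comparison.} For any $\varphi\in\Ad(\psi)$, formula \eqref{eq_SigmaI_z} gives
\[
\psi_{\inf}\,\sigma_{\I}(\varphi;\mathds{1})\le \sigma_{\I}(\varphi;\psi)\le \psi_{\sup}\,\sigma_{\I}(\varphi;\mathds{1}),
\]
since the integrand is multiplied by $|\psi|^2\in[\psi_{\inf}^2,\psi_{\sup}^2]$. In particular, the admissible classes $\Ad(\psi)$ and $\Ad(\mathds{1})$ coincide. For the forecast error, the Kolmogorov formula in \cref{lem:spectral} gives the exact identity $\sigma_{\M}(\varphi;\psi)=|\psi(0)|\,\sigma_{\M}(\varphi;\mathds{1})$, because the factor $|\psi(0)|$ already accounts for $\psi$ and the remaining factor depends only on $\varphi$. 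Writing $a:=|\psi(0)|$, we therefore get
\[
a\,{\cal C}_{\iid}(\varphi;\kappa\psi_{\inf}/a)\le {\cal C}(\varphi;\psi;\kappa)\le a\,{\cal C}_{\iid}(\varphi;\kappa\psi_{\sup}/a).
\]

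\textbf{Step 2: chain the bounds.} Taking the infimum of the left inequality over $\Ad$ recovers \cref{prop:boundsiid}, namely ${\cal C}^*(\psi;\kappa)\ge a\,{\cal C}^*_{\iid}(\kappa_{\inf})$ with $\kappa_{\inf}=\kappa\psi_{\inf}/a$. Taking the infimum of the right inequality over $\Ad^{\BN}$ and applying \cref{thm:BN_cost_alpha} at $\kappa_{\sup}=\kappa\psi_{\sup}/a$ gives
\[
\inf_{\Ad^{\BN}}{\cal C}(\varphi;\psi;\kappa)\le a\,{\cal C}^{\BN}_{\iid}(\kappa_{\sup})\le a\,\alpha^{\BN}\,{\cal C}^*_{\iid}(\kappa_{\sup}).
\]

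\textbf{Step 3: compare the i.i.d.\ optimum at two weights.} This is the only nontrivial step. Let $r:=\psi_{\sup}/\psi_{\inf}\ge 1$, so that $\kappa_{\sup}=r\kappa_{\inf}$. For every $\varphi$,
\[
{\cal C}_{\iid}(\varphi;r\kappa_{\inf})=r\kappa_{\inf}\sigma_{\I}+\sigma_{\M}\le r\bigl(\kappa_{\inf}\sigma_{\I}+\sigma_{\M}\bigr)=r\,{\cal C}_{\iid}(\varphi;\kappa_{\inf}),
\]
using $r\ge1$ and $\sigma_{\M}\ge0$. Taking infima yields ${\cal C}^*_{\iid}(\kappa_{\sup})\le r\,{\cal C}^*_{\iid}(\kappa_{\inf})$.

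Combining the three steps gives $\inf_{\Ad^{\BN}}{\cal C}\le \alpha^{\BN} r\,a\,{\cal C}^*_{\iid}(\kappa_{\inf})\le \alpha^{\BN}r\,{\cal C}^*(\psi;\kappa)$ for every $\kappa>0$. This is exactly the claimed constant. The main point to verify carefully is the exact scaling $\sigma_{\M}(\varphi;\psi)=|\psi(0)|\,\sigma_{\M}(\varphi;\mathds{1})$, which relies on $\psi$ being outer so that $\log|\psi|$ integrates to $\log|\psi(0)|$. This makes the $\sigma_{\M}$ terms match exactly and leaves only the $\sigma_{\I}$ mismatch to be absorbed into $r$.
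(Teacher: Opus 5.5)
Your proof is correct and is essentially the paper's argument. The paper obtains the corollary by running the ``if'' direction of the proof of \cref{prop:iid_reduction} with $\tilde\alpha=\alpha^{\BN}$ from \cref{thm:BN_cost_alpha}, using the same sandwich $|\psi(0)|\,{\cal C}_{\iid}(\varphi;\kappa_{\inf})\le{\cal C}(\varphi;\psi;\kappa)\le|\psi(0)|\,{\cal C}_{\iid}(\varphi;\kappa_{\sup})$ and the same rescaling inequality \eqref{eq:auxbound}. The only cosmetic difference is where the rescaling is applied: you apply it to ${\cal C}^*_{\iid}$ between $\kappa_{\inf}$ and $\kappa_{\sup}$, whereas the paper applies it to ${\cal C}^*(\psi;\cdot)$ between $\kappa$ and $\kappa\psi_{\sup}/\psi_{\inf}$, and both yield the constant $\alpha^{\BN}\psi_{\sup}/\psi_{\inf}$.
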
\smallskip

While \cref{cor:BN_alpha_ws} provides a robust, uniform performance guarantee, in the small-$\kappa$ regime one can make a sharper statement. When $\kappa\downarrow 0$, the manufacturer’s cost dominates the supply-chain objective, so the primary design goal is to reduce the manufacturer’s forecast error. In particular, within any fixed MA$(q)$ class, \cref{prop:BN_min_MSFE} implies that the Binomial policy $\varphi_q^{\BN}$ is the \emph{exact} minimizer of the manufacturer’s root MSFE, and therefore it is the natural candidate for asymptotic optimality as $\kappa\downarrow 0$. The following proposition formalizes this intuition by showing that, among admissible MA$(q)$ policies, no alternative policy can improve upon $\varphi_q^{\BN}$ by more than an $O(\kappa)$ fraction.\smallskip

\begin{proposition}\label{prop:BN_smallkappa_opt}
Fix $q \in \mathbb{N}_0$. For any $\varphi \in \Ad_q$, the class of admissible policies that admit a finite MA$(q)$ representation, we have
\[
\frac{{\cal C}(\varphi;\psi;\kappa)}{{\cal C}(\varphi_q^{\BN};\psi;\kappa)} \;\ge\;
1-O\!\left(q\,2^q\right)\,\kappa.
\]
\end{proposition}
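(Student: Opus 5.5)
The plan is to bound the ratio from below by using only the MSFE term for the competitor $\varphi$, and to charge the full inventory term of the Binomial policy to the error. Because $\sigma_{\ti I}\ge 0$,
\[
{\cal C}(\varphi;\psi;\kappa)\;\ge\;\sigma_{\ti M}(\varphi;\psi)\;\ge\;\sigma_{\ti M}(\varphi_q^{\BN};\psi)=\frac{|\psi(0)|}{2^q}.
\]
The middle inequality is \cref{prop:BN_min_MSFE}, or equivalently the lower bound $|\psi(0)|/2^q$ of \cref{lem:positive_sigma_m}. The final equality is \cref{cor:invertible}, since $\varphi_q^{\BN}$ is outer with $\varphi_q^{\BN}(0)=2^{-q}$.

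This gives
\[
\frac{{\cal C}(\varphi;\psi;\kappa)}{{\cal C}(\varphi_q^{\BN};\psi;\kappa)}
\;\ge\;\frac{|\psi(0)|2^{-q}}{|\psi(0)|2^{-q}+\kappa\,\sigma_{\ti I}(\varphi_q^{\BN};\psi)}
\;=\;\frac{1}{1+\kappa\,2^q\,\sigma_{\ti I}(\varphi_q^{\BN};\psi)/|\psi(0)|}.
\]
Applying $1/(1+x)\ge 1-x$ for $x\ge 0$, the ratio is at least
\[
1-\kappa\,2^q\,\frac{\sigma_{\ti I}(\varphi_q^{\BN};\psi)}{|\psi(0)|}.
\]
It then remains to show $\sigma_{\ti I}(\varphi_q^{\BN};\psi)=O(q)$, with a constant depending only on $\psi$.

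To bound the inventory volatility, use \cref{lem:sigma_I} together with \cref{assm1}:
\[
\sigma^2_{\ti I}(\varphi_q^{\BN};\psi)\;\le\;\psi_{\sup}^2\,\sigma^2_{\ti I}(\varphi_q^{\BN};\mathds{1})\;=\;\psi_{\sup}^2\left(\frac{q+2}{2}-\frac{q}{2^{2q+1}}\binom{2q}{q}\right)\;\le\;\psi_{\sup}^2\,\frac{q+2}{2},
\]
where the equality is \cref{thm:BN_cost_alpha}. Hence $\sigma_{\ti I}(\varphi_q^{\BN};\psi)\le \psi_{\sup}\sqrt{(q+2)/2}$, which is $O(\sqrt q)$ and therefore also $O(q)$ for $q\ge 1$. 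For $q=0$ it is simply a constant, and the $O(\cdot)$ is read as $O(\max(q,1)\,2^q)$.

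Combining the two steps, the error term is at most $\kappa\,2^q\,\psi_{\sup}\sqrt{(q+2)/2}\,/\,|\psi(0)|$, which is $O(q\,2^q)\,\kappa$. In fact it is the sharper $O(\sqrt q\,2^q)\,\kappa$. The implicit constant is $\psi_{\sup}/|\psi(0)|$, which is finite because $\psi$ is outer and so $\psi(0)\neq 0$.

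The argument has no real obstacle. The only point needing care is the bound on $\sigma_{\ti I}$ for general $\psi$, and \cref{lem:sigma_I} handles it by comparing the integrand pointwise with the i.i.d.\ integrand.
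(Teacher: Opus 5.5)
Your proof is correct and follows essentially the paper's route: bound the competitor's cost below by the MSFE floor $|\psi(0)|2^{-q}$, bound the Binomial cost above by $\psi_{\sup}$ times the i.i.d.\ inventory volatility from \cref{thm:BN_cost_alpha}, and finish with elementary algebra. The differences are cosmetic. You drop the competitor's inventory term, whereas the paper keeps the floor $|\psi(0)|\kappa$ from \cref{prop:myopic} but discards it in the last step anyway. You also keep the square root, which gives the slightly sharper $O(\sqrt{q}\,2^q)\,\kappa$, while the paper loosens $\sqrt{(q+2)/2}$ to $(q+2)/2$.
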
\smallskip

Therefore, for each fixed smoothing order $q$, the Binomial Smoothing policy is \emph{asymptotically optimal} within $\Ad_q$ as $\kappa\downarrow 0$, in the sense that
\[
\lim_{\kappa\downarrow 0}\inf_{\varphi \in \Ad_q} \frac{{\cal C}(\varphi;\psi;\kappa)}{{\cal C}(\varphi_q^{\BN};\psi;\kappa)}= 1.
\]
Intuitively, at the boundary $\kappa=0$ the objective reduces to the manufacturer’s term alone, which, by \cref{prop:BN_min_MSFE}, $\varphi_q^{\BN}$ minimizes over $\Ad_q$; when $\kappa$ is small, any potential gains from deviating from $\varphi_q^{\BN}$ can only come through the volatility of the retailer's inventory term, and hence are necessarily of order $\kappa$.\smallskip

On the other hand, the large-$\kappa$ regime is more subtle. As $\kappa\uparrow\infty$, an optimal policy converges to the myopic ordering rule $\varphi^{\MP}(z)$ (cf.\ \cref{cor:myopic_asym}); however, under weakly stationary demand this myopic rule is demand-dependent (i.e., it depends on $\psi$) and, in general, need not belong to the Binomial class $\Ad^{\BN}$. Thus, unlike in the i.i.d.\ case---where the Binomial family contains the myopic policy via the choice $q=0$---the same demand-agnostic class may fail to capture the limiting optimal structure when demand is correlated.\smallskip

This motivates a simple and natural refinement: augment the Binomial class with the myopic policy $\varphi^{\MP}(z)$, which is asymptotically optimal as $\kappa\uparrow\infty$. While heuristic in nature, doing so is sensible for three reasons. First, the enlargement is \emph{minimal}---it adds only a single additional policy---thereby preserving the analytical tractability and operational simplicity of the Binomial family. Second, it is \emph{structurally consistent} with the i.i.d.\ benchmark: in that case, the myopic policy already lies in $\Ad^{\BN}$, and the augmentation leaves the class unchanged. Third, it directly resolves the sole source of mismatch under correlated demand, namely that the limiting policy depends on $\psi$ whereas $\Ad^{\BN}$ is demand-agnostic; including $\varphi^{\MP}$ therefore ensures that the candidate class contains the correct large-$\kappa$ limit without otherwise altering the Binomial structure. \smallskip

Accordingly, to capture the optimality of $\varphi^{\MP}(z)$ as $\kappa \uparrow \infty$, we define, for an arbitrary class of policies $\widehat{\Ad}$, its expanded class as the {\it join} of $\widehat{\Ad}$ and $\varphi^{\MP}$:
\[
\widehat{\Ad}_{\MP}:=\join\bigl(\widehat{\Ad},\varphi^{\MP}\bigr)
=\Bigl\{(1-x)\,\varphi+x\,\varphi^{\MP}\colon \varphi \in \widehat{\Ad},\;x \in [0,1]\Bigr\}.
\]

In particular, the expanded class of Binomial Smoothing policies is given by
\begin{equation}\label{eq:join_BN}
\Ad_{\MP}^{\BN}
=\Bigl\{(1-x)\,\varphi^{\BN}_q+x\,\varphi^{\MP}\colon \varphi^{\BN}_q \in \Ad^{\BN},\;q\in \mathbb{N}_0,\;x \in [0,1]\Bigr\}.
\end{equation}

The class $\Ad_{\MP}^{\BN}$ is parametrized by the pair $(q,x)$, with $q\in\mathbb{N}_0$ indexing the underlying Binomial policy $\varphi^{\BN}_q$ and $x\in[0,1]$ governing the interpolation with the myopic policy. Consequently, minimizing the supply-chain cost $\C{C}(\varphi;\kappa)$ over $\Ad_{\MP}^{\BN}$ reduces to a low-dimensional search over $(q,x)$ and is therefore computationally tractable. \smallskip

The remainder of this section is devoted to numerically evaluating the performance of an optimal policy within $\Ad_{\MP}^{\BN}$ and comparing it against the two  benchmarks: simple moving-average $\Ad^{\ti{MA}}_{\MP}$ and exponential-smoothing policies $\Ad^{\ti{ES}}_{\MP}$. Unlike the i.i.d.\ case, for general weakly stationary demand we do not have access to the optimal cost $\C{C}^*(\psi;\kappa)$; accordingly, we assess performance relative to the lower bounds derived in \cref{sec:lowerbound}. In particular, combining the bounds in Propositions \ref{prop:boundsiid} and \ref{prop:lowerbound}, we define
\[
\underline{\C{C}}(\psi;\kappa)
:=\max\left\{{\cal C}^*_{\ti F}(\psi;\kappa)\;;\;|\psi(0)|\,{\cal C}^*_{\ti{IID}}\Big(\dfrac{\kappa\,\psi_{\inf}}{|\psi(0)|}\Big)\right\}.
\]
Similar to \eqref{eq:relerror}, we define the relative performance of a class of policies $\widehat{\Ad}_{\MP}$ with respect to the above lower bound on the optimal cost:
\[\mbox{Relative Performance}: \hspace{0.3cm} \qquad \widehat{\cal E}(\psi,\kappa):=\inf_{\varphi \in \widehat{\Ad}_{\MP}}\;{{\cal C}(\varphi;\psi;\kappa) \over \underline{\C{C}}(\psi;\kappa)}. \]

Our first numerical experiment considers a weakly stationary AR(1) market-demand process with persistence parameter $\theta\in(-1,1)$. In our notation, this corresponds to the transfer function
\[
\psi(z)=\frac{1-\theta}{1-\theta z}.
\]
The AR(1) specification is a standard parsimonious model for capturing demand autocorrelation and has been widely adopted in the inventory and supply-chain literature, including work that studies how serial correlation propagates into ordering variability and the bullwhip effect. For example, \citet{lee1997distortion,chen2000quantifying,zhang2004impact} model market demand as an AR(1) process and analyze the impact of inventory policies and/or forecasting methods on the bullwhip effect (see also \citet{hosoda2006variance, Agrawal2009Impact}, among others).\smallskip

For each choice of the AR(1) parameter $\theta\in\{-0.8,-0.4,0.4,0.8\}$ and each cost weight $\kappa$ reported in \cref{tab:ar1_performance_alpha_grid}, we compute the relative performance $\widehat{\cal E}(\kappa)$ for the three classes of inventory policies $\{\Ad^{\ti{MA}}_{\MP},\Ad^{\ti{ES}}_{\MP},\Ad^{\BN}_{\MP}\}$. 

\begin{table}[h]
\centering
\setlength{\tabcolsep}{5pt}
\renewcommand{\arraystretch}{1.05}
\small
\begin{tabular}{@{}c c c c p{0.4em} c c c p{0.4em} c c c p{0.4em} c c c@{}}
\toprule
& \multicolumn{3}{c}{$\theta=-0.8$} && \multicolumn{3}{c}{$\theta=-0.4$} && \multicolumn{3}{c}{$\theta=0.4$} && \multicolumn{3}{c}{$\theta=0.8$} \\
\cmidrule(lr){2-4}\cmidrule(lr){6-8}\cmidrule(lr){10-12}\cmidrule(lr){14-16}
$\kappa$ & MA & ES & BN && MA & ES & BN && MA & ES & BN && MA & ES & BN \\
\cmidrule(r){1-1}\cmidrule(lr){2-4}\cmidrule(lr){6-8}\cmidrule(lr){10-12}\cmidrule(l){14-16}
\rowcolor{gray!10}
0.01 & 5.191 & 3.922 & 1.281 && 4.267 & 3.537 & 1.128 && 5.333 & 5.672 & 2.058 && 7.762 & 9.147 & 3.443 \\
0.1  & 2.023 & 2.232 & 1.323 && 1.754 & 1.952 & 1.129 && 2.309 & 2.587 & 1.637 && 3.026 & 3.585 & 2.442 \\
\rowcolor{gray!10}
0.5  & 1.541 & 1.584 & 1.301 && 1.318 & 1.412 & 1.129 && 1.365 & 1.473 & 1.284 && 1.645 & 1.748 & 1.645 \\
1    & 1.367 & 1.390 & 1.249 && 1.179 & 1.240 & 1.093 && 1.105 & 1.157 & 1.105 && 1.263 & 1.263 & 1.263 \\
\rowcolor{gray!10}
5    & 1.119 & 1.123 & 1.121 && 1.006 & 1.013 & 1.002 && 1.003 & 1.003 & 1.003 && 1.010 & 1.010 & 1.010 \\
10   & 1.059 & 1.061 & 1.060 && 1.001 & 1.003 & 1.001 && 1.001 & 1.001 & 1.001 && 1.003 & 1.003 & 1.003 \\
\bottomrule
\end{tabular}\vspace{0.2cm}
\caption{Relative performance measure for an AR(1) demand process with persistence parameter $\theta$.}
\label{tab:ar1_performance_alpha_grid}
\end{table}

Similar to the results in \cref{table:suboptbenchmarks_noinfonew} for the i.i.d. demand case, the results show that the Binomial class $\Ad^{\BN}_{\MP}$ is uniformly competitive and typically dominates the two smoothing benchmarks, with the largest performance gaps arising in the manufacturer-dominated regime (small $\kappa$), where moving-average and exponential-smoothing rules can be several times larger than the bound while Binomial policies remain much closer to $1$. As $\kappa$ increases, all three classes rapidly approach the lower bound (ratios near $1$), consistent with the fact that in the large-$\kappa$ regime the inventory term dominates and all three classes deliver near-optimal performance by selecting a policy that is close to the myopic policy. Moreover, the relative advantage of $\Ad^{\BN}_{\MP}$ is most pronounced when demand is strongly positively correlated (e.g., $\theta=0.8$), where the na\"ive smoothing policies MA and ES generate substantial cost amplification at small $\kappa$.
\smallskip

Our second computational experiment considers the case of an MA(1) demand process with transfer function $\psi(z)=\psi_0+(1-\psi_0)\,z$, where we restrict to $\psi_0\geq 0.5$ to ensure invertibility of the demand process. We report results for $\psi_0\in\{0.5,0.75,1.25,1.5\}$ and the same grid of cost weights $\kappa$ as in the AR(1) experiment.

\begin{table}[h]
\centering
\setlength{\tabcolsep}{5pt}
\renewcommand{\arraystretch}{1.05}
\small
\begin{tabular}{@{}c c c c p{0.4em} c c c p{0.4em} c c c p{0.4em} c c c@{}}
\toprule
& \multicolumn{3}{c}{$\psi_0=0.5$} && \multicolumn{3}{c}{$\psi_0=0.75$} && \multicolumn{3}{c}{$\psi_0=1.25$} && \multicolumn{3}{c}{$\psi_0=1.5$} \\
\cmidrule(lr){2-4}\cmidrule(lr){6-8}\cmidrule(lr){10-12}\cmidrule(lr){14-16}
$\kappa$ & MA & ES & BN && MA & ES & BN && MA & ES & BN && MA & ES & BN \\
\cmidrule(r){1-1}\cmidrule(lr){2-4}\cmidrule(lr){6-8}\cmidrule(lr){10-12}\cmidrule(l){14-16}
\rowcolor{gray!10}
0.010 & 4.538 & 4.951 & 1.945 && 5.020 & 5.072 & 1.804 && 3.956 & 3.416 & 1.114 && 4.487 & 3.626 & 1.162 \\
0.100 & 2.073 & 2.294 & 1.575 && 2.122 & 2.363 & 1.490 && 1.691 & 1.882 & 1.108 && 1.810 & 2.010 & 1.175 \\
\rowcolor{gray!10}
0.500 & 1.284 & 1.342 & 1.236 && 1.314 & 1.400 & 1.219 && 1.261 & 1.348 & 1.105 && 1.301 & 1.387 & 1.128 \\
1.000 & 1.060 & 1.065 & 1.060 && 1.087 & 1.128 & 1.085 && 1.136 & 1.181 & 1.074 && 1.138 & 1.185 & 1.067 \\
\rowcolor{gray!10}
5.000 & 1.000 & 1.000 & 1.000 && 1.000 & 1.000 & 1.000 && 1.002 & 1.003 & 1.001 && 1.003 & 1.004 & 1.001 \\
100.000 & 1.000 & 1.000 & 1.000 && 1.000 & 1.000 & 1.000 && 1.000 & 1.000 & 1.000 && 1.000 & 1.000 & 1.000 \\
\bottomrule
\end{tabular}\vspace{0.2cm}
\caption{Relative performance measure for an MA(1) demand process parameterized by $\psi_0$.}
\label{tab:ma1_performance_psi0_grid}
\end{table}

Similar to the results for the AR(1) process, \cref{tab:ma1_performance_psi0_grid} shows that the Binomial class $\Ad^{\BN}_{\MP}$ remains uniformly competitive and typically dominates the two smoothing benchmarks across all values of $\psi_0$. The largest performance gaps again arise in the manufacturer-dominated regime (small $\kappa$), where moving-average and exponential-smoothing rules can be several times larger than the bound, while Binomial policies remain much closer to $1$. As $\kappa$ increases, all three classes rapidly approach the lower bound (ratios near $1$), consistent with the fact that in the large-$\kappa$ regime the inventory term dominates and each class selects policies close to the myopic policy. Finally, the relative advantage of $\Ad^{\BN}_{\MP}$ is most pronounced when the demand process departs most from i.i.d.\ behavior (e.g., for $\psi_0$ far from $1$), where the na{\"i}ve smoothing benchmarks exhibit the greatest cost amplification at small $\kappa$.\smallskip

%% file: Management_Science/Extensions_MS.tex
\section{Extension: Policy Refinement}\label{sec:extensions}

A limitation of the Binomial Smoothing class is its granularity as a single-parameter family indexed by $q\in\mathbb{N}_0$, in the sense that the manufacturer’s root MSFE $\sigma_{\ti M}$ is restricted to a discrete set of values. To overcome this limitation, we introduce a variant of the BN policy that takes a convex combination of two BN policies, thereby allowing $\sigma_{\ti M}$ to vary continuously. For expositional simplicity, we present this refinement in the context of the i.i.d.\ demand model of \cref{sec:ordersmooting}. Extending this approach to general weakly stationary demand requires no additional ideas.\smallskip

For i.i.d.\ demand with $\psi(z)\equiv \mathds{1}$, \cref{thm:BN_cost_alpha} implies that $\sigma_{\ti M}\in\{2^{-q}: q\in\mathbb{N}_0\}$. By taking a convex combination of two BN policies, we can instead allow $\sigma_{\ti M}$ to take any value in the interval $(0,1]$. Specifically, for any target root MSFE $\sigma_{\ti M}=\eta$ with $\eta\in(0,1]$, we construct a policy $\varphi_\eta\in\Ad$ as a convex combination of two BN policies such that $\sigma_{\ti M}(\varphi_\eta;\mathds{1})=\eta$.\smallskip

\begin{definition}{\rm (Modified Binomial Smoothing Policy)}  
For a given $\eta \in (0,1]$, let $q_\eta := \lfloor -\log_2(\eta) \rfloor$ and define the scalar $\alpha_\eta := \eta \, 2^{q_\eta + 1} - 1$. We define the {\rm Modified Binomial Smoothing Policy (MB)}, denoted by $\varphi_\eta$, whose $z$-transform $\varphi_{\eta}(z)$ is given by $
\varphi_{\eta}(z) = \alpha_\eta\, \varphi_{q_\eta}(z) + (1 - \alpha_\eta)\, \varphi_{q_\eta + 1}(z),$
where $\varphi_{q_\eta}(z)$ and $\varphi_{q_\eta + 1}(z)$ are the $z$-transforms of the Binomial policies of order $q_\eta$ and $q_\eta + 1$, respectively. That is,
\begin{equation}\label{eq:bin_approx_z}
\varphi_{\eta}(z) = \left(\frac{1 + z}{2}\right)^{q_\eta} \left[ \alpha_\eta + (1 - \alpha_\eta) \, \frac{1 + z}{2} \right].
\end{equation}
\end{definition}

The following result shows that the policy $\varphi_\eta$ in fact achieves the target $\sigma_{\ti M}(\varphi_\eta;\mathds{1})=\eta$.\smallskip

\begin{lemma}\label{lem:MB} The policy $\varphi_\eta$ whose $z$-transform is given by \eqref{eq:bin_approx_z} is invertible and satisfies
\[\sigma^2_{\ti M}(\varphi_\eta;\mathds{1})=\eta^2=\frac{(1+\alpha_\eta)^2}{2^{2q_\eta+2}}\qquad \mbox{and}\qquad \sigma_{\ti I}^2(\varphi_\eta;\mathds{1})=\frac{q_\eta+3-\alpha_\eta}{2}-\frac{2q_\eta+1-\alpha_\eta^2}{2^{2q_\eta+2}}\binom{2q_\eta}{q_\eta}.
\]

\end{lemma}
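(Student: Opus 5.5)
The plan is to treat the three claims in turn, reducing each to earlier results: invertibility via factorization into outer factors, $\sigma_{\ti M}$ via \cref{cor:invertible}, and $\sigma_{\ti I}$ via \cref{lem:sigma_I} combined with Parseval and the formula of \cref{thm:BN_cost_alpha}.

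\textbf{Invertibility.} Write $q=q_\eta$ and $\alpha=\alpha_\eta$. Since $q=\lfloor -\log_2\eta\rfloor$, we have $2^{-q-1}<\eta\le 2^{-q}$, and hence $\alpha=\eta\,2^{q+1}-1\in(0,1]$. The factor $\bigl((1+z)/2\bigr)^{q}$ is outer: its only zero is at $z=-1$, on $\mathbb{T}$, and polynomials with no zeros in the open disk are outer. The second factor is $\alpha+(1-\alpha)(1+z)/2=\bigl((1+\alpha)+(1-\alpha)z\bigr)/2$. If $\alpha=1$ it is constant. Otherwise its zero is $z=-(1+\alpha)/(1-\alpha)$, which has modulus $>1$. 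A product of outer functions is outer, so $\varphi_\eta$ is outer, i.e.\ invertible. Also $\varphi_\eta(1)=1$ and $\varphi_\eta$ is a polynomial, so $\varphi_\eta\in\Ad_{q+1}(\mathds{1})$.

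\textbf{Forecast error.} By \cref{cor:invertible} with $\psi\equiv\mathds{1}$,
\[
\sigma_{\ti M}(\varphi_\eta;\mathds{1})=|\varphi_\eta(0)|=2^{-q}\bigl(\alpha+(1-\alpha)/2\bigr)=2^{-q-1}(1+\alpha)=\eta,
\]
which gives $\sigma^2_{\ti M}=(1+\alpha)^2/2^{2q+2}$.

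\textbf{Inventory volatility.} By \cref{lem:sigma_I} with $\psi\equiv\mathds{1}$, $\sigma^2_{\ti I}$ is the squared $L^2(\mathbb{T})$ norm of $R_\varphi(z):=(1-z\varphi(z))/(1-z)$. For a polynomial policy with $\varphi(1)=1$, $R_\varphi$ is a polynomial whose coefficients are the tail sums $1-\sum_{n<k}\varphi_n$. The map $\varphi\mapsto R_\varphi$ is affine, and because $\alpha+(1-\alpha)=1$ it respects convex combinations. Hence, writing $R_q$ for $R_{\varphi_q^{\BN}}$,
\[
\sigma_{\ti I}^2(\varphi_\eta)=\alpha^2 S_q+(1-\alpha)^2 S_{q+1}+2\alpha(1-\alpha)\langle R_q,R_{q+1}\rangle ,
\]
where $S_q=\|R_q\|^2$ is given in closed form by \cref{thm:BN_cost_alpha}.

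\textbf{Cross term (main obstacle).} The remaining work is computing $\langle R_q,R_{q+1}\rangle$. I would write $R_q=1+z\,(1-\varphi^{\BN}_q(z))/(1-z)$ and expand the inner product into three pieces:
\begin{itemize}
\item the constant piece;
\item terms of the form $\langle 1,\cdot\rangle$, which pick off constant coefficients;
\item the product piece $\bigl\langle (1-\varphi_q)/(1-z),\,(1-\varphi_{q+1})/(1-z)\bigr\rangle$.
\end{itemize}
For the product piece, the natural route is to use Parseval on tail sums, i.e.\ $\sum_k T^{(q)}_k T^{(q+1)}_k$ where $T^{(q)}_k=2^{-q}\sum_{n>k}\binom{q}{n}$. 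I would proceed by the same mechanism that yields $S_q$ in \cref{thm:BN_cost_alpha}. On $\mathbb{T}$, $|1+z|^2=(1+z)(1+\bar z)$, so $\int(1+z)^q(1+\bar z)^{q+1}$ produces central-type binomials such as $\binom{2q+1}{q}$. Equivalently, one can use the Pascal relation $\varphi_{q+1}=\tfrac{1+z}{2}\varphi_q$. This gives $R_{q+1}=\tfrac12\bigl(R_q+zR_q\bigr)$ plus a simple correction, so $\langle R_q,R_{q+1}\rangle$ reduces to $S_q$ and a lag-one autocovariance of the coefficients of $R_q$. By symmetry of binomial tails, that autocovariance should be expressible through $S_q$ and $\binom{2q}{q}$.

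\textbf{Final simplification.} I would also rewrite $S_{q+1}$ in terms of $\binom{2q}{q}$ using $\binom{2q+2}{q+1}=\tfrac{2(2q+1)}{q+1}\binom{2q}{q}$. Collecting powers of $\alpha$ should then yield the stated expression $\frac{q+3-\alpha}{2}-\frac{2q+1-\alpha^2}{2^{2q+2}}\binom{2q}{q}$. As a sanity check, at $\alpha=1$ this reduces to $S_q$, and at $\alpha=0$ it reduces to $S_{q+1}$.
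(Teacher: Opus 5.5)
Your treatment of invertibility, of $\sigma_{\ti M}$ via \cref{cor:invertible}, and the bilinear expansion $\sigma^2_{\ti I}(\varphi_\eta)=\alpha^2S_q+(1-\alpha)^2S_{q+1}+2\alpha(1-\alpha)\langle R_q,R_{q+1}\rangle$ all match the paper's proof. The paper expands $\varrho(z)\varrho(1/z)$ into the same four products. You are also more careful than the paper, since you actually derive $\alpha_\eta\in(0,1]$ from $2^{-q-1}<\eta\le 2^{-q}$.

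The two routes part ways at the cross term. The paper evaluates $\tau_{0,(q,q+1)}$ and $\tau_{0,(q+1,q)}$ separately. It extracts the $z^0$ Laurent coefficient of each product, exactly as in the proof of \cref{thm:BN_cost_alpha}, and then applies binomial-sum identities, getting $\frac{2q+5}{4}-\frac{2q+1}{2^{2q+2}}\binom{2q}{q}$ for both. Your proposal stops exactly at this step (``should be expressible''). As written, the central computation is therefore a promise rather than a proof.

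Your Pascal idea closes this gap in one line, and you do not need the symmetry argument at all. The correction term is exact and constant: $R_{q+1}=\tfrac{1+z}{2}R_q+\tfrac12$. Let $\gamma_1:=\langle R_q,zR_q\rangle$, and note that the constant coefficient of $R_q$ is $1$. Then
\[
\langle R_q,R_{q+1}\rangle=\tfrac12(S_q+\gamma_1)+\tfrac12,
\qquad
S_{q+1}=\bigl\|\tfrac{1+z}{2}R_q+\tfrac12\bigr\|^2=\tfrac12(S_q+\gamma_1)+\tfrac34,
\]
so the cross term equals $S_{q+1}-\tfrac14$ and $\gamma_1$ drops out. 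Hence
\[
\sigma^2_{\ti I}(\varphi_\eta)=\alpha^2S_q+(1-\alpha^2)S_{q+1}-\tfrac{\alpha(1-\alpha)}{2}.
\]
Substituting $S_q$ and $S_{q+1}$ from \cref{thm:BN_cost_alpha}, together with $\binom{2q+2}{q+1}=\frac{2(2q+1)}{q+1}\binom{2q}{q}$, gives the stated formula.

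Evaluating $\gamma_1$ directly also works. Write $c_k$ for the coefficients of $R_q$. Then $c_{k-1}=c_k+2^{-q}\binom{q}{k-1}$, which gives $\gamma_1=S_q-1+\Pr(X>Y)$ for independent $X,Y\sim\mathrm{Bin}(q,\tfrac12)$. Exchangeability and Vandermonde then give $\Pr(X>Y)=\tfrac12\bigl(1-2^{-2q}\binom{2q}{q}\bigr)$.

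Compared with the paper's coefficient extraction, your recursion needs no new binomial sums. It makes the equality of the two cross terms automatic and exposes the structural identity $\langle R_q,R_{q+1}\rangle=S_{q+1}-\tfrac14$. The paper's route has the advantage of reusing the template of \cref{thm:BN_cost_alpha} mechanically.
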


The relative performance \eqref{eq:relerror} of the Modified Binomial policy (MB), together with that of the original Binomial (BN) policy and benchmark policies, is evaluated numerically in \cref{table:suboptbenchmarks_noinfonew_ref}. 

\begin{table}[!h]
\centering
\setlength{\tabcolsep}{10pt} 
\begin{tabular}{c c c c c c}
\toprule
$\kappa$ & MP & MA & ES & BN & MB \\
\midrule
\rowcolor{gray!20}
0.01   & 46.214  &  2.811  &  3.198 &  1.078   & 1.078 \\
0.1  & 5.842 & 1.569 & 1.748 & 1.051 & 1.051 \\
\rowcolor{gray!20}
0.5  & 1.839 & 1.146 & 1.226 & 1.027 & 1.027 \\
1    & 1.327 & 1.049 & 1.093 & 1.012 & 1.012 \\
\rowcolor{gray!20}
5    & 1.017 & 1.017 & 1.001 & 1.017 & 1.000 \\
10   & 1.005 & 1.005 & 1.000 & 1.005 & 1.000 \\
\bottomrule
\end{tabular}\vspace{0.2cm}
\caption{Relative performance ${\cal E}_{\iid}^j(\kappa)$, as defined in \eqref{eq:relerror}, of benchmark policies as a function of $\kappa$.}
\label{table:suboptbenchmarks_noinfonew_ref}
\end{table}

In \cref{table:suboptbenchmarks_noinfonew_ref}, the Modified Binomial policy (MB) matches the performance of the original Binomial policy (BN) for small and moderate values of $\kappa$, reflecting that the discrete BN grid already selects (nearly) optimal smoothing levels in those regimes. The benefit of MB becomes apparent as $\kappa$ grows: for $\kappa\in\{5,10\}$, MB improves upon BN by attaining essentially optimal relative performance, whereas BN remains slightly suboptimal due to its coarse, integer-indexed choice of $q$. Thus, MB preserves the strong performance guarantees of BN while eliminating the residual discretization error by allowing a continuous calibration of the manufacturer’s root MSFE.

%% file: Management_Science/Conclusions.tex
\section{Concluding Remarks}\label{sec:conclusion}

This paper centers on a simple but important observation: in a decentralized supply chain, a retailer's ordering rule shapes not only inventory outcomes, but also what the manufacturer can learn from the order stream. Order smoothing therefore plays a dual role. On the one hand, it stabilizes operations by reducing short-run order variability. On the other hand, it changes the timing and clarity with which demand information is revealed upstream, especially when explicit information sharing is not feasible.\smallskip

Motivated by the fact that fully optimal policies can be difficult to compute, interpret, and implement, we propose a family of smoothing rules that is deliberately simple and operationally transparent. The Binomial Smoothing class introduces a controlled delay in replenishment and spreads the response to demand over a finite horizon. Despite its simplicity, it delivers robust performance across a wide range of cost trade-offs, and it avoids the severe worst-case behavior that can arise under commonly used textbook heuristics such as simple moving average and exponential smoothing. We further refine the binomial family through a Modified Binomial construction that interpolates between adjacent binomial policies, allowing the degree of smoothing to be tuned continuously and eliminating the residual discretization effects that arise from an integer-indexed policy parameter.\smallskip

From a practical standpoint, our results suggest a clear implementation path for binomial smoothing. A retailer can estimate the demand transfer function, calibrate $\kappa$ to reflect the trade-off between its inventory exposure and the manufacturer's operating costs, and then choose the smoothing horizon $q$ that minimizes the resulting weighted cost. Thus, our results recast order smoothing from an ad hoc variance-reduction heuristic into a strategic, data-driven replenishment decision that can be incorporated into existing delayed-replenishment protocols.\smallskip

A second practical implication concerns how smoothing should be measured. Binomial Smoothing is not merely effective as a variance-reduction device: within each finite smoothing horizon, it minimizes the forecast-adjusted bullwhip measure, namely the unpredictable component of orders relative to the unpredictable component of demand. This distinction matters because an order stream can have low variance yet remain difficult for the manufacturer to forecast. Thus, smoothing policies should be evaluated by the uncertainty they transmit upstream, not only by their effect on order variability.\smallskip

Several directions for future research follow naturally. First, it would be valuable to extend the framework to richer supply-chain settings, including multiple retailers, multiple products, and multi-echelon networks, where strategic interaction and network structure may amplify or dampen the benefits of smoothing. Second, incorporating additional operational frictions---such as capacity constraints, fixed ordering costs, lost sales, or stochastic lead times---would clarify how robust the main insights are in more realistic environments. Third, demand is often nonstationary in practice; developing adaptive versions of the proposed policies that can learn and adjust to changing demand conditions would improve their practical relevance. Finally, an important complement is to study how smoothing-based policies interact with contracts and explicit information-sharing arrangements, and to identify when implicit information transmission through orders can substitute for, or enhance, formal coordination mechanisms. \smallskip

Taken together, our results yield an actionable managerial takeaway: carefully structured smoothing can improve upstream predictability while remaining practical for downstream implementation. This provides a useful tool for improving supply-chain performance even when data sharing is limited or unavailable. More broadly, order smoothing should be viewed not merely as a variance-reduction device, but as a way to control how demand information is conveyed through the replenishment process.

%% file: Management_Science/Appendix_MS.tex
\renewcommand{\thefootnote}{\fnsymbol{footnote}}
\setcounter{footnote}{1}
\section{Proofs}\label{App:Proofs}
\setcounter{equation}{0}
\renewcommand{\theequation}{A\arabic{equation}}
\vspace{0.3cm}

{\sc Proof of \cref{lem:sigma_I}:} Let  ${\cal B}$ denote the Backshift operator, i.e., ${\cal B}D_t=D_{t-1}$.  From the retailer's inventory dynamics in \eqref{eq:invR}, we have $(1-{\cal B})I_t={\cal B}O_t-D_t$. Moreover, \eqref{eq:def_demand} and \eqref{eq:retailerorder} imply that $D_t=d+\psi({\cal B})\,\eps_t$ and $O_t=\varphi({\cal B})\,d+\varphi({\cal B})\,\psi({\cal B})\,\eps_t$. Since $\varphi(1)=1$ and ${\cal B}^k d=d$ for all $k\in\mathbb{N}$, it follows that $\varphi({\cal B})\,d=d$. Substituting these expressions into the inventory dynamics yields $(1-{\cal B})I_t=\psi({\cal B})\bigl[{\cal B}\,\varphi({\cal B})-1\bigr]\eps_t$, so the transfer function of the inventory process is $\psi_{\I}(z)=\psi(z)\,(z\,\varphi(z)-1)/(1-z)$. Therefore, by Parseval's identity, the stationary variance of the retailer's inventory under policy $\varphi$ is
\begin{equation}\nonumber
\sigma^2_{\I}(\varphi;\psi)
=\frac{1}{2\pi}\int_{-\pi}^{\pi}|\psi(e^{-i\lambda})|^2
\left|\frac{e^{-i\lambda}\,\varphi(e^{-i\lambda})-1}{1-e^{-i\lambda}}\right|^2 \D\lambda,
\end{equation}
where we use the standard frequency-domain representation for linear stationary processes; see Section 4 of \cite{BrockwellDavis}. \qed
\vspace{0.5cm}

{\sc Proof of \cref{lem:spectral}:} From \eqref{eq:def_demand} and \eqref{eq:retailerorder}, the retailer's orders satisfy $O_t=\varphi({\cal B})D_t=\varphi({\cal B})\,(d+\psi({\cal B})\eps_t)=d+\varphi({\cal B})\psi({\cal B})\eps_t$, where the second equality uses the fact that $\varphi(1)=1$ and hence $\varphi({\cal B})d=d$. Furthermore, from the inner-outer factorization $\varphi(z)={\cal Q}(z)\,{\cal I}(z)$ we get
$$O_t=d+{\cal Q}({\cal B})\,{\cal I}({\cal B})\,\psi({\cal B})\,\eps_t=d+\psi({\cal B}){\cal Q}({\cal B})\tilde{\eps}_t,$$
where $\tilde{\eps}_t={\cal I}({\cal B})\,\eps_t$ is a Gaussian white noise sequence since ${\cal I}(z)$ is an inner function. Furthermore, since both $\psi$ and $\C{Q}$ are outer and, by \cref{assm1} $\psi \in \mathbb{H}^\infty$,  their product is outer  and $O_t=d+\psi({\cal B}){\cal Q}({\cal B})\tilde{\eps}_t$ corresponds to the manufacturer's  Wold representation of the retailer's orders. As a result, $O_t$ is invertible with respect to $\tilde{\eps}_t$. It follows
\begin{align*}O_{t+1}&=d+\psi({\cal B}){\cal Q}({\cal B})\tilde{\eps}_{t+1} \\
&=\psi(0){\cal Q}(0)\,\tilde{\eps}_{t+1}+d+\Big(\psi({\cal B}){\cal Q}({\cal B})-\psi(0){\cal Q}(0)\Big)\,\tilde{\eps}_{t+1}\\
&=\psi(0){\cal Q}(0)\,\tilde{\eps}_{t+1}+d+\left(\psi({\cal B}){\cal Q}({\cal B})-\psi(0){\cal Q}(0) \over {\cal B}\right)\,\tilde{\eps}_{t}.
\end{align*}
It is not hard to see that $\displaystyle {\psi(z){\cal Q}(z)-\psi(0){\cal Q}(0) \over z}$ is in $\mathbb{H}^2$. From the invertibility of $O_t$ with respect to $\tilde{\eps}_t$, we conclude that
\begin{align*}
m_t(\varphi;\psi)
&=
\e[O_{t+1}\mid{\cal F}^{\ti M}_t]
=
d+
\frac{\psi({\cal B})\,{\cal Q}({\cal B})-\psi(0)\,{\cal Q}(0)}{{\cal B}}\,
\tilde{\eps}_{t}, \\
\sigma^2_{\ti M}(\varphi;\psi)
&=
\var(O_{t+1}\mid{\cal F}^{\ti M}_t)
=
|\psi(0)|^2\,|{\cal Q}(0)|^2 \,\var[{\tilde{\eps}_{t+1}}]=|\psi(0)|^2\,|{\cal Q}(0)|^2,
\end{align*}
where the last equality uses $\var[{\tilde{\eps}_{t+1}}]=\var[{\eps_{t+1}}]$ and our normalization $\var[{\eps_{t+1}}]=1$. 

Finally to conclude the proof we invoke Kolmogorov's formula (see \citealp[Section~5.8]{BrockwellDavis}) to get
$$|{\cal Q}(0)|^2=\exp\!\left(
\frac{1}{2\pi}\int_{-\pi}^{\pi}
\log\bigl|\varphi(e^{-i\lambda})\bigr|^2\,\mathrm{d}\lambda
\right).$$ \qed
\vspace{0.5cm}

{\sc Proof of \cref{prop:group_delay_bound}:} Let $\varphi(z) \in \Ad_q(\mathds{1})$ given by
$$ \varphi(z)=\sum_{n=0}^q \varphi_n\, z^n,$$
 such that $\varphi(1)=1$. Define the {\em reciprocal} order 
 $$\widehat{\varphi}(z):=z^q\,\varphi(z^{-1})=\sum_{n=0}^q \varphi_{q-n}\,z^n,$$
 which trivially satisfies $\widehat{\varphi}(z) \in \Ad_q(\mathds{1})$ and $\widehat{\varphi}(1)=1$. Since, the MA coefficients $\{\varphi_n\}$ are real, we have $|\varphi(z)|=|\widehat{\varphi}(z)|$ on the unit circle $z \in \mathbb{T}$. As a result,
 $$\sigma^2_{\M}(\varphi)=\exp\!\left(
\frac{1}{2\pi}\int_{-\pi}^{\pi}
\log\bigl|\varphi(e^{-i\lambda})\bigr|^2\,\mathrm{d}\lambda
\right)=\exp\!\left(
\frac{1}{2\pi}\int_{-\pi}^{\pi}
\log\bigl| \widehat{\varphi}(e^{-i\lambda})\bigr|^2\,\mathrm{d}\lambda
\right)=\sigma^2_{\M}(\widehat{\varphi}).$$

Also, using the fact that $\varphi(1)=1$ we have that
$${z\,\varphi(z)-1 \over 1-z}={1 \over 1-z}\,\sum_{n=0}^q \varphi_n\,(z^{n+1}-1)=-\sum_{n=0}^q \varphi_n\,\sum_{k=0}^n z^k=\sum_{k=0}^q \Big(\sum_{n=k}^q \varphi_n\Big)\,z^k.$$
Thus, from Parseval's theorem we get
\[
\sigma^2_{\I}(\varphi)=\frac{1}{2\pi}\int_{-\pi}^{\pi}
\left|\frac{e^{-i\,\lambda}\,\varphi(e^{-i\,\lambda})-1}{1-e^{-i\,\lambda}}\right|^2 \D\lambda= \sum_{k=0}^q \Big(\sum_{n=k}^q \varphi_n\Big)^2.\]
A similar derivation produces
\[
\sigma^2_{\I}(\widehat{\varphi})=\frac{1}{2\pi}\int_{-\pi}^{\pi}
\left|\frac{e^{-i\,\lambda}\,\widehat{\varphi}(e^{-i\,\lambda})-1}{1-e^{-i\,\lambda}}\right|^2 \D\lambda= \sum_{k=0}^q \Big(\sum_{n=0}^k \varphi_n\Big)^2.\]
Let us compare  $\sigma^2_{\I}(\varphi)$ and $\sigma^2_{\I}(\widehat{\varphi})$.
\begin{align*}
\sigma^2_{\I}(\widehat{\varphi})-\sigma^2_{\I}(\varphi)&= \sum_{k=0}^q\Bigg[\Big(\sum_{n=0}^k \varphi_n\Big)^2-\Big(\sum_{n=k}^q \varphi_n\Big)^2\Bigg] =\sum_{k=0}^q (1+\varphi_k)\,\Big[\sum_{n=0}^k \varphi_n-\sum_{n=k}^q \varphi_n]\\
&=\sum_{k=0}^q \Big[\sum_{n=0}^k \varphi_n-\sum_{n=k}^q \varphi_n\Big]=\sum_{k=0}^q (q+1-k)\,\varphi_k -\sum_{k=0}^q (k+1)\,\varphi_k\\
&=q-2\,\sum_{k=0}^q k\,\varphi_k.
\end{align*}
Thus, since $\sigma^2_{\M}(\varphi)= \sigma^2_{\M}(\widehat{\varphi})$, we conclude that if $\varphi$ is optimal we must have 
$$\sigma^2_{\I}(\varphi) \leq \sigma^2_{\I}(\widehat{\varphi}) \qquad \Longleftrightarrow \qquad \sum_{k=0}^q k\,\varphi_k \leq \frac{q}{2}.$$
\qed

\vspace{0.5cm}
{\sc Proof of \cref{prop:myopic}:} We prove the result in terms of the transfer $\tpsi(z)=\varphi(z)\,\psi(z)$. Under this representation, the stationary volatility of the retailer's inventory process satisfies
\begin{align*}\sigma^2_{\ti I}(\varphi;\psi)&=\frac{1}{2\pi}\int_{-\pi}^{\pi}|\psi(e^{-i\lambda})|^2
\left|\frac{e^{-i\lambda}\,\varphi(e^{-i\lambda})-1}{1-e^{-i\lambda}}\right|^2 \D\lambda=\frac{1}{2\pi}\int_{-\pi}^{\pi}
\left|\frac{e^{-i\lambda}\,\tpsi(e^{-i\lambda})-\psi(e^{-i\lambda})}{1-e^{-i\lambda}}\right|^2 \D\lambda\\&= {1 \over 2\pi i} \oint_{|z|=1} \left|\frac{z\,\tpsi(z)-\psi(z)}{1-z}\right|^2\, \frac{\D z}{z}.\end{align*}

Under the admissibility requirement $\varphi \in \Ad$, $z=1$ is a root of $z\,\tpsi(z)-\psi(z)$ and so we write 
$$z\,\tpsi(z)-\psi(z)=(1-z)\,\sum_{n=0}^{\infty}\theta_n z^n,$$
for some sequence $\{\theta_n\}$ in $\ell^2$. Equating the coefficient of $z^n$ on both sides, we get that 
$$\theta_0=-\psi_0 \quad \mbox{and}\quad \theta_{n}-\theta_{n-1}=\tpsi_{n-1}-\psi_n \;\Longrightarrow\quad \theta_n=\sum_{k=0}^{n-1} (\psi_{k}-\tpsi_{k-1})-\psi_n.$$
From Parseval's identity, we have that
$$\sigma^2_{\ti I}(\varphi;\psi)={1 \over 2\pi i} \oint_{|z|=1} \left|\sum_{n=0}^{\infty}\theta_n z^n\right|^2\,\frac{\D z}{z}=\sum_{n=0}^\infty |\theta_n|^2=\psi_0^2+\sum_{n=1}^\infty |\theta_n|^2.$$
We can trivially minimize $\sigma_{\ti I}^2(\varphi;\psi)$ by choosing $\theta_n=0$ for all $n\geq 1$. This is achieved by setting $\tpsi_0=\psi_0+\psi_1$ and $\tpsi_n=\psi_{n+1}$ for $n\geq 1$. Consequently, the retailer's myopic policy is
\begin{align*}
\varphi^{\MP}(z)
&=\frac{\tpsi(z)}{\psi(z)}
=\frac{1}{\psi(z)}\left(\psi_0+\psi_1+\sum_{n=1}^\infty \psi_{n+1}\,z^n\right)
=\frac{1}{\psi(z)}\left(\psi_0+\psi_1+\frac{1}{z}\sum_{n=2}^\infty \psi_{n}\,z^n\right)\\
&=\frac{1}{\psi(z)}\left(\psi_0+\psi_1+\frac{1}{z}\big[\psi(z)-\psi_0-\psi_1 z\big]\right)
=\frac{\psi(z)-(1-z)\,\psi(0)}{z\,\psi(z)}\,,
\end{align*}
and the corresponding volatility of the retailer's inventory process satisfies $\sigma_{\ti I}^2(\varphi^{\MP};\psi)=|\psi(0)|^2$. \qed
\vspace{0.5cm}

{\sc Proof of \cref{lem:positive_sigma_m}:} Any $\varphi\in\Ad(\psi)$ satisfies the admissibility requirement $\varphi(1)=1$, so $\varphi$ is not identically zero and thus admits the canonical Nevanlinna inner--outer factorization (see \citealp{rudin1970real,Nikolski}):
\[
\varphi(z)=\mathcal O(z)\,\mathcal I(z),
\]
where $\mathcal O$ is outer and $\mathcal I$ is inner (hence unimodular on the unit circle
$\mathbb T:=\partial\mathbb D=\{z\in\mathbb C:\ |z|=1\}$). Since $|\mathcal I(e^{-i\lambda})|=1$ for a.e.\ $\lambda$, it follows from \eqref{eq_SigmaS_z} that
\begin{align*}
\sigma_{\M}(\varphi)
&=|\psi(0)|\,
\exp\!\left(\frac{1}{2\pi}\int_{-\pi}^{\pi}\log\bigl|\varphi(e^{-i\lambda})\bigr|\,\D\lambda\right)\\
&=|\psi(0)|\,
\exp\!\left(\frac{1}{2\pi}\int_{-\pi}^{\pi}\log\bigl|\mathcal O(e^{-i\lambda})\bigr|\,\D\lambda\right)
=|\psi(0)|\,|\mathcal O(0)|>0,
\end{align*}
where we used the fact that $\psi$ and $\mathcal O$ are outer and therefore have no zeros in $\mathbb D$, implying $|\psi(0)|>0$ and $|\mathcal O(0)|>0$. 

Consider now the special case of a policy $\varphi \in \Ad_q(\psi)$, the set of admissible policies that admit a finite MA($q$) representation as in \eqref{eq:Aq}. Let $\varphi(z)=\sum_{n=0}^q \varphi_n\,z^n$, and let $\{z_1,\dots,z_q\}$ denote its roots, counted with multiplicity. Then,
\begin{align*}
    \sigma_{\M}(\varphi)&=|\psi(0)|\,
\exp\!\left(\frac{1}{2\pi}\int_{-\pi}^{\pi}\log\bigl|\varphi(e^{-i\lambda})\bigr|\,\D\lambda\right)\\
&=|\psi(0)|\,|\varphi(0)|\,\prod_{n:|z_n|<1}\frac{1}{|z_n|}\\
&=|\psi(0)|\,|\varphi(0)|\,\prod_{n=1}^q \frac{\max\{1,|z_n|\}}{|z_n|}\\
&=|\psi(0)|\,|\varphi_q|\,\prod_{n=1}^q \max\{1,|z_n|\}\\
&=|\psi(0)|\,M(\varphi),
\end{align*}
where the first equality uses Jensen's formula (see \citealp{rudin1970real}), the third equality uses Vieta's formula for the product of the roots of $\varphi(z)$, and $M(\varphi)$ denotes the Mahler measure of the polynomial $\varphi(z)$ (\citealp{Mahler1960}). Since the coefficients $\{\varphi_n\}$ satisfy (see \citealp{Mahler1962Inequalities})
$$|\varphi_n| \leq \binom{q}{n}\,M(\varphi),$$
we get that
$$ \sigma_{\M}(\varphi) \geq |\psi(0)| \frac{|\varphi_n|}{\binom{q}{n}}, \quad n=0,\dots,q.$$
We conclude that
$$\sigma_{\M}(\varphi) \geq |\psi(0)|\,\max_{0 \leq n \leq q} \left\{\frac{|\varphi_n|}{\binom{q}{n}}\right\}.$$
Finally, it remains to show that the maximum on the right-hand side is bounded below by $1/2^q$. Suppose, by contradiction, that this is not the case. Then $|\varphi_n|< \binom{q}{n}/2^q$ for all $n=0,\dots,q$. Summing over $n$, we get $\sum_{n=0}^q |\varphi_n|<1$. But $\varphi$ is admissible and hence $\varphi(1)=1$, that is, $\sum_{n=0}^q \varphi_n=1$, which contradicts the previous inequality.
\qed
\vspace{0.5cm}

{\sc Proof of \cref{prop:lim_kappa_0}:}  Let us define
\[
G_k:=\exp\!\left(\frac{1}{2\pi}\int_{-\pi}^{\pi}\log\bigl|\varphi_k(e^{-i\lambda})\bigr|\,d\lambda\right),
\qquad
I_k:=\frac{1}{2\pi}\int_{-\pi}^{\pi}
\left|\frac{e^{-i\lambda}\,\varphi_k(e^{-i\lambda})-1}{1-e^{-i\lambda}}\right|^2\,d\lambda,
\]
and note that $\sigma_{\ti M}(\varphi_k)=|\psi(0)|\,G_k$ and
\[
\sigma^2_{\I}(\varphi_k)=\frac{1}{2\pi}\int_{-\pi}^{\pi}|\psi(e^{-i\lambda})|^2
\left|\frac{e^{-i\lambda}\,\varphi_k(e^{-i\lambda})-1}{1-e^{-i\lambda}}\right|^2 \,d\lambda
\geq \psi^2_{\inf}\, I_k,
\qquad
\mbox{where }\psi_{\inf}:=\inf_{z\in \mathbb{T}}\lvert \psi(z)\rvert.
\]
We next prove that if $G_k\to 0$, then $I_k\to +\infty$, which in turn establishes the statement of the proposition.\smallskip

To this end, set $f_k(z):=z\,\varphi_k(z)\in \mathbb{H}^2$.
Then, by the admissibility of $\varphi_k$ we have $f_k(1)=1$ and on the unit circle $|f_k(e^{-i\lambda})|=|\varphi_k(e^{-i\lambda})|$.
Also define
\[
g_k(z):=\frac{f_k(z)-1}{z-1}=-\,\frac{z\,\varphi_k(z)-1}{1-z},
\]
which is analytic on $\mathbb D$ because $z-1\neq 0$ for $|z|<1$. Notice that $I_k$ is exactly the $L^2$ boundary norm of $g_k$:
\[
I_k=\frac1{2\pi}\int_{-\pi}^{\pi}\bigl|g_k(e^{-i\lambda})\bigr|^2\,d\lambda.
\]
Hence, whenever $I_k<\infty$ we have $g_k\in \mathbb{H}^2$ and $\|g_k\|_{\mathbb{H}^2}^2=I_k$.
\smallskip

We next show that if $I_k$ is bounded then $|f_k(r)|$ must be uniformly bounded away from $0$ at some fixed $r<1$.  Fix $M>0$ and suppose $I_k\le M$ for infinitely many $k$.
For such $k$, $\|g_k\|_{\mathbb{H}^2}\le \sqrt M$, and the standard point evaluation bound in $\mathbb{H}^2$ gives
\[
|g_k(r)|\le \frac{\|g_k\|_{\mathbb{H}^2}}{\sqrt{1-r^2}}\le \frac{\sqrt M}{\sqrt{1-r^2}}
\qquad(0<r<1).
\]
Since $f_k(z)=1+(z-1)g_k(z)$, we get for real $r\in(0,1)$,
\[
|f_k(r)-1|=(1-r)\,|g_k(r)|
\le (1-r)\frac{\sqrt M}{\sqrt{1-r^2}}
=\sqrt M\,\sqrt{\frac{1-r}{1+r}}.
\]
Choose
\[
r:=\frac{4M}{4M+1}\in(0,1).
\]
Then $\frac{1-r}{1+r}=\frac{1}{8M+1}\le \frac{1}{4M}$, so
\[
|f_k(r)-1|\le \sqrt M\cdot \frac{1}{2\sqrt M}=\frac12,
\]
hence
\[
|f_k(r)|\ge \frac12
\qquad\text{for all those }k\text{ with }I_k\le M.
\]

Next, let us combine the inner--outer factorization of $f_k$ and Harnack inequality for harmonic functions to obtain  a positive lower bound on $G_k$. Factor $f_k$ as $f_k=\C{I}_k\,\C{O}_k$, where $\C{I}_k$ is inner and $\C{O}_k$ is outer. Then for outer functions,
\[
|\C{O}_k(0)|
=\exp\!\left(\frac{1}{2\pi}\int_{-\pi}^{\pi}\log|f_k(e^{-i\lambda})|\,\D\lambda\right)
=\exp\!\left(\frac{1}{2\pi}\int_{-\pi}^{\pi}\log|\varphi_k(e^{-i\lambda})|\,\D\lambda\right)
=G_k.
\]
Also, since $|\C{I}_k(z)|\le 1$ for $|z|<1$, we have
\[
|\C{O}_k(r)|=\frac{|f_k(r)|}{|\C{I}_k(r)|}\ge |f_k(r)|\ge \frac12.
\]

Next, bound $\C{O}_k$ from above on a slightly larger disk.
Because $|z|=1$ on the boundary, $\|f_k\|_{\mathbb{H}^2}=\|\varphi_k\|_{\mathbb{H}^2}$, and since
$f_k=1+(z-1)g_k$ we have
\[
\|f_k\|_{\mathbb{H}^2}\le \|1\|_{\mathbb{H}^2}+\|(z-1)g_k\|_{\mathbb{H}^2}
\le 1+\|g_k\|_{\mathbb{H}^2}+\|zg_k\|_{\mathbb{H}^2}
\le 1+2\sqrt M=:N_M.
\]
Because multiplication by an inner function preserves the $\mathbb{H}^2$ norm,
\(\|\C{O}_k\|_{\mathbb{H}^2}=\|f_k\|_{\mathbb{H}^2}\le N_M\). Fix any $R$ with $r<R<1$, e.g.
\[
R:=\frac{1+r}{2}.
\]
Then the $\mathbb{H}^2$ point evaluation bound yields, for all $|z|\le R$,
\[
|\C{O}_k(z)|\le \frac{\|\C{O}_k\|_{\mathbb{H}^2}}{\sqrt{1-|z|^2}}
\le \frac{N_M}{\sqrt{1-R^2}}=:C_M.
\]
Define
\[
V_k(z):=\log\frac{C_M}{|\C{O}_k(z)|}.
\]
Since $\log|\C{O}_k(z)|$ is harmonic (outer functions are zero-free in $\mathbb D$),
$V_k$ is harmonic on $\mathbb D$ and satisfies $V_k\ge 0$ on $|z|\le R$.
By Harnack's inequality for nonnegative harmonic functions on the disk $|z|<R$,
\[
V_k(0)\le \frac{R+r}{R-r}\,V_k(r).
\]
But
\[
V_k(r)=\log\frac{C_M}{|\C{O}_k(r)|}\le \log\frac{C_M}{1/2}=\log(2C_M).
\]
Therefore,
\[
\log\frac{C_M}{|\C{O}_k(0)|}\le \frac{R+r}{R-r}\,\log(2C_M),
\]
so
\[
|\C{O}_k(0)|\ge C_M\,(2C_M)^{-\frac{R+r}{R-r}}
=:c(M)>0.
\]
Recalling $|\C{O}_k(0)|=G_k$, we conclude:
\[
I_k\le M \quad\Longrightarrow\quad G_k\ge c(M).
\]

Finally, let us combine the previous results to complete the proof. Per hypothesis, assume that $G_k\to 0$, and fix any $M>0$. Since $c(M)>0$, there exists $K$ such that for all $k\ge K$, $G_k<c(M)$.
For those $k$, we cannot have $I_k\le M$ (otherwise $G_k\ge c(M)$), hence
\[
I_k>M \qquad \forall k\ge K.
\]
Because $M>0$ was arbitrary, this implies $I_k\to +\infty$. \qed

\vspace{0.5cm}

{\sc Proof of \cref{prop:lowerbound}:} We prove the result by solving for $\tpsi(z)=\varphi(z)\,\psi(z)$ using a time-domain argument. Specifically, let $\tpsi(z)=\sum_{n=0}^\infty \tpsi_n\,z^n$, and take the sequence $\{\tpsi_n\}$ as the decision variables used to minimize the relaxed supply chain cost ${\cal C}_{\ti F}$. From the proof of \cref{prop:myopic}, we have that 
$$\sigma^2_{\ti I}(\tpsi)={1 \over 2\pi i} \oint_{|z|=1} \left|\sum_{n=0}^{\infty}\theta_n z^n\right|^2\,\frac{\D z}{z}=\sum_{n=0}^\infty |\theta_n|^2,$$
where 
$$\theta_0=-\psi_0 \quad \mbox{and}\quad \theta_n=\sum_{k=0}^{n-1} (\psi_{k}-\tpsi_{k-1})-\psi_n.$$
It follows that the relaxed cost ${\cal C}_{\ti F}$ is equal to
$${\cal C}_{\ti F}=\kappa\,\left(\psi_0^2+\sum_{n=1}^{\infty}\Big[\sum_{k=0}^{n-1}\bigl(\tpsi_k-\psi_k\bigr)-\psi_n\Big]^2\right)^{1 \over 2}+|\tpsi_0|.$$
In order to minimize ${\cal C}_{\ti F}$ over the sequence $\{\tpsi_n\}$ note that for a fixed $\tpsi_0$, the first summand is minimized by setting $\tpsi_1=\psi_2+\psi_1+\psi_0-\tpsi_0$ and $\tpsi_n=\psi_{n+1}$ for all $n \geq 2$. With this choice, all the terms in the summation over $n$ vanish except for the first one,  corresponding to $n=1$, which is equal to $\tpsi_0-\psi_0-\psi_1$. Thus, we can solve for the minimum value of ${\cal C}_{\ti F}$ by solving the one-dimensional problem.
$${\cal C}^*_{\ti F}=\min_{\tpsi_0} \; \kappa\,\sqrt{\psi_0^2+(\tpsi_0-\psi_0-\psi_1)^2}+\tpsi_0^2.$$
Letting $\tpsi^*_0$ denote the optimal value, the resulting optimal policy $\tpsi^{\ti F}(z)$ is given by
\begin{align*}
   \tpsi^{\ti F}(z)&=\tpsi^*_0+(\psi_2+\psi_1+\psi_0-\tpsi^*_0)\,z+\sum_{n=2}^\infty \psi_{n+1}\,z^n = \tpsi^*_0+(\psi_1+\psi_0-\tpsi^*_0)\,z+ \sum_{n=1}^\infty \psi_{n+1}\,z^n\\
   &= \tpsi^*_0+(\psi_1+\psi_0-\tpsi^*_0)\,z+ {1 \over z}\, \left(\sum_{n=0}^\infty \psi_n\,z^n-\psi_0-\psi_1\,z\right) \\
   & = (1-z)\, (\tpsi^*_0-\psi_1)+{\psi(z)-(1-z^2)\, \psi_0 \over z}.
\end{align*}
Finally, since $\psi_0=\psi(0)$ and $\psi_1=\psi'(0)$, we conclude that

\[
\varphi^{\ti F}(z)=\frac{\tpsi^{\ti F}(z)}{\psi(z)}=
\frac{\psi(z)-(1-z^2)\,\psi(0)+(1-z)\,z\,\bigl(\tpsi^*_0-\psi'(0)\bigr)}
{z\,\psi(z)}.~~~\Box
\]
\vspace{0.5cm}

{\sc Proof of \cref{prop:boundsiid}}: For a general demand $\psi(z)$ and admissible policy $\varphi(z) \in \Ad$, we have
\begin{align*}
{\cal C}(\varphi;\kappa)&=\kappa\, \left(\frac{1}{2\pi}\int_{-\pi}^{\pi}|\psi(e^{-i\lambda})|^2
\left|\frac{e^{-i\lambda}\,\varphi(e^{-i\lambda})-1}{1-e^{-i\lambda}}\right|^2 \D\lambda\right)^{1 \over 2}
+|\psi(0)|\,\exp\!\left(\frac{1}{2\pi}\int_{-\pi}^{\pi}
\log\bigl|\varphi(e^{-i\lambda})\bigr|\,\mathrm{d}\lambda\right)\\
&\geq |\psi(0)| \left[\frac{\kappa\,\psi_{\inf}}{|\psi(0)|}\, \left(\frac{1}{2\pi}\int_{-\pi}^{\pi}
\left|\frac{e^{-i\lambda}\,\varphi(e^{-i\lambda})-1}{1-e^{-i\lambda}}\right|^2 \D\lambda\right)^{1 \over 2}
+\exp\!\left(\frac{1}{2\pi}\int_{-\pi}^{\pi}
\log\bigl|\varphi(e^{-i\lambda})\bigr|\,\mathrm{d}\lambda\right)\right],\\
&= |\psi(0)|\,{\cal C}_{\ti{IID}}(\varphi,\kappa_{\inf}),\end{align*}
where 
$\psi_{\inf}:=\inf_{\zeta \in \mathbb{T}}   |\psi(\zeta)|$ and $\kappa_{\inf}:=\frac{\kappa\,\psi_{\inf}}{|\psi(0)|}$.
By taking infimum over $\varphi$, and applying the result in \cref{prop:solnosharing}, we obtain the lower bound
${\cal C}^*(\kappa)\;\geq\;|\psi(0)|\,{\cal C}^*_{\ti{IID}}(\kappa_{\inf})$. 
Letting $\psi_{\sup}:=\sup_{\zeta \in \mathbb{T}}   |\psi(\zeta)|$, and using the same argument, we can derive the upper bound: ${\cal C}^*(\kappa)\;\leq\;|\psi(0)|\,{\cal C}^*_{\ti{IID}}(\kappa_{\sup})$, where $\kappa_{\sup}:=\frac{\kappa\,\psi_{\sup}}{|\psi(0)|}$. \qed

\vspace{0.5cm}

{\sc Proof of \cref{prop:iid_reduction}}: Throughout the proof we use the following intermediate result: for $\kappa_1 \geq \kappa_2 > 0$,
\begin{align}\label{eq:auxbound}
    {\cal C}^*(\kappa_1)
    &= \inf_{\varphi \in \Ad} \Big\{\kappa_1\,\sigma_{\ti I}(\varphi)+\sigma_{\ti M}(\varphi)\Big\}
    \leq \inf_{\varphi \in \Ad} \Big\{\kappa_1\,\sigma_{\ti I}(\varphi)+{\kappa_1 \over \kappa_2}\,\sigma_{\ti M}(\varphi)\Big\} \notag\\
    &= {\kappa_1 \over \kappa_2}\,\inf_{\varphi \in \Ad} \Big\{\kappa_2\,\sigma_{\ti I}(\varphi)+\sigma_{\ti M}(\varphi)\Big\}
    = {\kappa_1 \over \kappa_2}\, {\cal C}^*(\kappa_2).
\end{align}

We begin with the ``only if'' direction of \cref{prop:iid_reduction}. Suppose $\Ad'$ is an $\alpha$-approximation class under market demand $\psi(z)$. Then, by \eqref{eq:alpha_approx_def},
\[
\inf_{\varphi \in \Ad'} {\cal C}(\varphi;\kappa) \leq \alpha \,{\cal C}^*(\kappa),
\qquad\text{for all }\kappa \in (0,\infty).
\]
Recall the upper and lower bounds $|\psi(0)|\, {\cal C}_{\ti{IID}}(\varphi;\kappa_{\inf}) \leq {\cal C}(\varphi;\kappa) \leq |\psi(0)|\, {\cal C}_{\ti{IID}}(\varphi;\kappa_{\sup})$, where $\kappa_{\inf}=\kappa\, \psi_{\inf}/|\psi(0)|$ and $\kappa_{\sup}=\kappa\, \psi_{\sup}/|\psi(0)|$. Therefore,
$|\psi(0)|\,{\cal C}_{\ti{IID}}(\varphi;\kappa)\leq {\cal C}\big(\varphi;\kappa\,\frac{|\psi(0)|}{\psi_{\inf}}\big) \leq |\psi(0)|\,{\cal C}_{\ti{IID}}\big(\varphi;\kappa\,\frac{\psi_{\sup}}{\psi_{\inf}}\big)$.
Taking the infimum over $\varphi \in \Ad'$ in the first inequality, and the infimum over $\varphi \in \Ad$ in the second inequality, yields
\begin{align*}
|\psi(0)|\,\inf_{\varphi \in \Ad'} {\cal C}_{\ti{IID}}(\varphi;\kappa)
&\leq \inf_{\varphi \in \Ad'} {\cal C}\Big(\varphi;\kappa\,\frac{|\psi(0)|}{\psi_{\inf}}\Big)
\leq \alpha\, {\cal C}^*\Big(\kappa\,\frac{|\psi(0)|}{\psi_{\inf}}\Big) \leq \alpha\,|\psi(0)|\, {\cal C}^*_{\ti{IID}}\Big(\kappa\,\frac{\psi_{\sup}}{\psi_{\inf}}\Big)
\\
&\leq \alpha\,\frac{\psi_{\sup}}{\psi_{\inf}}\,|\psi(0)|\,{\cal C}^*_{\ti{IID}}(\kappa),
\end{align*}
where the last inequality follows from \eqref{eq:auxbound}. Dividing by $|\psi(0)|$ and setting $\tilde{\alpha}=\alpha\,\frac{\psi_{\sup}}{\psi_{\inf}}$ shows that $\Ad'$ is a $\tilde{\alpha}$-approximation class when demand is i.i.d., completing the proof of the ``only if'' direction.

For the ``if'' direction, suppose that $\Ad'$ is an $\tilde{\alpha}$-approximation class when demand is i.i.d., that is,
$$\inf_{\varphi \in \Ad'} {\cal C}_{\ti{IID}}(\varphi;\kappa) \leq \tilde{\alpha}\,{\cal C}^*_{\ti{IID}}(\kappa),
\qquad\text{for all }\kappa \in (0,\infty).$$
This, together with the bounds $|\psi(0)|\, {\cal C}_{\ti{IID}}(\varphi;\kappa_{\inf}) \leq {\cal C}(\varphi;\kappa) \leq |\psi(0)|\, {\cal C}_{\ti{IID}}(\varphi;\kappa_{\sup})$ imply
\begin{align*}
\inf_{\varphi \in \Ad'} {\cal C}(\varphi;\kappa) & \leq  |\psi(0)|\, \inf_{\varphi \in \Ad'}\, {\cal C}_{\ti{IID}}(\varphi;\kappa_{\sup}) \leq |\psi(0)|\,\tilde{\alpha}\,{\cal C}^*_{\ti{IID}}(\kappa_{\sup}) \leq \tilde{\alpha}\,{\cal C}^*\Big(\kappa\,\frac{\psi_{\sup}}{\psi_{\inf}}\Big) \leq \tilde{\alpha}\,\frac{\psi_{\sup}}{\psi_{\inf}}\, {\cal C}^*(\kappa).
\end{align*}
Setting $\alpha=\tilde{\alpha}\,\frac{\psi_{\sup}}{\psi_{\inf}}$ shows that $\Ad'$ is an $\alpha$-approximation class for the market demand $\psi$. \qed
\vspace{0.5cm}

{\sc Proof of \cref{prop:BN_min_MSFE}:} From \cref{lem:spectral}, it follows that $\sigma_{\ti M}(\varphi;\psi)=|\psi(0)|\,\sigma_{\ti M}(\varphi;\mathds{1})$. Therefore, without loss of generality, we may restrict the proof to the special case of i.i.d.\ demand, that is, $\psi(z)\equiv \mathds{1}$. \smallskip

By the Fundamental Theorem of Algebra, any $\varphi \in \Ad_q$ can be expressed as
$$\varphi(z) = \prod_{k=1}^q \frac{z - z_k}{1 - z_k},$$
where $\{z_k\}_{k=1}^q$ are the (possibly complex) roots of $\varphi(z)$. Note that $\varphi(1) = 1$, which corresponds to the inventory stability condition required by the admissibility set $\Ad$.

Let $\varphi(z)=B(z)\,Q(z)$ denote the Nevanlinna inner--outer factorization of $\varphi(z)$, where $B(z)$ is a Blaschke product and $Q(z)$ is an outer function given by
$$B(z)=\prod_{k \in {\cal I}} \frac{z-z_k}{1-\bar{z}_k\,z}\qquad \mbox{and}\qquad Q(z)=\left(\prod_{k \in {\cal I}} \frac{1-\bar{z}_k\,z}{1-z_k}\right)\,\left(\prod_{k \in {\cal O}} \frac{z-z_k}{1-z_k}\right),$$
where ${\cal I}:=\{k \colon |z_k|<1\}$ and  ${\cal O}:=\{k \colon |z_k|\geq 1\}$. It follows that
\begin{align*}\exp\left(\frac{1}{2\pi}\int_{-\pi}^{\pi}
\log\bigl|\varphi(e^{-i\lambda})\bigr|\,\mathrm{d}\lambda\right)&= \exp\left(\frac{1}{2\pi}\int_{-\pi}^{\pi}
\log\bigl|B(e^{-i\lambda})\bigr|\,|Q(e^{-i\lambda})\bigr|\,\mathrm{d}\lambda\right) \\ &= \exp\left(\frac{1}{2\pi}\int_{-\pi}^{\pi}
\log|Q(e^{-i\lambda})\bigr|\,\mathrm{d}\lambda\right)\\ &=|Q(0)|,\end{align*}
where the second equality uses the fact that the Blaschke factor $B(z)$ is unimodular on the unit circle and the last equality uses Jensen’s formula for the outer function $Q(z)$.
Thus,  the problem of minimizing the manufacturer's root MSFE, $\sigma_{\ti M}(\varphi)$, over the class $\Ad_q$ reduces to
$$\inf_{\{z_k\}} |Q(0)|= \inf_{\{z_k\}} \;\left(\prod_{k \in {\cal I}} \frac{1}{|1-z_k|}\right)\,\left(\prod_{k \in {\cal O}} \frac{|z_k|}{|1-z_k|}\right).$$

It is not hard to see that 
$$\inf_{|a|<1}\, \frac{1}{|1-a|}=\frac{1}{2} \qquad \mbox{and} \qquad \inf_{|a|\geq 1} \frac{|a|}{|1-a|}=\frac{1}{2}.$$
As a result, we have 
$$
\inf_{\varphi \in \Ad_q}\; \exp\left(\frac{1}{2\pi}\int_{-\pi}^{\pi}
\log\bigl|\varphi(e^{-i\lambda})\bigr|\,\mathrm{d}\lambda\right)=\left(1 \over 2\right)^q.
$$
Finally, the Binomial Smoothing policy $\varphi^{\BN}_q(z)=({1+z \over 2})^q$ is invertible and satisfies $\sigma_{\ti M}(\varphi^{\BN}_q)=|\varphi^{\BN}_q(0)|=({1 \over 2})^q$. We conclude that $\varphi^{\BN}_q(z)$ minimizes the manufacturer's root MSFE over the class $\Ad_q$ of MA$(q)$ policies. \qed

\vspace{0.5cm}

{\sc Proof of \cref{thm:BN_cost_alpha}:} It is easy to see that, since $\varphi^{\BN}_q(z)$ is invertible (i.e., an outer function), by \cref{lem:spectral}
\[
\sigma_{\ti{M}}^2(\varphi^{\BN}_q)=(\varphi^{\BN}_q(0))^2={1 \over 2^{2q}}.
\]
Let us next evaluate the retailer's inventory volatility 
$$\sigma_{\ti I}^2(\varphi^{\BN}_q)=\frac{1}{2\pi}\int_{-\pi}^{\pi}
 \Big|\frac{z\varphi^{\BN}_q(z)-1}{1-z}\Big|^2\, d\theta, \qquad z=e^{-i\,\theta}$$
under the Binomial policy $\varphi^{\BN}_q(z)=\Big({1+z \over 2}\Big)^q$. Let us set
\[
\varrho(z)
:=\frac{z\varphi^{\BN}_q(z)-1}{1-z}
=\frac{1}{1-z}\frac{z(1+z)^q-2^q}{2^q}
\]
and
\[
\varrho(1/z)
=
\frac{1}{1-\frac{1}{z}}\Big(\frac{\frac{1}{z}\bigl(1+\frac{1}{z}\bigr)^q}{2^q}-1\Big)=\frac{-z}{1-z}\frac{(1+z)^q-2^qz^{q+1}}{2^qz^{q+1}}.
\]
Therefore,
\[
\varrho(z)\varrho(1/z)=-\frac{z}{(1-z)^2}\left(\frac{(1+z)^{2q}}{2^{2q}z^q}-\frac{z(1+z)^{q}}{2^{q}}
-\frac{(1+z)^q}{2^qz^{q+1}}+1\right).
\]
On the unit circle \(z=e^{-i\theta}\), we want to calculate 
\[
\sigma_{\ti I}^2=\frac{1}{2\pi}\int_{-\pi}^{\pi}
\varrho(z)\varrho(1/z)d\theta.
\]
Define
\(
\varphi(z) = \varrho(z)\varrho(1/z).
\)
We claim that, with proof coming later,
\[
\frac{1}{2\pi}\int_{-\pi}^{\pi} \varphi\bigl(z\bigr)d\theta
=
(\text{coefficient of }z^0\text{ in }\varphi(z)).
\]
Define $\varphi(z)=\varphi^{(1)}(z)+\varphi^{(2)}(z)+\varphi^{(3)}(z)+\varphi^{(4)}(z)$, where 
\[
\varphi^{(k)}(z)
=
-\frac{z}{(1-z)^2}
\Phi^{(k)}(z),
\]
where we define
\[
\Phi^{(1)}(z)
=
\frac{(1+z)^{2q}}{2^{2q}z^q},
\quad
\Phi^{(2)}(z)
=
-\frac{z(1+z)^q}{2^q},
\quad
\Phi^{(3)}(z)
=
-\frac{(1+z)^q}{2^qz^{q+1}},
\quad
\Phi^{(4)}(z)
=
1.
\]
Although \(\varrho(z)\) appears singular at \(z=1\), its numerator also vanishes there:
\(\bigl[z(1+z)^q/2^q - 1\bigr]\big|_{z=1} = 0,\)
so this singularity is removable.  In fact,
\(\lim_{z\to 1} \varrho(z) = -\frac{q+2}{2},\)
and thus \(z=1\) does not contribute any extra residue. It suffices to consider the residue at $z=0$. Since we only care about the singularity near $z=0$, we consider only the existence of any expansion near $z=0$. Note that 
\[
-\frac{z}{(1-z)^2}
=
-\sum_{m=1}^{\infty} mz^m.
\]
We separately consider four functions:
\[
\Phi^{(1)}(z)
=
\frac{(1+z)^{2q}}{2^{2q}z^q}=
\frac{1}{2^{2q}}
\sum_{r=0}^{2q} \binom{2q}{r} z^{r - q},\text{ and }\varphi^{(1)}(z) = -\sum_{m=1}^\infty
\frac{(1+z)^{2q}}{2^{2q}z^q}=
\frac{1}{2^{2q}}
\sum_{r=0}^{2q} \binom{2q}{r} mz^{m+r - q}.
\]
Its coefficient of $z^0$ is determined by (taking $m=q-r\geq 1$, i.e., $r\leq q-1$), 
\[
\tau^{(1)}_0
=
-\frac{1}{2^{2q}}
\sum_{r=0}^{q-1}
\binom{2q}{r}(q-r).
\]
Second, for $\varphi^{(2)}(z)$, 
\[
\Phi^{(2)}(z)
=
-\frac{z(1+z)^q}{2^q}
=
-\frac{1}{2^q}
\sum_{r=0}^q \binom{q}{r} z^{r+1}.
\]
All resulting exponents of $\varphi^{(2)}(z)$ exceed 0, so
\(
\tau^{(2)}_0 = 0.
\)

\noindent
Third, for $\varphi^{(3)}(z)$, 
\[
\Phi^{(3)}(z)
=
-\frac{(1+z)^q}{2^qz^{q+1}}
=
-\frac{1}{2^q}
\sum_{r=0}^q \binom{q}{r}z^{-r-1},\text{ and }\varphi^{(3)}(z) = \sum_{m=1}^\infty
\frac{1}{2^q}
\sum_{r=0}^q \binom{q}{r}mz^{m-r-1},
\]
Its coefficient of $z^0$ is determined by (taking $m=r+1\geq 1$).
Thus
\[
\tau^{(3)}_0
=
\frac{1}{2^q}
\sum_{r=0}^q \binom{q}{r}(r+1).
\]

Lastly, since $\Phi^{(4)}(z)=
1$, all resulting exponents of $\varphi^{(4)}(z)$ exceed 0, so
\(
\tau^{(4)}_0 = 0.
\)

Summing up $
\tau^{(1)}_0 + \tau^{(2)}_0 + \tau^{(3)}_0 + \tau^{(4)}_0$. We have
\[
\tau_0
=
-\frac{1}{2^{2q}}
\sum_{r=0}^{q-1}
\binom{2q}{r}(q-r)
+
\frac{1}{2^q}
\sum_{r=0}^q
\binom{q}{r}(r+1).
\]
By standard binomial coefficients formula, 
\begin{equation}\label{eq:binom_formula}
\sum_{r=0}^q \binom{q}{r}(r+1) = 2^{q-1}(q+2),\quad \text{
and }
\sum_{r=0}^{q-1} \binom{2q}{r}(q-r) = \frac{q}{2}\binom{2q}{q}.
\end{equation}
We then have
\[
\tau_0
=
\frac{q+2}{2}
-
\frac{q}{2^{2q+1}}\binom{2q}{q}.
\]
Therefore
\[
\frac{1}{2\pi}\int_{-\pi}^{\pi} \varrho(z)\varrho\!\bigl(1/z\bigr)d\theta
=
\tau_0
=
\frac{q+2}{2}-\frac{q}{2^{2q+1}}\binom{2q}{q}.
\]\medskip

Let us now turn to the second part of the proof that shows that the Binomial Smoothing class is an $\alpha$-approximation family. We prove this result by deriving lower bound for  ${\cal C}_{\ti{IID}}^*(\kappa)$ and upper bound for ${\cal C}^{\ti{BN}}(\kappa)$ and showing their ratio is bounded uniformly in $\kappa$.

Let us first derive the lower bound for ${\cal C}_{\ti{IID}}^*(\kappa)$. First, from \cref{prop:solnosharing}, we have that 
${\cal C}_{\ti{IID}}^*(\kappa) = 1 + \sqrt{\kappa^2 - 1}$ for $\kappa \geq \sqrt{5}$. On the other hand, for $\kappa<\sqrt{5}$, we have $\displaystyle {\cal C}^*_{\ti{IID}}(\kappa)={\kappa \over 2}\,\sqrt{5+2\,\gamma_\kappa}+{e^{-\gamma_\kappa} \over 2}$ where $\gamma_\kappa \geq 0$ is the unique solution of $\kappa^2=(5+2\,\gamma)\,\exp(-2\,\gamma)$. It follows that $2\,\gamma_{\kappa} \geq -\log(\kappa^2)$, and thus 
\[
{\cal C}_{\ti{IID}}^*(\kappa) = {\kappa \over 2}\,\sqrt{5+2\,\gamma_\kappa}+{e^{-\gamma_\kappa} \over 2} \geq {\kappa \over 2} \sqrt{5 - \log(\kappa^2)} \quad \mbox{for }\kappa <\sqrt{5}.
\]

Similarly, we derive separate upper bounds for ${\cal C}^{\ti{BN}}(\kappa)$ for the cases $\kappa \geq \sqrt{5}$ and $\kappa<\sqrt{5}$. For $\kappa \geq \sqrt{5}$, we upper bound  using the cost of the myopic policy $\varphi^{\MP}(z)$, which is a special member of the class of Binomial Smoothing policies  with $q=0$. Thus, we have ${\cal C}^{\ti{BN}}(\kappa) \leq {\cal C}(\varphi^{\MP},\kappa) = \kappa+1$, for all $\kappa$ and in particular
\[
\frac{{\cal C}^{\ti{BN}}(\kappa)}{{\cal C}_{\ti{IID}}^*(\kappa)}
\le \frac{\kappa+1}{1+\sqrt{\kappa^2-1}}
\le {\sqrt{5}+1 \over 3}\approx 1.079, \qquad \mbox{for }\kappa \ge \sqrt{5}.
\]

For the case $\kappa < \sqrt{5}$,  an upper bound for ${\cal C}^{\ti{BN}}(\kappa)$ is obtained as follows:
\begin{align*}{\cal C}^{\ti{BN}}(\kappa)&=\min_{q \in \mathbb{N}_0} \left\{\kappa\,\sqrt{\frac{q+2}{2}-\frac{q}{2^{2q+1}}\binom{2q}{q}}+ {1 \over 2^q}  \right\}  \leq  \min_{q \in \mathbb{N}_0} \left\{\kappa\,\sqrt{\frac{q+2}{2}-\frac{q}{2^{2q+1}}{4^q \over 2\,\sqrt{q}}}+ {1 \over 2^q}  \right\} \\ & \leq \min_{q \in \mathbb{N}_0} \left\{\kappa\,\sqrt{\frac{q+2}{2}-\frac{\sqrt{q}}{4}}+ {1 \over 2^q}\right\}\leq \min_{q \in \mathbb{N}} \left\{\kappa\,\sqrt{\frac{q+2}{2}-\frac{1}{4}}+ {1 \over 2^q}\right\}\qquad  \mbox{Define }q_\kappa:=\left\lceil \frac{\gamma_{\kappa}}{\log(2)} \right\rceil+1
 \\
& \leq \kappa\,\sqrt{\frac{q_\kappa}{2}+\frac{3}{4}}+ {1 \over 2^{q_\kappa}} 
\leq \kappa\,\sqrt{\frac{\gamma_\kappa}{2\,\log(2)}+\frac{7}{4}}+ \frac{e^{-\gamma_{\kappa}}}{2} \leq  \kappa\,\sqrt{\frac{\gamma_\kappa}{2\,\log(2)}+\frac{5}{4\,\log(2)}}+ \frac{e^{-\gamma_{\kappa}}}{2}
\\ & \leq \frac{1}{\sqrt{\log(2)}}\left(\frac{\kappa}{2}\,\sqrt{5+2\,\gamma_{\kappa}}+\frac{e^{-\gamma_{\kappa}}}{2} \right)= \frac{1}{\sqrt{\log(2)}}\, {\cal C}_{\ti{IID}}^*(\kappa).
\end{align*}
The first inequality follows from Stirling approximation for $q \geq 1$.

Combining the two upper bounds for the cases $\kappa \ge \sqrt{5}$ and $\kappa < \sqrt{5}$, we conclude that
\[
{\cal C}^{\ti{BN}}_{\ti{IID}}(\kappa) \le \frac{1}{\sqrt{\log(2)}}\, {\cal C}_{\ti{IID}}^*(\kappa)
\qquad \text{for all } \kappa>0,
\]
which shows that the class of Binomial Smoothing policies is an $\alpha$-approximation with $\alpha^{\BN} = 1/\sqrt{\log(2)}$.

Finally, from the proof of \cref{prop:iid_reduction}, we can extend the result to a general demand model by choosing $\alpha=\alpha^{\BN}\,\frac{\psi_{\sup}}{\psi_{\inf}}$.
 \qed

\vspace{0.5cm}

{\sc Proof of \cref{lem:MB}:}  Let us consider the class of policies with $z$-transform 
\[
\varphi(z)=\Big(\frac{1+z}{2}\Big)^q\Big(\alpha+(1-\alpha)\frac{1+z}{2}\Big),
\]
 for some $q \in \mathbb{N}_0$ and $\alpha \in (0,1]$. The invertibility of $\varphi(z)$ follows from noticing that the monomial $\alpha +(1-\alpha)\,{1+z \over 2}$ has a root at $\zeta=(\alpha+1)/(\alpha-1)$ with $|\zeta|\geq 1$ for $\alpha \in (0,1]$. As a result, all the roots of $\varphi(z)$ lie on or outside of  the unit circle which implies that $\varphi$  is invertible. From this we conclude that 
$$\sigma^2_{\ti M}(\varphi)=\varphi^2(0)\,\sigma^2_\eps={1 \over 2^{2q}}\,\left(\alpha +(1-\alpha)\,{1 \over 2}\right)^2\, \sigma^2_\eps ={1 \over 2^{2q+2}}\,\Big(1+\alpha\Big)^2\, \sigma^2_\eps.$$

For the special case of $\varphi_\eta(z)$ in \cref{lem:MB}, with $q=q_\eta$ and $\alpha=\alpha_\eta = \eta \, 2^{q_\eta + 1} - 1$, we get $\sigma^2_{\ti M}(\varphi)=\eta^2\,\sigma^2_\eps$.

Let us turn to the derivation of $\sigma^2_{\ti I}(\varphi)$. 
We will use the same notation as in the proof of \cref{thm:BN_cost_alpha} with 
\[
\varrho_q(z)
:=\frac{z\varphi_q(z)-1}{1-z}
=\frac{1}{1-z}\frac{z(1+z)^q-2^q}{2^q}.
\]

By definition, we can see that
\begin{align}\label{eq:alpha_combine}
\varrho(z)\varrho\bigl(1/z\bigr)
=&
\alpha^2\varrho_q(z)\varrho_q(1/z)
+\alpha(1-\alpha)\varrho_q(z)\varrho_{q+1}(1/z)\\
&
+\alpha(1-\alpha)\varrho_{q+1}(z)\varrho_q(1/z)
+(1-\alpha)^2\varrho_{q+1}(z)\varrho_{q+1}(1/z).\nonumber
\end{align}
Therefore, we can write
\[
\sigma^2_{\ti I}(\varphi)=\frac{1}{2\pi}\int_{-\pi}^{\pi} \varrho(z)\varrho\!\bigl(1/z\bigr)d\theta=\alpha^2\tau_{0,(q,q)}+\alpha(1-\alpha)\big(\tau_{0,(q,q+1)}+\tau_{0,(q+1,q)}\big)+(1-\alpha)^2\tau_{0,(q+1,q+1)},
\]
where $\tau_{0,(q+j_0,q+j_1)}$ denotes the normalized integral of $\varrho_{q+j_0}(z)\varrho_{q+j_1}(1/z)$. 
By an argument in the proof of \cref{thm:BN_cost_alpha}, we have
\begin{equation}\label{eq:qq1}
\tau_{0,(q,q)}
=
\frac{q+2}{2}-\frac{q}{2^{2q+1}}\binom{2q}{q},\quad \tau_{0,(q+1,q+1)}=
\frac{q+3}{2}-\frac{q+1}{2^{2q+3}}\binom{2q+2}{q+1}.
\end{equation}
It remains to calculate the other two terms. 
Following the similar arguments,
\[
\varrho_q(z)\varrho_{q+1}(1/z)
=-\frac{z}{(1-z)^2}\left[\frac{(1+z)^{2q+1}}{2^{2q+1}z^{q+1}}-\frac{z(1+z)^q}{2^q}-\frac{(1+z)^{q+1}}{2^{q+1}z^{q+2}}+1\right].
\]
Therefore, we are now working on $\Phi^{(4)}_{(q,q+1)}(z)=1
$, and 
\[
\Phi^{(1)}_{(q,q+1)}(z)
=
\frac{(1+z)^{2q+1}}{2^{2q}z^{q+1}},
\quad
\Phi^{(2)}_{(q,q+1)}(z)
=
-\frac{z(1+z)^q}{2^q},
\quad
\Phi^{(3)}_{(q,q+1)}(z)
=
-\frac{(1+z)^{q+1}}{2^qz^{q+2}}.
\]
A similar calculation will lead to
\[
\varphi^{(1)}_{(q,q+1)}(z)
=
-\sum_{m=1}^\infty
mz^m
\Bigl[
  \frac{1}{2^{2q+1}}
  \sum_{r=0}^{2q+1}
  \binom{2q+1}{r}z^{r-(q+1)}
\Bigr],\quad
\tau^{(1)}_{0,(q,q+1)}
=
-\frac{1}{2^{2q+1}}
\sum_{r=0}^{q}
\binom{2q+1}{r}(q+1-r).
\]
We also have
\[
\varphi^{(3)}_{(q,q+1)}(z)
=
\sum_{m=1}^\infty
mz^m
\Bigl[
  -\frac{1}{2^{q+1}}
  \sum_{r=0}^{q+1}
  \binom{q+1}{r}z^{r-(q+2)}
\Bigr],\quad
\tau^{(3)}_{0,(q,q+1)}
=
\frac{1}{2^{q+1}}
\sum_{r=0}^{q+1}
\binom{q+1}{r}(q+2-r).
\]
Then we use the same arguments by \eqref{eq:binom_formula} to get
\begin{equation}\label{eq:qq2}
\tau_{0,(q,q+1)}
=
\frac{2q+5}{4}-\frac{2q+1}{2^{2q+2}}\binom{2q}{q}.
\end{equation}
For the term $\tau_{0,(q+1,q)}$, we get the expression 
\[
\varrho_{q+1}(z)\varrho_{q}(1/z)
=-\frac{z}{(1-z)^2}\left[
\frac{(1+z)^{2q+1}}{2^{2q+1}z^q}
-\frac{z(1+z)^{q+1}}{2^{q+1}}
-\frac{(1+z)^q}{2^qz^{q+1}}
+1\right].
\]
Following similar arguments, we have $\tau^{(2)}_{0,(q+1,q)}=
\tau^{(4)}_{0,(q+1,q)}=0$, and 
\[
\varphi^{(1)}_{(q+1,q)}(z)
=
-\sum_{m=1}^\infty
mz^m
\Bigl[
  \frac{1}{2^{2q+1}}
  \sum_{r=0}^{2q+1}
  \binom{2q+1}{r}z^{r-q}
\Bigr],\quad
\tau^{(3)}_{0,(q+1,q)}
=
-\frac{1}{2^{2q+1}}
\sum_{r=0}^{q-1}
\binom{2q+1}{r}(q-r).
\]
Since $\Phi_{(q+1,q)}^{(3)}=\Phi^{(3)}$, we still have 
\[
\tau^{(3)}_{0,(q+1,q)}
=\tau^{(3)}_{0,(q,q)}=
\frac{1}{2^q}
\sum_{r=0}^q \binom{q}{r}(r+1).
\]
Therefore we use the same arguments by \eqref{eq:binom_formula} to get
\begin{equation}\label{eq:qq3}
\tau_{0,(q+1,q)}=\frac{2q+5}{4}-\frac{2q+1}{2^{2q+2}}\binom{2q}{q}.
\end{equation}
With the use of standard identities of binomial coefficients
\[
\binom{2q+2}{q+1}=\frac{2(2q+1)}{q+1}\binom{2q}{q},
\]
we sum up \eqref{eq:qq1}--\eqref{eq:qq3} with weights in \eqref{eq:alpha_combine} and get
\[
\tau_0=\frac{q+3-\alpha}{2}-\frac{(2q+1)-\alpha^2}{2^{2q+2}}\binom{2q}{q}.~~~\Box
\]

\vspace{0.5cm}

{}
 
\vspace{0.5cm}

{\sc Proof of \cref{prop:relerror}:}  Recall from \eqref{eq:relerror} that the relative performance ratio of a policy $j$ under no information sharing is given by  
$${\cal E}^j(\kappa) = \frac{{\cal C}^j(\kappa)}{{\cal C}^*(\kappa)}.$$ 

To show that ${\cal E}^{\ti{MA}}(\kappa)$ and ${\cal E}^{\ti{ES}}(\kappa)$ grow unboundedly as $\kappa \downarrow 0$, we derive lower bounds for each and show that these bounds diverge in the limit. To this end, let us first derive an upper bound for ${\cal C}^*(\kappa)$. From \cref{prop:solnosharing}, recall that ${\cal S}_\kappa$ solves $4\,x^2\,(5-\log(4\,x^2))=\kappa^2$ for $0 \leq x \leq 1/2$ and 
$${\cal C}^*(\kappa)={\kappa \over 2}\,\sqrt{5-\log(4\,{\cal S}^2_\kappa)}+{\cal S}_\kappa ={\kappa^2 \over 4\,{\cal S}_\kappa}+{\cal S}_{\kappa}.$$
It is not hard to see that ${\cal S}_\kappa$ satisfies the following properties: (i) it is monotonically  increasing in $\kappa$, (ii) ${\cal S}_\kappa \leq \kappa/2$, and (iii) $\lim_{\kappa \downarrow 0} {\cal S}^2_\kappa=0$. Also, the function
${\kappa^2 \over 4\,y}+y$ is minimized at $y=\kappa/2$. Thus, from property (ii), we 
have that ${\cal C}^*(\kappa) \leq {\kappa^2 \over 4\,y}+y$ for all $y \leq {\cal S}_\kappa$. We conclude that we can find an upper bound for ${\cal C}^*(\kappa)$ by finding a lower bound on ${\cal S}_\kappa$. \medskip

\begin{lemma}\label{lem:LB} Assume $\kappa \leq 1$. A lower bound for  ${\cal S}_\kappa$ is given by $\kappa^2/6$.
\end{lemma}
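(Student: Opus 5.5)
The plan is to use monotonicity of the defining equation for ${\cal S}_\kappa$, so that the lower bound reduces to checking one inequality at the candidate point $y=\kappa^2/6$. Recall that ${\cal S}_\kappa$ is the root in $[0,1/2]$ of $g(x)=\kappa^2$, where
\[
g(x):=4x^2\bigl(5-\log(4x^2)\bigr),\qquad 0\le x\le \tfrac12 .
\]

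\textbf{Step 1: $g$ is strictly increasing on $(0,1/2]$.} Differentiating gives
\[
g'(x)=8x\bigl(5-\log(4x^2)\bigr)-8x=8x\bigl(4-\log(4x^2)\bigr).
\]
For $0<x\le 1/2$ we have $4x^2\le 1$, so $\log(4x^2)\le 0$ and hence $g'(x)>0$. Together with $g(0^+)=0$, this shows that ${\cal S}_\kappa$ is well defined. It also gives the criterion we need: for any $y\in(0,1/2]$,
\[
{\cal S}_\kappa\ge y \iff g(y)\le g({\cal S}_\kappa)=\kappa^2 .
\]

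\textbf{Step 2: the candidate point is in range.} Since $\kappa\le 1$, the point $y=\kappa^2/6$ satisfies $y\le 1/6<1/2$, so Step 1 applies to it.

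\textbf{Step 3: verify $g(\kappa^2/6)\le\kappa^2$.} Computing directly,
\[
g(\kappa^2/6)=\frac{\kappa^4}{9}\Bigl(5+\log\tfrac{9}{\kappa^4}\Bigr).
\]
Dividing by $\kappa^2$ and setting $t=\kappa^2\in(0,1]$, the inequality becomes
\[
h(t):=t\,\bigl(5+2\log 3-2\log t\bigr)\le 9 .
\]
Its derivative is
\[
h'(t)=3+2\log 3-2\log t,
\]
which is positive on $(0,1]$. Hence $h$ attains its maximum at $t=1$, where
\[
h(1)=5+2\log 3\approx 7.20<9 .
\]
By Step 1, this yields ${\cal S}_\kappa\ge\kappa^2/6$.

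The only step needing care is Step 3. Its potential obstacle is the logarithmic blow-up of $-\log t$ as $t\downarrow0$, but the prefactor $t$ tames it, and the monotonicity of $h$ removes any need for asymptotic estimates. The slack between $7.2$ and $9$ confirms that the constant $6$ is comfortable.
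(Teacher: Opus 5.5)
Your proof is correct, but it takes a different route from the paper's.

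The paper substitutes $y=1/(4x^2)$, which turns the defining equation into $5+\log y=\kappa^2 y$. It then uses $\log y=2\log\sqrt{y}\le 2(\sqrt{y}-1)$ to get the quadratic inequality $\kappa^2 y\le 3+2\sqrt{y}$ for any solution. Solving this gives the explicit bound ${\cal S}_\kappa\ge \kappa^2/\bigl(2(1+\sqrt{1+3\kappa^2})\bigr)$. Only at the end does it invoke $\kappa\le 1$ to weaken this to $\kappa^2/6$.

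You instead prove that $g$ is strictly increasing on $(0,1/2]$. That reduces the claim to the single inequality $g(\kappa^2/6)\le\kappa^2$, which you check through the monotone function $h$.

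Each route buys something different:
\begin{itemize}
\item The paper's argument never needs to identify which root of the equation is meant, since the bound holds for every positive solution. It also yields a sharper intermediate bound, which behaves like $\kappa^2/4$ as $\kappa\downarrow 0$.
\item Your argument verifies the stated constant directly and makes the slack visible: $5+2\log 3\approx 7.2$ versus $9$.
\item As a by-product, your argument shows that ${\cal S}_\kappa$ is well defined and unique in $(0,1/2]$ whenever $\kappa^2\le 5$. The paper takes this for granted.
\end{itemize}
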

{\sc Proof of \cref{lem:LB}:} {\sf The equation $4\,x^2\,(5-\log(4\,x^2))=\kappa^2$ is equivalent to $5+\log(y)=\kappa^2\,y$ for $y=1/(4x^2)$. Thus, a lower bound on $x$ is obtained from an upper bound on $y$. Since $\log(y)=2\,\log(\sqrt{y}) \leq 2\,(\sqrt{y}-1)$, such a lower bound for $y$ is obtained solving $5+2\,(\sqrt{y}-1)=\kappa^2\,y$. We get $y=\kappa^{-4}\,(1+\sqrt{1+3\kappa^2})^2$. As a result, a lower bound for $x$ is equal to $\kappa^2/(2(1+\sqrt{1+3\kappa^2})) \geq \kappa^2/6$. $\Box$}
\vspace{0.2cm}

It follows that an upper bound for ${\cal C}^*(\kappa)$ is given by
$$ {\cal C}^*(\kappa) \leq {\kappa \over 2}\,\sqrt{5-\log(\kappa^4/9)}+{\kappa^2 \over 6}.$$

Let us now obtain lower bounds for ${\cal C}^{\ti{MA}}(\kappa)$ and  ${\cal C}^{\ti{ES}}(\kappa)$. 
From \eqref{eq:CostMA}, we have that 
\begin{align*}
{\cal C}^{\ti{MA}}(\kappa)&=\min_{N \in \mathbb{N}_0} \,\left\{\kappa\,\sqrt{\frac{(N+2)\,(2\,N+3)}{6\,(N+1)}} + \frac{1}{N+1} \right\} 
 \geq \min_{N \in \mathbb{N}_0} \,\left\{\kappa\,\sqrt{\frac{(N+1)\,(2\,(N+1))}{6\,(N+1)}} + \frac{1}{N+1} \right\} \\
 & \geq  \min_{x \in \mathbb{R}_+} \,\left\{\kappa\,\sqrt{\frac{x}{3}} + \frac{1}{x} \right\} =  \left(3\,\kappa \over 2\right)^{2 \over 3}.
\end{align*}
Combining  the upper bound for ${\cal C}^*(\kappa)$ and the lower bound for ${\cal C}^{\ti{MA}}(\kappa)$, we obtain
$$\lim_{\kappa \downarrow 0} {\cal E}^{\ti{MA}}(\kappa)  \geq \lim_{\kappa \downarrow 0} {\left(3\,\kappa \over 2\right)^{2 \over 3} \over {\kappa \over 2}\,\sqrt{5-\log(\kappa^4/9)}+{\kappa^2 \over 6}}=\infty.$$

Using a similar argument, let us lower bound the cost under an exponential smoothing policy. From \eqref{eq:CostES}, we have that
\begin{align*}
{\cal C}^{\ti{ES}}(\kappa)&=\min_{0 \leq \theta <1} \,\left\{{\kappa \over \sqrt{1-\theta^2}}+ 1-\theta \right\} = \left(\kappa^2 \over \theta_{\kappa}\right)^{1 \over 3}
+1-\theta_\kappa \geq \left(\kappa^2 \over \theta_{\kappa}\right)^{1 \over 3},
\end{align*}
where $\theta_\kappa$ is the optimal value of the smoothing parameter given $\kappa$, that is, the solution to the first-order optimality condition $\kappa^2\,\theta^2 = (1 - \theta^2)^3$ in $[0,1]$.  It is easy to see that $\lim_{\kappa \downarrow 0} \theta_{\kappa} =1$. It follows that  
$$\lim_{\kappa \downarrow 0} {\cal E}^{\ti{ES}}(\kappa)  \geq \lim_{\kappa \downarrow 0} {\left(\kappa^2 \over \theta_{\kappa}\right)^{1 \over 3} \over{\kappa \over 2}\,\sqrt{5-\log(\kappa^4/9)}+{\kappa^2 \over 6}}=\infty.$$
This completes the proof that the relative performance of MA and ES policies grows unboundedly as $\kappa \downarrow 0$. \qed
\vspace{0.5cm}

{\sc Proof of \cref{prop:BN_smallkappa_opt}:} From \cref{prop:myopic} and \cref{thm:BN_cost_alpha}, for any $\varphi \in \Ad_q$ we have 
\[{\cal C}(\varphi;\kappa) \geq |\psi(0)|\, (\kappa+2^{-q})\qquad \mbox{and}\qquad {\cal C}(\varphi_q^{\BN};\kappa) \leq \psi_{\sup}\,\left(\frac{q+2}{2}\right)\,\kappa+|\psi(0)|\,2^{-q}.\]
It follows that
\begin{align*}
\frac{{\cal C}(\varphi;\kappa)}{{\cal C}(\varphi_q^{\BN};\kappa)}  & \geq \frac{|\psi(0)|\, (\kappa+2^{-q})}{\psi_{\sup}\,\left(\frac{q+2}{2}\right)\,\kappa+|\psi(0)|\,2^{-q}}= 1- \Big(\frac{\psi_{\sup}\,\left(\frac{q+2}{2}\right)-|\psi(0)|}{\psi_{\sup}\,\left(\frac{q+2}{2}\right)\,\kappa+|\psi(0)|\,2^{-q}}\Big)\,\kappa \\ & \geq 1- \Big(\frac{\psi_{\sup}}{|\psi(0)|}\Big)\,\left(\frac{q+2}{2}\right)\,2^q\,\kappa=1-O\!\left(q\,2^q\right)\,\kappa,
\end{align*}
which establishes the stated bound.  \qed

%% file: Management_Science/Appendix_Cost_Criterion_MS.tex
\section{Supply-Chain Performance and Cost Criterion}\label{App:CostCriterion}
\setcounter{equation}{0}
\renewcommand{\theequation}{B\arabic{equation}}

In this appendix, we formulate the retailer's inventory problem and show that, under our inventory-cost structure, it is equivalent to the linear cost criterion \eqref{eq:cost_criterion}. We first state the retailer's Stackelberg problem in payoff terms, then derive the retailer's and manufacturer's long-run cost representations, and finally show how the payoff formulation reduces to the cost-minimization problem used in the main text.\smallskip

{\sf The Retailer's Inventory Problem:}
Suppose the retailer chooses a wholesale price $w\in\mathbb{R}_+$ together with a replenishment policy $\varphi\in\Ad$. Since admissible policies satisfy $\sum_{n\geq 0}\varphi_n=1$, average orders equal average demand; let $d:=\e[D_t]=\e[O_t]$. The retailer's expected long-run average payoff is
\begin{align*}
\Pi^{\ti R}(\varphi,w)
&= \liminf_{T \to \infty} \frac{1}{T} \,\e\left[\sum_{t=1}^T p\,D_t-w\,O_t- h^{\ti R} (I_t)^+ - b^{\ti R} (I_t)^-\right] = (p-w)\,d-{\cal C}^{\ti R}(\varphi),
\end{align*}
where ${\cal C}^{\ti R}(\varphi)$ is the retailer's long-run average inventory cost. Similarly, the manufacturer's expected long-run average payoff, evaluated under its best-response state-dependent base-stock policy, is
\begin{align*}
\Pi^{\ti M}(\varphi,w)
&= \liminf_{T \to \infty} \frac{1}{T} \,\e\left[\sum_{t=1}^T (w-c)\,O_t- h^{\ti M}(S_t-O_{t+1})^+ - \Delta c\,(S_t-O_{t+1})^-\right] = (w-c)\,d-{\cal C}^{\ti M}(\varphi),
\end{align*}
where ${\cal C}^{\ti M}(\varphi)$ is the manufacturer's long-run average inventory holding and expediting cost.\smallskip

We formulate the retailer's decision problem as a Stackelberg problem in which the retailer chooses $(w,\varphi)$ to maximize its own payoff while guaranteeing the manufacturer's participation:
\begin{equation}\label{eq:optim}
\tag{Retailer's Inventory Problem}
\sup_{(w,\varphi)\in\mathbb{R}_+\times\Ad}\;\Pi^{\ti R}(\varphi,w)
\qquad \mbox{subject to}\qquad
\Pi^{\ti M}(\varphi,w) \geq \pi^{\ti M}.
\end{equation}
This formulation captures the central economic tension of the model: by changing $\varphi$, the retailer affects not only its own inventory performance, but also the predictability and cost of the order stream faced by the manufacturer.\medskip

{\sf Cost Representations:}
We now express ${\cal C}^{\ti R}(\varphi)$ and ${\cal C}^{\ti M}(\varphi)$ in terms of the two performance measures used in the main text. The retailer's long-run average expected per-period inventory cost is
\[
{\cal C}^{\ti R}(\varphi)= \limsup_{T \to \infty} \frac{1}{T}\,\e\left[\sum_{t=1}^T h^{\ti R}(I_t)^+ + b^{\ti R}(I_t)^-\right],
\]
where $h^{\ti R}$ and $b^{\ti R}$ are the retailer's per-unit per-period holding and backorder costs, respectively. By combining \eqref{eq:def_demand}, \eqref{eq:invR}, and \eqref{eq:retailerorder}, the retailer's inventory process $\{I_t\}$ converges in distribution to a normal random variable $I_\infty$ with mean $\mu_{\ti I}(\varphi)=\lim_{t \to \infty} \e[I_t]$ and variance $\sigma_{\ti I}^2(\varphi)=\lim_{t\to\infty}\var[I_t]$. Hence,
\[
{\cal C}^{\ti R}(\varphi)= \sigma_{\ti I}(\varphi)\left[h^{\ti R}\,{\cal L}\!\left(-\frac{\mu_{\ti I}(\varphi)}{\sigma_{\ti I}(\varphi)}\right)+b^{\ti R}\,{\cal L}\!\left(\frac{\mu_{\ti I}(\varphi)}{\sigma_{\ti I}(\varphi)}\right)\right],
\]
where ${\cal L}(z)=\phi(z)-z(1-\Phi(z))$, with $\phi$ and $\Phi$ denoting the pdf and cdf of the standard normal distribution. The mean $\mu_{\ti I}(\varphi)$ depends on the retailer's inventory-position choice and can therefore be optimized. Evaluating the retailer's cost at this optimal value yields
\begin{equation}\label{eq:Cost_Retailer2}
{\cal C}^{\ti R}(\varphi) = K^{\ti R}\,\sigma_{\ti I}(\varphi),
\quad \text{where} \quad
K^{\ti R} := (h^{\ti R} + b^{\ti R})\,{\cal L}\left(\Phi^{-1}\left(\frac{b^{\ti R}}{h^{\ti R} + b^{\ti R}}\right)\right) + h^{\ti R}\,\Phi^{-1}\left(\frac{b^{\ti R}}{h^{\ti R} + b^{\ti R}}\right).
\end{equation}

Turning to the manufacturer, its long-run average expected per-period cost under its state-dependent base-stock policy $\{S_t\}$ is
\[
{\cal C}^{\ti M}(\varphi)= \limsup_{T \to \infty} \frac{1}{T}\,\e\left[\sum_{t=1}^T h^{\ti M}(S_t-O_{t+1})^+ + \Delta c\,(S_t-O_{t+1})^-\right],
\]
where $h^{\ti M}$ is the per-unit per-period holding cost and $\Delta c$ is the per-unit expediting cost. As discussed in \cref{sec:Model}, the manufacturer chooses $S_t$ to minimize
\[
\e\big[h^{\ti M}(S_t-O_{t+1})^+ + \Delta c\,(S_t-O_{t+1})^- \mid {\cal F}^{\ti M}_t\big],
\]
where ${\cal F}^{\ti M}_t=\sigma(O_\tau\colon \tau \le t)$ denotes the manufacturer's information at period $t$. This gives
\[
S_t(\varphi)
= m_t(\varphi) + \sigma_{\ti M}(\varphi)\,\zeta^{\ti M},
\quad \text{where } 
\zeta^{\ti M} := \Phi^{-1}\!\left(\frac{\Delta c}{h^{\ti M}+\Delta c}\right),
\]
and where $m_t(\varphi)$ and $\sigma^2_{\ti M}(\varphi)$ are the manufacturer's one-step-ahead mean forecast and mean squared forecast error (MSFE), respectively. Substituting the optimal base-stock level $S_t(\varphi)$ into the manufacturer's objective yields
\begin{equation}\label{eq:manufacturercost}
{\cal C}^{\ti M}(\varphi)
= K^{\ti M}\,\sigma_{\ti M}(\varphi),
\qquad
\text{where }
K^{\ti M} := h^{\ti M}\,\zeta^{\ti M} + (h^{\ti M}+\Delta c)\,{\cal L}(\zeta^{\ti M}).
\end{equation}
Thus, the retailer's cost is proportional to the volatility of its net inventory, while the manufacturer's cost is proportional to the one-step-ahead root mean squared forecast error of the retailer's orders.\medskip

 {\sf Equivalence to the Linear Cost Criterion:}
We now show that the retailer's inventory problem is equivalent to minimizing the cost criterion used in the main text. For any fixed policy $\varphi\in\Ad$, the retailer optimally chooses the smallest wholesale price that satisfies the manufacturer's participation constraint. Hence, the participation constraint binds:
\[
\Pi^{\ti M}(\varphi,w)=\pi^{\ti M}.
\]
Using \eqref{eq:manufacturercost}, the corresponding wholesale price is
\[
w(\varphi)
=
c+\frac{K^{\ti M}\,\sigma_{\ti M}(\varphi)+\pi^{\ti M}}{d}.
\]
Substituting this expression into the retailer's payoff gives
\begin{align*}
\Pi^{\ti R}(\varphi,w(\varphi))
&= (p-c)d-\pi^{\ti M}
   -K^{\ti R}\,\sigma_{\ti I}(\varphi)
   -K^{\ti M}\,\sigma_{\ti M}(\varphi).
\end{align*}
The first two terms are independent of $\varphi$. Therefore, maximizing the retailer's payoff over replenishment policies is equivalent to minimizing
\[
K^{\ti R}\,\sigma_{\ti I}(\varphi)
+
K^{\ti M}\,\sigma_{\ti M}(\varphi).
\]
After normalizing by $K^{\ti M}>0$, the retailer's problem reduces exactly to
\[
{\cal C}(\varphi;\kappa)=\kappa\,\sigma_{\ti I}(\varphi)+\sigma_{\ti M}(\varphi),
\qquad
\text{where}
\qquad
\kappa=\frac{K^{\ti R}}{K^{\ti M}}.
\]
This is the cost criterion in \eqref{eq:cost_criterion}. Thus, the linear objective used in the paper is not only a reduced-form representation of supply-chain performance; it is the equivalent policy-selection problem of a strategic retailer that internalizes the manufacturer's participation constraint through the wholesale price.\smallskip

The same criterion also admits a centralized interpretation. Combining \eqref{eq:Cost_Retailer2} and \eqref{eq:manufacturercost}, the cumulative long-run expected per-period cost of the supply chain is
\begin{align*}
{\cal C}^{\ti C}(\varphi)
&= {\cal C}^{\ti R}(\varphi)+{\cal C}^{\ti M}(\varphi) \\
&= K^{\ti R}\,\sigma_{\ti I}(\varphi)+K^{\ti M}\,\sigma_{\ti M}(\varphi).
\end{align*}
Accordingly, for $\kappa=K^{\ti R}/K^{\ti M}$, minimizing ${\cal C}(\varphi;\kappa)$ is equivalent to minimizing total supply-chain inventory-related cost. More generally, allowing $\kappa$ to vary generates a family of objectives that captures different planning perspectives, including the limiting cases $\kappa\downarrow 0$ and $\kappa\uparrow\infty$, which correspond to minimizing exclusively the manufacturer's and retailer's costs, respectively. In this sense, the cost criterion traces the full spectrum of trade-offs between retailer inventory volatility, manufacturer forecastability, and overall supply-chain performance.\smallskip

\medskip

\begin{remark}[Competitive wholesale market]\label{rem:competitive_wholesale} {\sf
If the retailer cannot set the wholesale price---for instance, because $w$ is determined competitively in an upstream market---then the retailer takes $w$ as exogenous and chooses only its replenishment policy $\varphi\in\Ad$. In that case, the manufacturer's participation constraint restricts the set of feasible policies through
\[
(w-c)d-K^{\ti M}\sigma_{\ti M}(\varphi)\geq \pi^{\ti M}.
\]
Introducing a Lagrange multiplier for this constraint again produces an objective with a weighted trade-off between the retailer's inventory term $\sigma_{\ti I}(\varphi)$ and the manufacturer's forecast-error term $\sigma_{\ti M}(\varphi)$. Thus, even when the wholesale price is exogenous, the relaxed retailer problem has the same linear structure as the cost-minimization formulation above. {\scriptsize$\blacksquare$}}
\end{remark}

%% file: Management_Science/Appendix_Benchmark_Policies.tex
\section{Benchmark Policies}\label{app:benchmark_details}
\setcounter{equation}{0}
\renewcommand{\theequation}{C\arabic{equation}}

This appendix provides the analytical details underlying the benchmark comparison in \cref{sec:benchmark}. In particular, we report the values of the manufacturer's root MSFE, \(\sigma_{\ti M}\), the retailer's inventory volatility, \(\sigma_{\ti I}\), and the optimized benchmark costs for the Simple Moving Average (MA) and Exponential Smoothing (ES) policy classes.

Throughout this appendix, demand is i.i.d., so \(\psi(z)\equiv\mathds{1}\), and the total supply-chain cost of a policy \(\varphi\) is
\[
{\cal C}_{\iid}(\varphi;\kappa)
=
\kappa\,\sigma_{\ti I}(\varphi;\mathds{1})
+
\sigma_{\ti M}(\varphi;\mathds{1}).
\]

\medskip

{\bf Simple Moving Average (MA).}
Recall that the MA class is given by
\[
\Ad^{\ti{MA}}
:=
\Bigg\{
\varphi_{\tim N}^{\ti{MA}} \in \Ad
\;\colon\;
\varphi_{\tim N}^{\ti{MA}}(z)
=
\frac{1}{N+1}\sum_{n=0}^{N}z^n
=
\frac{1-z^{N+1}}{(N+1)(1-z)},
\quad
N\in\mathbb{N}_0
\Bigg\}.
\]
The coefficients of \(\varphi_{\tim N}^{\ti{MA}}\) are uniform over the finite window \(0,\ldots,N\). Thus, a unit demand shock is transmitted to the manufacturer gradually, with weight \(1/(N+1)\) in each of the next \(N+1\) periods.

Simple moving-average policies are invertible in the sense of \cref{dfn:invertible}. Therefore, by \cref{eq:invertible_orders}, the manufacturer's root MSFE is
\[
\sigma_{\ti M}(\varphi_{\tim N}^{\ti{MA}};\mathds{1})
=
\frac{1}{N+1}.
\]
The retailer's inventory volatility is
\[
\sigma_{\ti I}(\varphi_{\tim N}^{\ti{MA}};\mathds{1})
=
\sqrt{
\frac{(N+2)(2N+3)}{6(N+1)}
}.
\]
Hence, for a fixed \(N\), the total cost of the MA policy is
\[
{\cal C}_{\iid}(\varphi_{\tim N}^{\ti{MA}};\kappa)
=
\kappa
\sqrt{
\frac{(N+2)(2N+3)}{6(N+1)}
}
+
\frac{1}{N+1}.
\]

For each value of \(\kappa\), the best MA policy is obtained by minimizing over \(N\in\mathbb{N}_0\):
\begin{equation}\label{eq:CostMA}
{\cal C}_{\iid}^{\ti{MA}}(\kappa)
:=
\min_{N\in\mathbb{N}_0}
\left\{
\kappa
\sqrt{
\frac{(N+2)(2N+3)}{6(N+1)}
}
+
\frac{1}{N+1}
\right\}.
\end{equation}
This expression makes explicit the trade-off induced by the smoothing window \(N\). Increasing \(N\) reduces the manufacturer's root MSFE, since
\[
\sigma_{\ti M}(\varphi_{\tim N}^{\ti{MA}};\mathds{1})
=
\frac{1}{N+1},
\]
but increases the retailer's inventory volatility,
\[
\sigma_{\ti I}(\varphi_{\tim N}^{\ti{MA}};\mathds{1})
=
\sqrt{
\frac{(N+2)(2N+3)}{6(N+1)}
}.
\]
Thus, stronger smoothing improves upstream forecastability at the expense of downstream inventory stability.

\medskip

{\bf Exponential Smoothing (ES).}
Recall that the ES class is given by
\[
\Ad^{\ti{ES}}
:=
\Bigg\{
\varphi_{\tim \theta}^{\ti{ES}} \in \Ad
\;\colon\;
\varphi_{\tim \theta}^{\ti{ES}}(z)
=
(1-\theta)\sum_{n=0}^{\infty}(\theta z)^n
=
\frac{1-\theta}{1-\theta z},
\quad
\theta\in[0,1)
\Bigg\}.
\]
The ES impulse response has infinite support and geometrically decaying coefficients. A unit demand shock is transmitted with weight \((1-\theta)\theta^n\) after \(n\) periods.

Exponential smoothing policies are also invertible. By \cref{eq:invertible_orders}, the manufacturer's root MSFE is
\[
\sigma_{\ti M}(\varphi_{\tim \theta}^{\ti{ES}};\mathds{1})
=
1-\theta.
\]
The retailer's inventory volatility is
\[
\sigma_{\ti I}(\varphi_{\tim \theta}^{\ti{ES}};\mathds{1})
=
\frac{1}{\sqrt{1-\theta^2}}.
\]
Therefore, for a fixed \(\theta\), the total cost of the ES policy is
\[
{\cal C}_{\iid}(\varphi_{\tim \theta}^{\ti{ES}};\kappa)
=
\frac{\kappa}{\sqrt{1-\theta^2}}
+
1-\theta.
\]

For each value of \(\kappa\), the best ES policy is obtained by minimizing over \(\theta\in[0,1)\):
\begin{equation}\label{eq:CostES}
{\cal C}_{\iid}^{\ti{ES}}(\kappa)
:=
\min_{\theta\in[0,1)}
\left\{
\frac{\kappa}{\sqrt{1-\theta^2}}
+
1-\theta
\right\}.
\end{equation}
For an interior optimizer \(\theta^*\in[0,1)\), the first-order condition is
\[
\frac{\kappa \theta^*}{(1-(\theta^*)^2)^{3/2}}=1,
\]
or equivalently,
\[
\kappa^2(\theta^*)^2
=
(1-(\theta^*)^2)^3.
\]
As in the MA class, larger values of the smoothing parameter improve the manufacturer's forecastability but increase the retailer's inventory volatility:
\[
\sigma_{\ti M}(\varphi_{\tim \theta}^{\ti{ES}};\mathds{1})
=
1-\theta
\quad\text{decreases in }\theta,
\]
whereas
\[
\sigma_{\ti I}(\varphi_{\tim \theta}^{\ti{ES}};\mathds{1})
=
\frac{1}{\sqrt{1-\theta^2}}
\quad\text{increases in }\theta.
\]

\medskip

{\bf Myopic Policy as a Special Case.}
Under i.i.d. demand, both benchmark classes include the myopic policy. Setting \(N=0\) in the MA class gives
\[
\varphi^{\ti{MA}}_{0}(z)
=
\mathds{1},
\]
and setting \(\theta=0\) in the ES class gives
\[
\varphi^{\ti{ES}}_{0}(z)
=
\mathds{1}.
\]
Thus,
\[
\varphi^{\ti{MA}}_{0}(z)
=
\varphi^{\ti{ES}}_{0}(z)
=
\varphi^{\ti{MP}}(z)
=
\mathds{1},
\]
so that \(O_t=D_t\). By \cref{prop:myopic}, as \(\kappa\uparrow\infty\), the myopic policy is optimal. Consequently, both MA and ES inherit asymptotic optimality in the large-\(\kappa\) regime through their myopic special cases.

\medskip

{\bf Relative Performance.}
For a policy class \(\widehat{\Ad}\), define its relative performance by
\begin{equation}\label{eq:relerror_appen}
{\cal E}_{\iid}^{\widehat{\Ad}}(\kappa)
:=
\inf_{\varphi\in\widehat{\Ad}}
\frac{
{\cal C}_{\iid}(\varphi;\kappa)
}{
{\cal C}_{\iid}^{*}(\kappa)
}.
\end{equation}
For the MA and ES benchmarks, this corresponds to using the optimized costs in \eqref{eq:CostMA} and \eqref{eq:CostES}, respectively:
\[
{\cal E}_{\iid}^{\ti{MA}}(\kappa)
=
\frac{
{\cal C}_{\iid}^{\ti{MA}}(\kappa)
}{
{\cal C}_{\iid}^{*}(\kappa)
},
\qquad
{\cal E}_{\iid}^{\ti{ES}}(\kappa)
=
\frac{
{\cal C}_{\iid}^{\ti{ES}}(\kappa)
}{
{\cal C}_{\iid}^{*}(\kappa)
}.
\]

The expressions above are used to generate the relative-performance values reported in \cref{table:suboptbenchmarks_noinfonew}. They also clarify why MA and ES do not provide uniform approximation guarantees. Even after optimizing over their smoothing parameters, their relative performance deteriorates as \(\kappa\downarrow 0\):
\[
\lim_{\kappa\downarrow 0}
{\cal E}_{\iid}^{\ti{MA}}(\kappa)
=
\lim_{\kappa\downarrow 0}
{\cal E}_{\iid}^{\ti{ES}}(\kappa)
=
\infty.
\]
Consequently, neither the simple moving-average family nor the exponential-smoothing family constitutes an \(\alpha\)-approximation class of inventory policies for any finite \(\alpha\).

The failure occurs in the small-\(\kappa\) regime, where the supply-chain cost is dominated by the manufacturer's forecasting component. In this regime, generic smoothing rules such as MA and ES do not preserve enough useful demand information in the order stream. By contrast, the BN class uses a calibrated delay that enables the manufacturer to recover information more effectively through optimal forecasting, which explains its stronger performance in \cref{table:suboptbenchmarks_noinfonew}.